\documentclass{article}[11pt]
\usepackage{float}
\usepackage{graphicx} %
\usepackage{amsmath, amssymb, amsthm}
\usepackage{thm-restate}
\usepackage{algorithm}
\usepackage[noend]{algpseudocode}
\usepackage{tikz}
\usetikzlibrary{decorations.pathreplacing,calc}
\usepackage{appendix}
\usepackage{fullpage}
\usepackage{enumitem}
\usepackage{cleveref}
\usepackage{epigraph}

\newtheorem{lemma}{Lemma}
\newtheorem{theorem}{Theorem}
\newtheorem{definition}{Definition}

\newcommand{\E}{\mathbb{E}}

\usepackage{todonotes}
\definecolor{xred}{HTML}{F19F9E}
\definecolor{xblue}{HTML}{9ECFF1}

\usepackage[most]{tcolorbox} %

\tcbset{
  doubleframe/.style={
    enhanced,
    breakable,
    colback=white,        %
    colframe=black,       %
    boxrule=0.6pt,        %
    borderline={0.6pt}{2pt}{black}, %
    sharp corners,
    left=10pt,right=10pt,top=8pt,bottom=8pt,
    before skip=8pt, after skip=8pt
  }
}

\newtcolorbox{boxedremark}[1][]{doubleframe,title={#1}}

\newtcbtheorem[number within=section]{rulebox}{Rule}%
{doubleframe,fonttitle=\bfseries}{rule}

\title{Maintaining Random Assignments under Adversarial Dynamics}
\author{Bernhard Haeupler\thanks{Partially funded by
the Ministry of Education and Science of Bulgaria’s support for INSAIT as part of the Bulgarian National Roadmap
for Research Infrastructure and through the European Research Council (ERC) under the European Union’s Horizon
2020 research and innovation program (ERC grant agreement 949272).}\\
\normalsize INSAIT, Sofia University “St. Kliment Ohridski” \& ETH Zurich
\and
Anton Paramonov\\
\normalsize ETH Zurich\\
}

\date{}

\begin{document}
\maketitle

\begin{abstract}
We study and further develop powerful general-purpose schemes to maintain random assignments under adversarial dynamic changes.  
\smallskip

The goal is to maintain assignments that are (approximately) distributed similarly as a completely fresh resampling of all assignments after each change, while doing only a few resamples per change. This becomes particularly interesting and challenging when dynamics are controlled by an adaptive adversary.%
\smallskip

Our work builds on and further develops the proactive resampling technique [Bhattacharya, Saranurak, and Sukprasert ESA'22]. We identify a new ``temporal selection'' attack that adaptive adversaries can use to cause biases, even against proactive resampling. We propose a new ''temporal aggregation'' principle that algorithms should follow to counteract these biases, and present two powerful new resampling schemes based on this principle. 

\smallskip

We give various applications of our new methods. The main one in maintaining proper coloring of the graph under adaptive adversarial modifications: we maintain $O(\Delta)$ coloring for general graphs with maximum degree $\Delta$ and $O(\frac{\Delta}{\ln \Delta})$ coloring for triangle free graphs, both with sublinear in the number of vertices average work per modification. Other applications include efficiently maintaining random walks in dynamically changing graphs. 
\end{abstract}

\section{Introduction}

Randomness is a remarkably effective organizing principle across many computational and algorithmic settings. It naturally balances load, prevents concentration of adversarial influence, and ensures that local irregularities do not aggregate into global bias. Yet, in many real-world or algorithmic systems, the underlying structure changes continuously—participants join or leave, resources become unavailable, or edges in a network appear and disappear. The challenge, then, is not merely to generate a good random assignment once, but to \emph{maintain} such assignments as the system evolves.  

The central question of this work is:
\begin{quote}
\emph{How can we maintain random-like assignments under ongoing, potentially adversarial dynamics, while making only a few updates per change?}
\end{quote}

We begin by illustrating why random assignments are beneficial through three settings. These settings are of increasing generality and complexity and will serve as running examples throughout this paper.

\textbf{Participant–Group Assignment.}
Consider a setting with many participants that must be divided into several groups. A good example is a distributed peer-to-peer system in which groups are used to distribute work or storage (of a distributed hash table) among participants. A fraction of these participants may be malicious, yet their identities are not known. In this setting, assigning each participant to a group uniformly at random offers several advantages. First, it ensures that group sizes remain balanced. Second, if the overall fraction of malicious participants is less than one half, then with high probability, each group will contain a majority of honest participants. Finally, random assignments satisfy many natural fairness notions (e.g., no participant can choose its group) and minimize the chances of small groups of colluding participants being placed in the same group.

\textbf{Job–Machine Assignment.}
Consider a system consisting of jobs and machines, where each job can be processed by several routines eligible for this job, and each routine is performed by a subset of machines. 
A natural randomized strategy to achieve effective load balancing without requiring centralized coordination is to assign each job to a uniformly random routine. 

\textbf{Node–Random Walk Assignment.}
Random walks and branching random walks have a wide variety of applications. The famous PageRank \cite{bahmani2010fast} measure of importance corresponds to expected times to be visited when running a random walk of geometric length from every node in a network. Random walks are also used in popular ML features for node similarity in networks, including node2vec \cite{grover2016node2vec}, using the idea that branching random walks started from similar nodes encounter the same nodes or regions of a graph more often than dissimilar ones. Overall, assigning to each node in a network one or multiple (branching) random walks and using the dispersion, concentration, mixing, and other properties of random walks to estimate, define, and measure properties of nodes within a network has been a powerful and widely used paradigm across sciences. Random walks can also be used to distribute traffic evenly and keep network congestion low, as for example done by many oblivious routing schemes~\cite{broder1992existence, broder1997static, akyildiz2000new, tian2005randomwalk, spyropoulos2005spray, servetto2002constrained}.

In each of these settings, pure randomization performs well in a static world. But when the system evolves dynamically, e.g., participants join and leave, or edges are being deleted, naive strategies fail and assignments lose their random properties.

A natural countermeasure against this is to rerandomize the entire system after each modification, thereby restoring a perfectly random configuration. Yet this comes with an enormous computational overhead—one that might not even be necessary: perhaps similar distributions can be achieved with only limited, carefully chosen updates.

The difficulty of maintaining a (close to) random assignment increases even further when system updates are not arbitrary but chosen adversarially in response to the algorithm's previous actions. An adaptive adversary might, for instance, repeatedly join and leave the system until it lands in a particular group, or selectively delete edges to steer random walks toward a specific region of the graph. These manipulations can create highly biased, undesirable distributions.

This paper focuses on precisely this problem: how to preserve strong distribution guarantees efficiently even under adaptive, potentially malicious dynamics. We identify limitations and vulnerabilities of existing methods and introduce new algorithmic tools that maintain provable distribution guarantees with low resampling cost.

\section{Related Work}

We now formally define the two settings - the Participant-Group Assignment and Job-Machine Assignment. We provide especially in-depth formulations so as to present concrete, formal examples of the results that can be obtained. 

\subsection{Participant-Group Setting: Problem-specific Cuckoo/Rotate Rules}

We start by looking more formally at the participant-group setting in which a large number of participants must be continuously assigned to $g$ groups. Participants are indistinguishable to the system: when they join, leave, or rejoin, the assignment algorithm cannot tell whether it is interacting with a new or returning individual. The system evolves dynamically as participants repeatedly join and leave, and upon each join, the newcomer is assigned uniformly at random to one of the groups. If the schedule of joins and departures is \emph{oblivious}, that is, independent of the random outcomes of the assignment process, then the configuration at any fixed time is distributed uniformly and satisfies all previously mentioned desirable fairness, robustness, and load-balancing properties. However, if some participants are malicious, they can leave and immediately rejoin the system, thereby triggering an \emph{adversarial resample}, and these resamples can be scheduled \emph{adaptively}: a participant may decide whether to trigger a resample after observing its current assignment or state of the system. This adaptivity enables malicious participants to severely bias the resulting assignment in what is commonly called a \emph{join-leave attack}. 

Theoretically sound methods for preventing such attacks were proposed by Scheideler and Awerbuch in \cite{awerbuch2006towards,scheideler2005spread}. We paraphrase their ideas and results for the context of the group assignment setting, which we define formally next:

\begin{boxedremark}[Participant-Group Assignment Setting]
There are $n$ participants, $g$ groups, and a $\beta$ fraction of malicious participants. Initially, every participant gets assigned a random group. At every time step, the adversary can adaptively select one malicious participant for an \emph{adversarial resample}. 
\end{boxedremark}

It is easy to see that in this setting an adaptive adversary can force malicious participants to become a majority in some fixed group at some fixed time, even for $\beta$ as small as $1/g$. For this, every malicious participant simply forces an adversarial resample through a leave-rejoin until they are assigned to a chosen fixed group. We generally call this type of attack a \emph{resample-until-successful} attack.   

In \cite{awerbuch2006towards} and \cite{scheideler2005spread}, two rules were proposed to circumvent such attacks. The {\bfseries $k$-cuckoo rule} assigns a joining participant a random group and then uniformly randomly selects $k$ participants assigned to this group and reassigns each of these $k$ participants to a new random group. The {\bfseries $k$-rotation rule} takes a newly joined participant, selects a random existing participant, and swaps them. The replaced existing participant is then again swapped with another existing participant. This is done $k-1$ times, and finally, the participant replaced last is assigned to a random group. 

The overhead for both rules is $k$, i.e., for any adversarially resampling, only $k$ extra resamples are performed. Still, as long as the fraction of malicious participants is less than $50\%$
a constant $k$ suffices to ensure that the probability that the adversary can achieve a majority of malicious participants in some group is exponentially small in the average group size. In particular, \cite{awerbuch2006towards,scheideler2005spread} proves that choosing $k = \Theta(\frac{1}{0.5 - \beta})$ suffices for both rules. (see also Appendix \ref{sec:appendix:cuckoo})

Achieving almost the same exponential Chernoff-type concentration bounds for the probability of malicious participants becoming a majority in some group while only doing constantly many resamples per join is a powerful demonstration of what careful injection of fresh randomness into an adversarially controlled dynamic system can achieve. 

From the point of view of this paper, the above rules and results have the drawback that they are quite \emph{problem-specific}. It is not clear at all how to transfer and apply either of these rules in other contexts, such as maintaining random walks in a dynamically changing graph. Furthermore, both rules are also problem-specific in the group-assignment setting itself in that they do not try to generate a random assignment that can not be biased by the adversary, but instead specifically aim to avoid majorities of malicious nodes. In particular, adversarial dynamics can easily still generate many severely non-random biases. For example, via the same \emph{resample-until-successful} attack a malicious participant can easily choose its own group and more generally an adaptive adversary can easily ensure that some specific $b$ malicious participants rendezvous at a specific time at a specific group as long as $b$ is smaller than a $\frac{1}{k}$ fraction of the average group size. Many fairness and collusion properties, other than the majority safety it is designed to ensure, are not even approximately maintained by the rotation or cuckoo rules, despite holding for truly random assignments with exponentially strong probabilities.

\subsection{Proactive Resampling and the Job-Machine Setting}
The beautifully simple \emph{proactive resampling rule} was first introduced in the paper by Bernstein et al. \cite{bernstein2022fully} and further developed by Bhattacharya et al. \cite{bhattacharya2022simple}.

\begin{boxedremark}[Proactive Resampling Rule]
If an object is (adversarially) sampled at time $t$, also proactively resample it at times $t+2^i$ for all $i\geq0$.
\end{boxedremark}

Generic as it is, this rule can be readily applied in a variety of contexts. For instance, when the participant joins at time $t$, assign them to a uniformly random group, and then reassign them to a new group at times $t + 2^i$. Or if the adversary deletes an edge through which a random walk was passing at time $t$, resample this random walk at time $t$, and then further at times $t + 2^i$.

Furthermore, this rule is lightweight: for a single adversarial sample, this mechanism triggers only logarithmically many additional resamples. 

But most importantly, even without any knowledge of the underlying problem structure, proactive resampling produces assignments with very strong (pseudo-)random properties. 
For example, when applied to the participant-group setting, it ensures that the probability that a fixed participant is assigned to a fixed group at a fixed time is at most $\frac{O(\log n)}{g}$. Moreover, the probability that a given subset of participants is assigned to the same group at a given time is exponentially small in the size of the subset. In general, expectations change by merely polylogarithmic factors, and strong Chernoff-type concentration bounds hold. 

The work \cite{bhattacharya2022simple} further proposed an application-independent framework to reason about random assignments in dynamic systems.

\begin{boxedremark}[Job-Machine Setting]
    Consider a bipartite hypergraph where the left part nodes represent machines and the right part nodes represent jobs. Every hyperedge, also called a routine, in this graph contains exactly one job. A routine of the form $\{j, m_1, \ldots, m_k\}$ represents the fact that job $j$ can be assigned to a group of machines $\{m_1, \ldots, m_k\}$. A valid assignment is a subset of routines such that every job belongs to exactly one routine in this subset. Throughout the process, an algorithm must maintain a valid assignment. If a routine containing job $j$ and machine $m$ is in the algorithm's assignment, we say that $j$ uses $m$.

    At each point in time, an adversary can delete a machine. If a job is using a deleted machine, an algorithm must reassign it to an incident routine that is still present in order to maintain a valid assignment. 

    The load of a machine at a given time is the number of jobs using it at that time. 
\end{boxedremark}

When resampling a job, the proactive resampling algorithm in \cite{bhattacharya2022simple} simply picks a uniformly random routine incident to that job that still exists. The analysis then shows the following strong \mbox{(pseudo-)randomness} properties for the loads of machines: 

\begin{restatable}{claim}{incorrectJobMachineLoad}\textnormal{[Lemma 14~\cite{bhattacharya2022simple}]}
\label{clm:incorrectJobMaschieneLoad}
Consider the job-machine setting with $n$ jobs in which machines are deleted by an adaptive adversary. Under proactive resampling, with high probability, the number of jobs using a machine $x$ at time $t$ is at most $O(\log n) \cdot L + O(\log n)$, where $L$ is the expected number of jobs using $x$ by a fresh uniformly random assignment at time $t$.
\end{restatable}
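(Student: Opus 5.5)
The plan is to compare, job by job, the probability that a job uses $x$ at time $t$ with the probability $p_j(t)$ that a fresh uniform assignment at time $t$ makes job $j$ use $x$. Then sum over jobs and apply a Chernoff-type bound, so that $L=\sum_j p_j(t)$. I expect the union and concentration step to be the real difficulty. In fact I suspect it breaks, and that it is exactly where the temporal selection attack from the abstract applies.

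\textbf{Step 1 (monotonicity).} Let $R_j(s)$ be the set of routines of job $j$ that are alive at time $s$, and let $R_j^x(s)\subseteq R_j(s)$ be those containing $x$. The adversary only deletes machines. So for $s\le t$ we have $R_j(t)\subseteq R_j(s)$. Moreover, every routine through $x$ that is alive at $t$ was also alive at $s$. If $j$ uses $x$ at time $t$, its current routine was drawn at its last resample time $s\le t$, and that routine must still be alive at $t$. Hence the probability that a resample at time $s$ produces a routine through $x$ that survives to $t$ is at most
\[
  \frac{|R_j^x(t)|}{|R_j(s)|}\le \frac{|R_j^x(t)|}{|R_j(t)|}=p_j(t).
\]

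\textbf{Step 2 (few candidate samples).} Under proactive resampling, a sample taken at time $s$ is followed by resamples at $s+2^i$. So the last resample of $j$ before $t$ lies within distance $2^i$ of $t$ once a sample has occurred in $[t-2^{i+1},t-2^i)$. The plan is to charge the event "$j$ uses $x$ at $t$" to one of $O(\log n)$ scheduled samples of $j$, one per dyadic scale, that are guaranteed to exist in the window before $t$. By Step 1, each of these hits $x$ (and survives) with probability at most $p_j(t)$, conditioned on the past. A union bound then gives a per-job bound of $O(\log n)\,p_j(t)$. Summing gives expectation $O(\log n)L$, and a martingale (Freedman) bound on the indicator sums over all samples in the window would yield $O(\log n)L+O(\log n)$ with high probability.

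\textbf{Main obstacle.} The charging in Step 2 needs a \emph{fixed} set of $O(\log n)$ samples per job. But the adversary adaptively decides, through its deletions, which sample is the last one before $t$. For example, it can delete a machine used by $j$ right after a sample that missed $x$, which triggers a fresh adversarial sample together with its own proactive schedule. It can stop as soon as $j$ lands on $x$. The number of samples of $j$ that could potentially be last is therefore not $O(\log n)$ but grows with the number of adversarial deletions touching $j$. The conditional probabilities in Step 1 also no longer sum to $O(\log n)p_j(t)$.

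My first attempt would be to charge each adversarial resample to the deletion that caused it and to bound the total budget of deletions. If that fails, and I expect it does, I would instead construct a temporal selection counterexample. The idea is to make many jobs repeatedly resample, each deletion being aimed at a job the moment it misses $x$. Then far more than $O(\log n)L+O(\log n)$ jobs end up using $x$ at the chosen time $t$. This would show that the claim as stated cannot be proved for plain proactive resampling.
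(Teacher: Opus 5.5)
You are right that this statement cannot be proved; the paper refutes it. However, your proposal puts the failure in the wrong place, and the attack you sketch would not work as described.

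The obstacle you name is the resample-until-successful attack: the adversary deletes the machine $j$ uses after each miss and stops once $j$ lands on $x$. Proactive resampling defeats this up to a logarithmic factor. Suppose the last adversarial sample of $j$ before $t$ is at time $s$. The only sample of that schedule that can survive to $t$ is the one at $f=s+2^{\lfloor\log_2(t-s)\rfloor}$, and it satisfies $t-f<(t-s)/2$. To discard an observed candidate $f$, the adversary must trigger a new sample at some $s'>f$. That sample's own candidate $f'$ then has $t-f'<(t-s')/2<(t-f)/2$. So each job has only $O(\log T)$ potentially-final samples, and whether a sample is one of them is fixed before its outcome is seen. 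This is the paper's proactive-resampling table game. Your Step 2 is therefore essentially sound.

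Moreover, deleting the non-$x$ routine that $j$ is using only \emph{increases} $p_j(t)$, so your attack inflates the target along with the load. In fact, if the adversary never deletes a routine through $x$, then $|R_j^x(s)|/|R_j(s)|\le p_j(t)$ holds pathwise for all $s\le t$. In that case your Freedman argument goes through, with $\log T$ in place of $\log n$.

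The step that really fails is Step 1. The ratio $|R_j^x(t)|/|R_j(s)|$ is not determined at time $s$, because the adaptive adversary decides which routines through $x$ survive \emph{after} seeing the sample. The paper's temporal-selection construction exploits exactly this.

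\begin{itemize}
\item Each job $j$ has routines $\{j,a_i^j\}$ for $i<n$ and $\{j,a_n^j,S\}$.
\item Just before $j$'s proactive resample, the adversary \emph{pretrims} $j$ to $\sqrt n$ routines, so $j$ hits $S$ with probability $1/\sqrt n$.
\item Right afterwards it \emph{posttrims}, deleting $a_n^j$ whenever $j$ missed, so $p_j(t)=0$ for those jobs.
\end{itemize}

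Thus the conditional hit probability at sampling time is $1/\sqrt n$, while $\E[p_j(t)]=1/n$, so your inequality fails even in expectation.

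The jobs are processed in $\sqrt n$ groups of size $\sqrt n$, one group at a time. Before that, the adversary waits by deleting dummy machines until the proactive period exceeds the horizon, so that each prepared sample is the job's last one before $T$. This gives $\E[load^T(S)]=\sqrt n$, while the expected target of $S$ is at most $2$ at every time.

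Each job is sampled only once in the relevant window. The bias comes from choosing, per job, the moment of sampling relative to a changing distribution, not from many retries. Correctly stated, your Steps 1--2 bound the load only by $O(\log T)\sum_j\max_{s\le t}p_j(s)$ plus $O(\log n)$, and that quantity is $\sqrt n$ here. They do not give a bound by $\sum_j p_j(t)$, or even by $\max_{s\le t}\sum_j p_j(s)$. The pre-/post-trimming mechanism is the idea your counterexample is missing.
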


This claim is used in \cite{bhattacharya2022simple} to maintain a spanner in dynamic graphs with low recourse, i.e., the number of edge changes in the maintained spanner per graph update.

Overall, the work~\cite{bhattacharya2022simple} identified proactive resampling as a general approach for maintaining randomized assignments under dynamic changes. However, their job-machine framework comes with two major limitations. First, it assumes that routines are sampled uniformly, which restricts its applicability to non-uniform distributions such as the distribution of random walks. Second, the analysis focuses on a specific property of the resulting random assignment—namely, the machine loads—rather than addressing the more general question of which distributions an adversary can generate in a dynamic setting.

\subsection{Balls and Bins}
The recent work by Fine, Kaplan, and Stemmer \cite{fine2025minimizing} revisits the problem of maintaining a spanner in dynamic graphs considered in \cite{bhattacharya2022simple}. They show that a much simpler setting, compared to job and machines, suffices for the problem.

\begin{boxedremark}[Balls and Bins Setting]
    Initially, the system consists of $n$ balls distributed among $n$ bins. At any point in time, an adversary can pick a bin to delete. All the balls contained in the deleted bin must be redistributed. This process continues until there is a single bin (containing all the balls) left.

    The number of balls redistributed at a given time step is called a recourse of that step, and the sum of the recourses of all steps is called a total recourse.
\end{boxedremark}

An elegant observation of \cite{fine2025minimizing} is that redistributing the balls from a deleted bin by independently placing each ball uniformly at random into one of the remaining bins yields an asymptotically optimal total recourse of $O(n \log n)$, even against an adaptive adversary (i.e., one that observes the contents of all bins after each step and chooses which bin to delete accordingly). Moreover, \cite{fine2025minimizing} leverages this strategy to maintain $3$-spanners in dynamic graphs, improving the recourse bounds compared to \cite{bhattacharya2022simple}.

Further, \cite{fine2025minimizing} proposes a generalization of a Job-Machine framework where, for a given job, different routines incident to this job have arbitrary probabilities of being sampled. In particular, in the beginning, for each job $j$ and routine $r$ incident to $j$, the probability $p_j(r)$ is specified. Then, when job $j$ is resampled at time $t$, it is assigned to routine $r$ with probability $\frac{p_j(r)}{\sum_{r' \in E_t(j)}p_j(r')}$, where $E_t(j)$ are routines incident to $j$ that are still present at time $t$. Authors of \cite{fine2025minimizing} then provide an analysis of their simple strategy in this extended setting. 

Although more general than uniform Job-Machines, this setting may still not capture the behavior of some dynamic systems, e.g., dynamic graphs with random walks, where not only different routines can have different probabilities of being sampled, but also these probabilities can be chosen by an adaptive adversary throughout the execution.

\section{Our Results}
In this paper, we address current gaps and substantially broaden the scope of random assignments under adversarial dynamics. Our contributions can be summarized as follows.

{\bfseries 1. Generalized Framework:} 
The job–machine framework of \cite{bhattacharya2022simple} considers uniformly random assignments of jobs to eligible routines. In \cite{fine2025minimizing}, assignment distribution may not be uniform, but is predefined from the start. However, in many settings, non-uniform dynamic distributions arise naturally and are in fact necessary. For example, consider assigning a node $v$ to a random walk of length $l$ starting from $v$. It would be incorrect to select a uniformly random path of length $l$ starting at $v$, since a random walk $(v, u_1, u_2, \ldots, u_l)$ should be sampled with probability $\left(\deg(v)\cdot\prod_{i < l} \deg(u_i)\right)^{-1}$ rather than uniformly. Moreover, the degrees of the nodes change in dynamic graphs, thus changing the probability of sampling a given random walk. To model such random-walk assignments and other natural dynamic non-uniform processes, a more general view is required. 

We propose a framework that captures this full generality: it allows for arbitrary distributions that may also evolve arbitrarily over time. This framework, formally described in Section~\ref{sec:framework}, is designed to facilitate comparisons between static and dynamic systems. In the static setting, we refer to \emph{static distributions}, meaning the distributions obtained when all objects are sampled simultaneously, and to \emph{static realizations}, i.e., particular outcomes drawn from these distributions. In contrast, in the dynamic setting, we consider \emph{joint distributions}, where different objects may be sampled at different times, and correspondingly \emph{joint realizations}, which represent the resulting composite outcomes.

{\bfseries 2. Temporal Selection Attack:} 
One advantage of the job–machine framework introduced in~\cite{bhattacharya2022simple} over the earlier work of~\cite{bernstein2022fully}, which first proposed proactive resampling, is its ability to handle different (uniform) sampling probabilities for different jobs and to allow these probabilities to evolve over time as machines are deleted. This extension makes results such as \Cref{clm:incorrectJobMaschieneLoad} both more expressive and more subtle to analyze. Unfortunately, proactive resampling does not provide the guarantees claimed in \Cref{clm:incorrectJobMaschieneLoad}, even within a restricted job–machine setting.

We demonstrate this through the \emph{Temporal Selection Attack}—an adversarial strategy that refutes \Cref{clm:incorrectJobMaschieneLoad}. Unlike the resample-until-successful attack, where the adversarial power lies in the ability to perform many samples of each object, in temporal selection, the adversary leverages the ability to modify distributions of the objects. In particular, we observe that it suffices to sample an object only once, but at the time when this object has a high probability of a ``bad'' outcome. In its essence, the temporal selection does the following. 

\begin{boxedremark}[Temporal Selection Attack]
    \textbf{for} $u$ in objects \textbf{do}\\
    \hspace*{0.6cm} Change distributions so that $u$ is likely to have a bad outcome\\
    \hspace*{0.6cm} Sample $u$
\end{boxedremark}
This attack goes completely unnoticed for the proactive resampling, which was designed to counter the resample-until-successful approach. In particular, we show the following theorem that disproves \Cref{clm:incorrectJobMaschieneLoad}.

\begin{restatable}{theorem}{temporalSelectionJobMachines}
    There exists a configuration of $n$ jobs, machines, and routines, along with an adaptive adversarial sequence of machine deletions of $T = O(n^3)$ steps, such that under proactive resampling
    \begin{itemize}[noitemsep, topsep=-1pt]
        \item There exists a machine $S$ with $\Omega(\sqrt{n})$ jobs using $S$ at time $T$ in expectation, and
        \item For any time $t \leq T$, the expected number of jobs using $S$ if all jobs are freshly sampled at time $t$ is $O(1)$.
    \end{itemize}
    \vspace{-0.5em}
\end{restatable}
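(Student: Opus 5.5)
The plan is to exploit the fact that, in the job–machine setting, deleting machines can move a job's probability of landing on $S$ \emph{both up and down}. It goes up when we delete machines on routines avoiding $S$. It goes down when we delete machines on other routines through $S$. The adversary makes each job's $S$-probability large only at the one moment that matters for the final configuration, namely the job's \emph{last} (proactive) resample before $T$. Right after that moment it drives the probability down again. Proactive resampling never sees anything unusual: every job is sampled only at its prescribed times, and no job is resampled ``until successful''.

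\textbf{Construction.} I will use $k=\Theta(\sqrt n)$ attacked jobs $j_1,\dots,j_k$; the remaining jobs are dummies with private machines. Each $j_i$ gets a private set $A_i$ of $m$ routines $\{j_i,S,a_{i,r}\}$ and a private set $B_i$ of $M\gg m$ routines $\{j_i,b_{i,r}\}$, with all $a_{i,r},b_{i,r}$ distinct. I will take $m=\Theta(n)$ and $M=\Theta(n^2)$ so that the total number of deletions, and hence $T$, is $O(n^3)$. Under uniform resampling, job $j_i$ picks $S$ with probability
\[ p_i=\frac{|A_i^{\mathrm{alive}}|}{|A_i^{\mathrm{alive}}|+|B_i^{\mathrm{alive}}|}. \]
Initially every $p_i\le m/M=O(1/n)$, so $\sum_i p_i=O(1)$.

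\textbf{Scheduling.} The adversary triggers the first sample of $j_i$ at a time $t_i$ by deleting the $B$-machine $j_i$ currently uses. The times $t_i$ are chosen so that the \emph{final} proactive resample times $f_i=t_i+2^{K_i}$ (the largest $t_i+2^{K}\le T$) are pairwise distinct and well separated. For instance, take $t_i=T-2^{K}-c\,i$ for a suitable spacing $c$ and a common large $K$. One must then check that no $t_i+2^{K'}$ with $K'>K$ stays below $T$, and that the other proactive times of different jobs do not interfere. Nothing happens at the earlier resamples of $j_i$: they occur while $p_i$ is tiny, so w.l.o.g. $j_i$ sits on a $B$-routine each time.

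Just before $f_i$, the adversary deletes $B_i$-machines that are \emph{not currently used} by $j_i$; it sees the assignment, so this triggers no extra sample and does not disturb the proactive schedule. It stops once $|B_i^{\mathrm{alive}}|\approx m$, so that $p_i\approx 1/2$ at time $f_i$. Right after $f_i$, if $j_i$ took some $\{j_i,S,a_{i,r}\}$, the adversary deletes all of $A_i\setminus\{a_{i,r}\}$. This leaves $p_i=1/(1+|B_i^{\mathrm{alive}}|)=O(1/n)$, and $j_i$'s current routine survives, so no resample is triggered. If $j_i$ did not take $S$, the adversary deletes all of $A_i$ except one unused routine, which gives the same bound. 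Since $f_i$ is the last resample of $j_i$ before $T$, each $j_i$ uses $S$ at time $T$ with probability about $1/2$, giving expectation $\Omega(k)=\Omega(\sqrt n)$.

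\textbf{Fresh expectation and the main obstacle.} For the second bullet, at any time at most one job is in its ``raised'' window. That job contributes at most a constant. Every other attacked job contributes $O(m/M)$ before its window or $O(1/n)$ after it, so the fresh expectation at every $t\le T$ is $O(1)+k\cdot O(1/n)=O(1)$. The delicate step is the timing bookkeeping. We need the raising phase of $j_i$ (up to $\Theta(n^2)$ deletions, one per time step) to fit between consecutive final times $f_{i-1}$ and $f_i$. We also need no proactive resample of any job to fall inside another job's raised window in a way that creates extra samples or moves a final resample time. I expect this to be what forces the choice of spacing and $T=O(n^3)$. It is also why I settle for $k=\Theta(\sqrt n)$ attacked jobs: the spacing between consecutive final times must exceed the $\Theta(n^2)$ raising phase while all $f_i$ lie in $[T-2^{K},T]$. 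I would first try raising with $|B_i|$ reduced only to about $m$ in $\Theta(n)$ deletions by choosing $M=\Theta(n)$ as well, and tune $m/M$ so the idle contributions still sum to $O(1)$.
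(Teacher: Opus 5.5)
Your proposal is correct. It shares the paper's core mechanism: trigger each job early so that proactive resampling puts its last pre-$T$ resample at a predictable time; just before that resample, delete unused non-$S$ machines to raise the job's $S$-probability; afterwards, delete machines to lower it again. This keeps the target load of $S$ at $O(1)$ at every moment.

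\textbf{Where you differ: the gadget.} The paper gives each job a single $S$-routine $\{j,a_n^j,S\}$ among $n$ routines. A job that has hit $S$ cannot have its $S$-probability lowered without deleting its own routine. So the paper raises jobs in groups of $\sqrt n$, each only to $1/\sqrt n$. The residual target $H/\sqrt n$ of a post-trimmed group is then $O(1/\sqrt n)$ only in expectation over the number of hits $H$.

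Your $m$ parallel routines $\{j_i,S,a_{i,r}\}$ with private secondary machines remove that obstruction. After a hit through $a_{i,r}$, you delete the other $a_{i,r'}$ and push $p_i$ down to $O(1/m)$ without touching the chosen routine. This buys you two things:
\begin{itemize}
\item you can raise one job at a time all the way to $1/2$;
\item your $O(1)$ bound on the target holds for every realization, not just on average.
\end{itemize}
The price is $\Theta(M)$ machines and deletions per attacked job instead of the paper's $\Theta(n)$.

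\textbf{The timing bookkeeping you flag is easy.} Proactive resamples never change the hypergraph. So intermediate resamples of other jobs, and earlier resamples of $j_i$ (which are overwritten at $f_i$), cannot interfere. It suffices to take:
\begin{itemize}
\item spacing $c\ge M$, so that one raise plus one post-trim fits between consecutive final times;
\item $2^K>ck$, so that $f_i$ is the last resample of $j_i$ before $T$;
\item isolated dummy machines to fill idle steps, as the paper does with its set $B$.
\end{itemize}
If $j_i$ happens to sit on an $A$-routine at its trigger time, delete that $a_{i,r}$ instead.

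\textbf{Nothing forces $k=\Theta(\sqrt n)$.} Take $k=n$, $m=\Theta(n)$, $M=\Theta(n^2)$ and $c=\Theta(n^2)$. You still have $T=O(n^3)$, every idle job contributes $O(1/n)$ to the target, and the expected final load is $\Omega(n)$. So your gadget actually yields a stronger counterexample than the one stated.
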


Moreover, we show how this attack can be instantiated in the context of random walks on dynamic graphs to overload specific edges or vertices. We believe that understanding such attacks is crucial for the safe design of random-walk–based algorithms such as PageRank and node2vec.

{\bfseries 3. Algorithms:}
Our algorithm design is driven by two primary goals: (i) to perform only a small number of additional resamples per adversarial resampling, and (ii) to maintain a joint distribution that remains similar to the static one. The latter we achieve through what we call a \emph{Temporal Aggregation Principle}, which is inspired by the insights from the temporal selection attack. In particular, observe that in the extreme case of the attack, every object is sampled at its own time step. The temporal aggregation principle combats that by striving to sample objects simultaneously.
\begin{boxedremark}[Temporal Aggregation Principle]
    The algorithm must ensure that at any point in time, realizations of objects come from a few different time steps.
\end{boxedremark}

Under this principle, the joint distribution is \emph{spliced} from a few static ones, which naturally leads to it resembling static properties. While we study this resemblance in general through table games, we also focus on an important special case—when the joint distribution preserves \emph{loads}. We formally define loads in Section~\ref{sec:algorithms}, but the concept is intuitive: the load of a joint realization may represent “how many malicious participants are assigned to a given group,” “how many jobs are using a given machine,” or “how many random walks pass through a given edge.” 

We establish strong guarantees on load preservation for our proposed algorithms. The first of these, \texttt{Greedy Temporal Aggregation}, is a naive application of the temporal aggregation principle. It is remarkably simple and enjoys the following properties:

\begin{restatable}{theorem}{GreedyTemporalAggregationThm}
    \label{thm:greedy temporal aggregation main}
    If the system evolves for $T$ time steps, and in each time step a static realization has a load at most $L$ with probability at least $1 - \varepsilon$, then Greedy Temporal Aggregation ensures that at time $T$ the joint realization has a load at most $L\cdot O(\log n)$ with probability at least $1 - \varepsilon \cdot T$.

    Moreover, if the adversary does $q$ samples by the time $T$, Greedy Temporal Aggregation performs no more than $q\cdot O(\log n)$ samples.
\end{restatable}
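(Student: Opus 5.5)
We have to guess what Greedy Temporal Aggregation is, since the statement comes before its definition. The plan rests on the following reading, which is one natural way to enforce the temporal aggregation principle. Every object carries a timestamp: the time of its last sample. The algorithm keeps the multiset of distinct timestamps in a binary-counter-like structure with $O(\log n)$ "levels". When the adversary forces a sample at time $t$, the algorithm also resamples, at time $t$, every object whose timestamp lies in the lowest levels, as in a binary counter carry. Afterwards all of those objects share timestamp $t$. Because levels merge like a counter, at any time the objects are partitioned into at most $k = O(\log n)$ classes, and each class was sampled simultaneously at a single time step $t_i$.

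\textbf{Load bound.} Fix time $T$ and let the classes be $C_1,\dots,C_k$ with sampling times $t_1,\dots,t_k$. Since the load is additive over objects, the load of the joint realization at $T$ splits as a sum over classes of the load contributed by $C_i$.

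The key step is to couple class $C_i$ with a static realization at time $t_i$. Imagine that at each time step $s$ we draw a full static realization $R_s$ of all objects, independently of everything else. When the algorithm samples a set of objects at time $s$, it takes their values from $R_s$. This faithfully simulates the algorithm even against an adaptive adversary, because the adversary only sees the revealed coordinates.

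The load contributed by $C_i$ is then at most the load of $R_{t_i}$ measured at time $t_i$. This requires loads to be monotone with respect to restricting to a subset of objects. It also requires that the relevant load at $T$ be comparable to the load at $t_i$. Any such comparability conditions should be part of the paper's definition of load in Section~\ref{sec:algorithms}, and I would invoke them here.

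A union bound over the $T$ static realizations shows that with probability at least $1-\varepsilon T$, every $R_s$ has load at most $L$. On that event the joint load is at most $kL = O(\log n)\,L$.

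The subtle point is adaptivity: the times $t_i$ are chosen by the adversary, and the adversary also chooses which objects fall in each class. The union bound over all $s \le T$ sidesteps both issues, because it does not care which $R_s$ ends up being used. This is precisely why the bound carries the factor $T$.

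\textbf{Sample count.} Use a charging argument as in a binary counter. An object is resampled at level $j$ only when a carry propagates into level $j$, and this happens at most once per $2^j$ adversarial samples, on amortized terms. Charging each object's algorithmic resample to the adversarial samples that fed its level gives $O(\log n)$ extra resamples per adversarial sample. So after $q$ adversarial samples, the total is $q\cdot O(\log n)$.

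\textbf{Main obstacle.} I expect the main difficulty to be choosing a merging rule that bounds both the number of distinct timestamps and the amortized resample cost at the same time. A carry might resample objects that the adversary never touched, and class sizes could blow up. I would first try geometric size classes, in which level $j$ holds at most $2^j$ objects, and merge levels upward whenever a level overflows. This should keep the number of levels at $\log n$, and it charges each level-$j$ merge to the $2^{j}$ fresh adversarial samples that triggered it.
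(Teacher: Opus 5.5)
\textbf{The gap is in the resample bound.} You analyze a capacity-level binary counter of your own design. Greedy Temporal Aggregation is a different rule: while two classes (objects sharing a timestamp) have sizes within a factor $2$ of each other, resample both, merging them. This rule has no levels and no carries from below. So your key claim, that a merge into level $j$ happens at most once per $2^j$ adversarial samples, has no counterpart.

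For example, suppose classes of sizes $2^a$ and $2^{a+1}+1$ coexist. One adversarial sample that pulls a single object out of the larger class makes the ratio exactly $2$. This triggers about $3\cdot 2^a$ resamples of objects the adversary never touched, and nothing has accumulated at low levels to charge them to.

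The missing idea is a per-object amortization tied to class size that also pays for shrinkage. The paper uses $\Phi(u)=\log_{3/2}|group(u)|+coins(u)$:
\begin{itemize}
\item An algorithmic resample of $u$ multiplies its class size by at least $3/2$, since the partner class is at least half as large, and costs one coin.
\item An adversarial sample gives the sampled object $\log_{3/2}n$ coins. It also gives each of the $k$ members of the object's former class $4/k$ coins, which covers the drop $\log_{3/2}k-\log_{3/2}(k-1)$.
\end{itemize}
Hence $\Phi(u)$ never drops below its initial value $\log_{3/2}n$. An object about to be merged has class size at most $n/(3/2)$ beforehand, so it always holds a coin. The algorithm therefore performs at most $q(\log_{3/2}n+4)$ samples. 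For your own scheme, a counter-style charging can be made rigorous, because after a carry into level $j$ all lower levels are empty and only adversarial samples refill them. But that is a theorem about a different algorithm.

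\textbf{Your load argument is essentially the paper's.} Only $O(\log n)$ timestamps survive at time $T$; you union-bound over all $T$ static realizations and then splice. Two corrections apply. First, for the actual algorithm the $O(\log n)$ count comes from the while loop's termination condition, not from levels: any two surviving classes differ in size by more than a factor $2$, so there are at most $\log_2 n+1$ of them. Second, drop the additivity, monotonicity and cross-time comparability assumptions. The paper's load is a fixed $f:\mathcal{U}^n\to\mathbb{R}_{\ge 0}$ required only to satisfy the splicing inequality $f(z)\le f(x)+f(y)$. Under your coupling, the joint realization takes its coordinates on $C_i$ from the full vector $R_{t_i}$. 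Lemma~\ref{lem:compose loads} then gives $f\le\sum_i f(R_{t_i})\le kL$ directly.
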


Our second algorithm \texttt{Landmark Resampling} adapts proactive resampling to follow the temporal aggregation principle.

\begin{boxedremark}[Landmark Resampling]
    If an adversary samples an object at time $t_0$, Landmark Resampling schedules its resamples at times $t_1$, $t_2$, $t_3$, $\ldots$ with $t_i$ defined as $t_i = t_{i-1} + 2^i + x$, where $x \in [0,2^{i}]$ is chosen to maximize the number of trailing zeros in the binary description of $t_i$.
\end{boxedremark}

While greedy temporal aggregation already provides meaningful guarantees, landmark resampling achieves strictly stronger results. The algorithm  satisfies the following theorem:

\begin{restatable}{theorem}{LandmarkResamplingThm}
    \label{thm:landmark resampling}
     If the system evolves for $T$ time steps, and in each time step a static realization has a load at most $L$ with probability at least $1 - \varepsilon$, then Landmark Resampling ensures that at time $T$ the joint realization has a load at most $L\cdot O(\log T)$ with probability at least $1 - \varepsilon \cdot O(\log T)$.

    Moreover, if the adversary does $q$ samples by the time $T$, Landmark Resampling performs no more than $q\cdot O(\log T)$ samples.
\end{restatable}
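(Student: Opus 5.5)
The plan is to prove the two parts separately: first a structural fact about which time steps the current realizations come from, then a load bound built on it, and finally the resample count.

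\textbf{Step 1: few landmark times.} Fix the time $T$. For each object, only its most recent sample (taken at some time $t\le T$) determines its current realization. Let $\mathcal{S}$ be the set of times at which some object alive at $T$ was last sampled. I will show that $|\mathcal{S}| = O(\log T)$. Suppose an object was adversarially sampled at $t_0$ and scheduled at $t_1, t_2, \dots$, and let $i$ be the index with $t_i \le T < t_{i+1}$. By construction $t_i$ maximizes trailing zeros within the window $[t_{i-1}+2^i,\, t_{i-1}+2^{i+1}]$. Any interval of length $2^i$ contains a multiple of $2^i$, so $t_i$ is divisible by $2^{i}$ (up to the indexing convention). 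Also $T - t_i < t_{i+1}-t_i \le 2^{i+2}$.

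\textbf{Step 1, counting.} Group the last-sample times by the scale $i$. At scale $i$, the candidates are multiples of $2^i$ in the window $(T-2^{i+2}, T]$, and there are at most $4$ of them. Adversarial samples that have not yet reached their first proactive resample are an exception: they must be handled by the $i=0$ scale or by noting $t_1 - t_0 \le 4$, which gives $O(1)$ extra times. Summing over $i \le \log T$ gives $|\mathcal{S}| = O(\log T)$. This is the crux, and the main obstacle is getting the divisibility claim right. The maximizer over $[a, a+2^i]$ has at least $i$ trailing zeros, but edge effects must be checked, and the argument must hold regardless of the random or adversarial choice of $t_0$. The latter is fine because the schedule is deterministic given $t_0$.

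\textbf{Step 2: load bound.} For each $s\in\mathcal{S}$, the objects whose last sample was at time $s$ were all sampled at $s$ from that step's distribution. I would couple them with a full static realization drawn at time $s$: conceptually, sample every object at $s$ and reveal only those actually used. Then their contribution to the load is at most the load of the static realization at $s$. This needs monotonicity of load under taking subsets, which should follow from the paper's definition of loads as sums over objects. It also needs that the adversary's adaptivity cannot bias a single time step's fresh samples; since all are drawn simultaneously and independently of the past, the static guarantee applies conditionally on the history. The times in $\mathcal{S}$ themselves are determined by the adversary's earlier choices. To handle this, I union-bound over the $O(\log T)$ candidate time slots as fixed functions of the scale, namely the $\le 4$ multiples of $2^i$ near $T$, rather than over adversarially chosen times. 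Each slot fails with probability $\le\varepsilon$, so with probability $\ge 1-\varepsilon\cdot O(\log T)$ every slot contributes load $\le L$. The total load is then $\le L\cdot O(\log T)$.

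\textbf{Step 3: resample count.} Each adversarial sample at $t_0$ triggers resamples at $t_i \ge t_0 + 2^i - 1$, so only $i\le \log T + O(1)$ of them occur by time $T$. That gives at most $q\cdot O(\log T)$ samples.
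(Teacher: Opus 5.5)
Your proposal is correct and follows the paper's proof. It uses the same three steps: the trailing-zeros argument placing all last-sample times at $T$ in a set of $O(\log T)$ slots that depends only on $T$ (your bound of at most $4$ multiples of $2^i$ in $(T-2^{i+2},T]$ is slightly tighter than the paper's window of length $2^{i+3}$ with $8$ candidates), a union bound over those fixed slots, and $t_i \ge t_0 + 2^i$ for the resample count.

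One adjustment is needed in Step~2. The paper does not define load functions as sums over objects; it defines them by the splicing inequality. So rather than arguing through per-slot ``contributions'' and subset-monotonicity, observe that the joint realization at $T$ is a splicing of your coupled full static realizations at the $O(\log T)$ slots, and apply Lemma~\ref{lem:compose loads} directly.
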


{\bfseries 4. Table Games:}
Although many applications can be reduced to bounding loads, this work instead aims to characterize the full range of distributions an adversary can produce. To make these distributions tractable, we introduce table games — simple, transparent processes that mimic the algorithm–adversary interaction in our framework and yield distributions that are (almost) identical but far easier to analyze. We present table games for proactive resampling, greedy temporal aggregation, and landmark resampling, which simplifies comparison of their guarantees and, in the case of proactive resampling, exposes how an adversary can craft an attack that produces a highly skewed joint distribution.

{\bfseries 5. Applications:}
We demonstrate the applicability of our framework and algorithms across a range of problems.

We show that greedy temporal aggregation can be used to patch proof of the main result of~\cite{bhattacharya2022simple}. 

We further apply our techniques to dynamic graph coloring. In this setting, the algorithm must maintain a proper coloring of the graph that undergoes adaptive adversarial edge deletions and insertions. The goal is to minimize both the number of colors used at each time step and the computation performed per adversarial modification. 

We obtain the following result for general maximum degree $\Delta$ graphs, ``dynamizing'' the $\widetilde{O}(n\sqrt{n})$ coloring algorithm by Assadi, Chen and Khanna \cite{assadi2019sublinear}.

\begin{restatable}{theorem}{dynamicSublinearColoringThm}
    \label{thm:dynamic sublinear coloring}
    Consider a sequence of graphs on $n$ vertices $G_1, \ldots, G_T$ where $G_{t+1}$ is obtained from $G_t$ through adaptive adversarial edge deletions and insertions. Suppose that there exists an integer $\Delta$ such that for every $t \in [T]$, the maximum degree in $G_t$ is no more than $\Delta$. For any $\varepsilon > 0$, there exists an algorithm that with high probability maintains a proper $O(\frac{1}{\varepsilon}\cdot \Delta)$ coloring of $G$ and if an adversary performs $q$ edge deletions and insertions, the algorithm performs a total of $\widetilde{O}(q \cdot n^{1/2 + \varepsilon})$ computation. 
\end{restatable}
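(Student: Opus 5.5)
We follow the palette-sparsification approach of Assadi, Chen and Khanna and make it dynamic with Landmark Resampling (Theorem~\ref{thm:landmark resampling}).

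\textbf{The static algorithm.} Fix a palette of $C=\Theta(\frac{1}{\varepsilon}\Delta)$ colors, split into $k=\Theta(1/\varepsilon)$ blocks of $\Theta(\Delta)$ colors each. Each vertex $v$ samples a random list $L(v)$ of $\widetilde{O}(n^{\varepsilon})$ colors, drawn from a randomly chosen block. The statically relevant events are:
\begin{itemize}[noitemsep]
\item the \emph{conflict graph} (edges $uv$ with $L(u)\cap L(v)\neq\emptyset$) has $\widetilde{O}(n^{1/2+\varepsilon})$-type sparsity around every vertex, i.e.\ each vertex has few conflict neighbors;
\item a list coloring from the lists exists and can be found greedily, because each vertex's list is large relative to its conflict degree inside its block.
\end{itemize}
We choose parameters so that a fresh static realization makes every vertex's conflict load at most $L=\widetilde{O}(n^{1/2+\varepsilon}/\Delta\cdot\ldots)$, and the list-colorability margin holds, with probability $1-n^{-c}$.

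\textbf{Casting it as a load.} The objects are vertices and the sampled value is the list $L(v)$ (block plus colors). The adversary triggers a sample of $v$ whenever an edge at $v$ changes. The load of a vertex $u$ is the number of neighbors $w$ in the current graph $G_t$ whose lists intersect $L(u)$. The framework allows graph-dependent, time-varying loads, so we apply Theorem~\ref{thm:landmark resampling} for each $u$ and take a union bound.

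\textbf{Consequence of the load bound.} At every time, each vertex has at most $O(\log T)\cdot L$ conflicting neighbors with high probability. We then want:
\begin{itemize}[noitemsep]
\item Color availability: choose list sizes a $\log T$ factor larger than in the static case, so every vertex still has an unused color in its list. Since $T$ is polynomial, this costs only polylog factors, absorbed into $\widetilde{O}$.
\item Bounded recourse: the number of samples is $q\cdot O(\log T)$. Each resample of $v$ costs computation proportional to scanning $v$'s neighbors whose lists intersect the new list. We bound this via a hash structure on colors giving, for each color, the vertices holding it. Expected cost is about $\Delta\cdot|L|^2/C$, plus recoloring $v$ within its list.
\end{itemize}

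\textbf{Handling edge insertions.} When an inserted edge creates a conflict, we resample one endpoint, which is itself one adversarial sample. An endpoint whose current color becomes invalid picks an available color from its list.

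\textbf{Main obstacle.} The delicate step is the bookkeeping. Recoloring must not cascade, so each vertex keeps a color chosen from its list that avoids neighbors' current colors. A resampled vertex only needs to pick a free color, which exists because its list exceeds its conflict load. Landmark times may resample many vertices simultaneously. We handle this by recoloring them sequentially, using the same per-vertex slack, so that the $\widetilde{O}(n^{1/2+\varepsilon})$ per-sample cost times $q\log T$ samples gives the claimed total.
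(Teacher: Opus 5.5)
There is a genuine gap in your static step, and everything downstream depends on it: the greedy recoloring, the ``per-vertex slack'', and the sequential recoloring at landmarks. Greedy list coloring needs each list to be longer than the vertex's conflict degree, but with your parameters the opposite holds. Two neighbors conflict if they pick the same block (probability $1/k$) and their size-$s$ lists meet inside a block of size $B=\Theta(\Delta)$ (probability $\Theta(\min\{1,s^2/B\})$). So a degree-$\Delta$ vertex has expected conflict degree $\Theta(\varepsilon\min\{\Delta,s^2\})$ when $k=\Theta(1/\varepsilon)$. For $s=\widetilde{\Theta}(n^{\varepsilon})$ this is far larger than $s$. Lengthening the lists by a $\log T$ factor makes it worse, since conflicts grow quadratically in $s$. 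Getting $s$ above the conflict degree with high probability forces either lists of length $\Omega(\varepsilon\Delta)$ (no sparsification) or $\Omega(\Delta\log n)$ colors in total. The Assadi--Chen--Khanna palette sparsification theorem works exactly in the regime where lists are much shorter than conflict degrees, and its conclusion is proved non-greedily.

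Independently, your conflict count is not a load function. If $z_u=x_u$ and $z_w=y_w$, a conflict $x_u\cap y_w\neq\emptyset$ is counted in neither $f_u(x)$ nor $f_u(y)$; take $x_u=\{1\}$, $x_w=\{2\}$, $y_u=\{2\}$, $y_w=\{1\}$. So Lemma~\ref{lem:compose loads} and Theorem~\ref{thm:landmark resampling} do not apply to it. The same cross-time conflicts mean that ``a list coloring exists'' is not preserved under splicing either.

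The missing idea is to never let vertices sampled at different times share colors. The paper treats vertices as objects, with an insertion of $(u,v)$ adversarially resampling $u$ and $v$. It runs parameterized greedy temporal aggregation with $\alpha=n^{\varepsilon}$. Whenever a set $S_t$ is sampled at time $t$, it recolors the subgraph induced on $S_t$ from scratch with the static $\widetilde{O}(n_t\sqrt{n_t})$ algorithm of Theorem~\ref{thm:sublinear coloring}, using a fresh range of $\Delta+1$ colors disjoint from those of the other live groups. Cross-group conflicts are then impossible by construction, and within-group correctness is just the static guarantee, so no load argument about colorability is needed at all. Lemma~\ref{lem:parameterized ta few time steps} gives $O(\log_{n^{\varepsilon}}n)=O(1/\varepsilon)$ live groups, hence $O(\Delta/\varepsilon)$ colors. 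Theorem~\ref{thm:parameterized temporal aggregation} gives $\sum_t n_t=O(qn^{\varepsilon}\log n)$, so the total work is $\sum_t\widetilde{O}(n_t\sqrt{n_t})\le\sqrt{n}\sum_t\widetilde{O}(n_t)=\widetilde{O}(qn^{1/2+\varepsilon})$. This is where the $1/\varepsilon$ versus $n^{\varepsilon}$ trade-off comes from. Landmark resampling, with its $O(\log T)$ landmarks, would instead give $O(\Delta\log T)$ colors under the same disjoint-range scheme.
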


For triangle-free graphs, we achieve a coloring with substantially fewer, namely, $O(\frac{\Delta}{\ln \Delta})$ colors while still keeping the computation per adversarial modification sublinear in $n$. This result dynamizes the algorithm of Alon and Assadi \cite{beyondPalette}.

\begin{restatable}{theorem}{dynamicSublinearTriangleFreeColoringThm}
    Consider a sequence of graphs on $n$ vertices $G_1, \ldots, G_T$ where $G_{t+1}$ is obtained from $G_t$ through adaptive adversarial edge deletions and insertions. Suppose that for every $t \in [T]$ $G_t$ is guaranteed to be triangle-free; further, there exists an integer $\Delta$ such that the maximum degree in $G_t$ is no more than $\Delta$. For any $\gamma, \varepsilon \in (0, 1)$, there exists an algorithm that with high probability maintains an $O(\frac{1}{\varepsilon}\cdot \frac{\Delta}{\gamma \cdot \ln \Delta})$ proper coloring of $G$ and if an adversary performs $q$ edge deletions and insertions, the algorithm performs a total of $\widetilde{O}(q \cdot n^{1/2 + \gamma + \varepsilon})$ computation. 
\end{restatable}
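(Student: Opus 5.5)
We follow the template of the general-graph result (\Cref{thm:dynamic sublinear coloring}): dynamize the static palette-sparsification algorithm of Alon and Assadi \cite{beyondPalette} for triangle-free graphs, using Landmark Resampling (\Cref{thm:landmark resampling}) to maintain the random lists. The object assigned to each vertex $v$ is a random list $L(v)$ of about $\Theta(n^{\gamma}\log n)$ colors (or the size prescribed in \cite{beyondPalette} for $\gamma$), drawn from a palette of $C=O(\frac{1}{\varepsilon}\cdot\frac{\Delta}{\gamma\ln\Delta})$ colors. The static guarantee of \cite{beyondPalette} is that, with high probability, the graph admits a proper coloring in which every $v$ gets a color from $L(v)$. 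Moreover, the conflict graph (edges $uv$ with $L(u)\cap L(v)\neq\emptyset$) has only $\widetilde{O}(n^{1+\gamma}\cdot\text{poly})$ edges, so a list coloring can be found in sublinear time per vertex.

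\textbf{Step 1: define the load.} For each vertex $v$, the load is the number of conflict edges at $v$, i.e.\ the number of neighbors $u$ with $L(u)\cap L(v)\neq\emptyset$. We also record a load-type ``bad event'' witnessing failure of list-colorability. For a fresh static sample, each conflict load is at most $L=\widetilde{O}(\Delta\cdot |L(v)|^2/C)$ with probability $1-n^{-c}$. Applying \Cref{thm:landmark resampling} with $\varepsilon=n^{-c}$ and $T=\mathrm{poly}(n)$, the joint realization has conflict load $O(L\log T)$ with high probability. Each adversarial update $uv$ triggers resampling of $L(u)$ and $L(v)$, giving $O(\log T)$ amortized resamples per update.

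\textbf{Step 2: handle colorability of the spliced lists.} The joint realization is spliced from $O(\log T)$ static realizations taken at landmark times. The factor $\frac{1}{\varepsilon}$ in the color count suggests splitting the palette into about $1/\varepsilon$ disjoint sub-palettes, each of size $\Theta(\Delta/(\gamma\ln\Delta))$. Lists sampled at landmarks of the same level draw from the same sub-palette, and the landmark structure should let the number of distinct concurrently active levels be kept to $O(1/\varepsilon)$, for example by grouping levels with base $n^{\varepsilon}$ instead of $2$. Vertices whose lists come from one sub-palette were all sampled at a common time. The static theorem of \cite{beyondPalette} therefore applies to the induced triangle-free subgraph of those vertices, which has maximum degree at most $\Delta$. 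Because the sub-palettes are disjoint, the union of these colorings is proper.

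\textbf{Step 3: bound the computation.} After each update, we recolor only the vertices whose lists were resampled. We recolor them by local search in the conflict graph of their sub-palette, which costs time proportional to conflict loads, $\widetilde{O}(n^{1/2+\gamma})$ per vertex. The base-$n^\varepsilon$ grouping contributes the extra factor $n^{\varepsilon}$, since each level is now resampled up to $n^{\varepsilon}$ times more often. The total is $\widetilde{O}(q\cdot n^{1/2+\gamma+\varepsilon})$.

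\textbf{Main obstacle.} Step 2 is the hard part. List-colorability from \cite{beyondPalette} is a global property, not a load, so the load guarantee of \Cref{thm:landmark resampling} does not directly transfer. Moreover, resampling the lists of a few vertices within one sub-palette requires the existing coloring to be \emph{repairable} locally, not merely for a list coloring to exist. My first attempt would be to argue that any single time slice is a genuine static sample restricted to a subset of vertices, so its existence guarantee holds with high probability by a union bound over the $O(\log T)$ slices. I would then use the sparse conflict-graph degree bound from Step 1 to recompute the coloring of a slice from scratch whenever it is resampled, with cost charged to the resamples of that level.
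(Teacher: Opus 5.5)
The gap is in Step~2, and it breaks the color bound. Landmark Resampling only guarantees (\Cref{lem:few landmarks}) that realizations come from $O(\log T)$ distinct time steps. One disjoint range per time step therefore yields $O(\log T\cdot\frac{\Delta}{\gamma\ln\Delta})$ colors, and $T$ is not bounded in the statement.

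Your base-$n^{\varepsilon}$ repair does not bring this down to $O(1/\varepsilon)$ slices. Take the analogous rule $t_i=t_{i-1}+b^i+x$ with $x\in[0,b^i]$ and $b=n^{\varepsilon}$. The current realization of a level-$i$ object was drawn in a window of length $\Theta(b^{i+1})$ before $T$. Level-$i$ landmarks, however, are only forced to be multiples of $b^{i}$, so a single level can contain $\Theta(n^{\varepsilon})$ distinct sampling times. Hence ``vertices whose lists come from one sub-palette were all sampled at a common time'' is false when you use one sub-palette per level.

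Once that fails, neither the palette-sparsification theorem nor the static algorithm of \cite{beyondPalette} applies. A slice colored at a later time from a sub-palette knows nothing about the colors its neighbors in an earlier slice already hold from the same sub-palette, so the union need not be proper. Giving every sampling time its own sub-palette instead costs $\Theta(n^{\varepsilon}/\varepsilon)$ ranges. Your account of the $n^{\varepsilon}$ overhead is also backwards: a coarser back-off base means fewer resamples per adversarial sample ($O(\log_b T)$), not more.

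The missing ingredient is parameterized greedy temporal aggregation with $\alpha=n^{\varepsilon}$. By \Cref{lem:parameterized ta few time steps}, realizations then come from $O(\log_{n^{\varepsilon}}n)=O(1/\varepsilon)$ time steps, independently of $T$. By \Cref{thm:parameterized temporal aggregation}, it makes $O(n^{\varepsilon}\log n)$ resamples per adversarial one; this is where the $n^{\varepsilon}$ comes from.

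With it, the paper's proof needs no lists, loads, or local repair:
\begin{itemize}
\item Vertices are the objects. Inserting $(u,v)$ adversarially resamples $u$ and $v$; deletions need nothing.
\item Whenever a set $S_t$ is sampled at time $t$, run the algorithm of \Cref{thm:triangle-free coloring sublinear alg} on $G_t[S_t]$ with a fresh range of $O(\frac{\Delta}{\gamma\ln\Delta})$ colors.
\item Any insertion at a vertex evicts it from its slice, so each slice only shrinks after creation and stays properly colored. Disjoint ranges exclude conflicts across slices, and there are $O(1/\varepsilon)$ live ranges.
\item The computation is $\sum_t\widetilde{O}(n_t^{3/2+\gamma})\le n^{1/2+\gamma}\sum_t\widetilde{O}(n_t)$ with $\sum_t n_t=O(q\,n^{\varepsilon}\log n)$, i.e.\ $\widetilde{O}(q\,n^{1/2+\gamma+\varepsilon})$.
\end{itemize}
Your fallback of coloring each slice from scratch at the moment it is sampled is exactly right. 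What has to change is the resampling scheme; with Landmark Resampling the same argument gives only $O(\log T)$ ranges, not $O(1/\varepsilon)$.
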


We then apply landmark resampling to the problem of maintaining random walks in dynamic graphs. 

For directed graphs, we show how to maintain the \emph{PageRank} estimate of each node under dynamic edge insertions and deletions. In the static setting, the PageRank can be estimated by launching a number of random walks of random length distributed as $Geom(\lambda)$ for some $\lambda \in (0, 1)$ \cite{bahmani2010fast}. Our approach allows to maintain these walks ensuring that in dynamic setting, at any time, the estimated PageRank of every node remains bounded by its maximal historical value.

\begin{restatable}{theorem}{pageRankThm}
Using landmark resampling to maintain $k = \Omega(\log n)$ random walks of a random length distributed as $Geom(\lambda)$ from each node ensures that, with high probability, at any time $T$, the estimated PageRank of every node $v$ satisfies
\begin{gather*}
\widehat{\textnormal{PageRank}}^T(v) \leq \max_{t' \leq T} \textnormal{PageRank}^{t'}(v) \cdot O(\log T).
\end{gather*}
Here, $\textnormal{PageRank}^t(v)$ and $\widehat{\textnormal{PageRank}}^t(v)$ denote the true and the estimated PageRank of node $v$ at time $t$ respectively.
\end{restatable}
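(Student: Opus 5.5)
The plan is to derive this PageRank bound from \Cref{thm:landmark resampling}, applied node by node. I would model each of the $k$ walks started at each node as an object whose distribution at time $t$ is a $Geom(\lambda)$-length random walk in the current graph $G_t$. An adversarial edge insertion or deletion at $(u,w)$ invalidates the distribution of every walk that visits $u$, since the out-degree of $u$ changes. These walks are the objects the adversary "samples" at time $t$, and landmark resampling then schedules their proactive resamples. For a fixed target node $v$, the load of a realization is the total number of visits to $v$ over all $nk$ walks. With the standard normalization, $\widehat{\textnormal{PageRank}}^t(v)$ equals this count times $\lambda/(nk)$. A fresh static realization at time $t'$ therefore has expected load $\mu_{t'} = \Theta(nk/\lambda)\cdot \textnormal{PageRank}^{t'}(v)$.

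The first step is to get a per-time-step high-probability bound for the static realization. I would show that for every $t' \le T$, a static realization at $t'$ has load at most $L := O(\max_{t'\le T}\mu_{t'} + \log n)$ with probability $1-n^{-c}$. The load is a sum of $nk$ independent per-walk visit counts, and each count is dominated by a geometric variable, because the walk length is $Geom(\lambda)$. So a Bernstein/Chernoff bound for sums of independent sub-exponential variables gives the claim. The assumption $k=\Omega(\log n)$ is what makes the additive $\log n$ term negligible: every node has PageRank at least $\Omega(1/n)$ because of the restart mass, so $\mu_{t'} = \Omega(k) = \Omega(\log n)$.

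Next I would invoke \Cref{thm:landmark resampling} with $\varepsilon = n^{-c}$. It gives that at time $T$ the joint realization has load $O(L\log T)$ with probability $1 - n^{-c}\,O(\log T)$. Dividing by $nk/\lambda$ yields $\widehat{\textnormal{PageRank}}^T(v) \le O(\log T)\cdot\max_{t'\le T}\textnormal{PageRank}^{t'}(v)$. A union bound over all $n$ nodes $v$ and all times $T$ completes the argument, provided $T$ is polynomial in $n$ (or $c$ is chosen relative to $\log T$).

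The main obstacle is checking that the random-walk setting fits the hypotheses of \Cref{thm:landmark resampling}. First, a walk must be a legitimate object whose sampling distribution at time $t$ depends only on $G_t$, which holds. Second, objects the adversary does not invalidate must stay consistent with their past sample. This also holds: such a walk avoids every modified node, so its probability is unchanged in the new graph. The subtler point is that the per-step static bound must hold for each $t'$ using the threshold $L$ defined via the historical maximum; this is exactly why the right-hand side involves $\max_{t'\le T}$ rather than the current PageRank.
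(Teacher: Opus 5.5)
Your plan matches the paper's proof. Walks are the objects, the visit count of $v$ scaled by $\lambda/(nk)$ is the load function, and each static step is bounded in terms of $\max_{t'\le T}\textnormal{PageRank}^{t'}(v)$; then Theorem~\ref{thm:landmark resampling} and a union bound over nodes finish it. The paper cites the concentration result of \cite{bahmani2010fast} for the static step. Two of your details are more careful than the paper's write-up: you remove the additive $\log n$ via $\textnormal{PageRank}^{t'}(v)\ge\lambda/n$, so $\mu_{t'}\ge k=\Omega(\log n)$, and your failure probability is $n^{-c}\cdot O(\log T)$.

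The one step that does not go through as you justify it is the static concentration. Knowing only that each walk's visit count $Y_i$ is dominated by a geometric variable is too weak. The sub-exponential ($\psi_1$-norm) Bernstein bound then gives a deviation of order $\sqrt{nk\log n}$, which swamps $\mu_{t'}$ when $\textnormal{PageRank}^{t'}(v)$ is close to $\lambda/n$.

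The missing observation has two parts:
\begin{itemize}
\item $Y_i=0$ unless walk $i$ reaches $v$, and this happens with probability at most $\E[Y_i]$.
\item By memorylessness of the $Geom(\lambda)$ length, conditioned on reaching $v$, $Y_i$ is dominated by $1+Geom(\lambda)$.
\end{itemize}
Hence for a small constant $s>0$ depending on $\lambda$, $\E[e^{sY_i}]\le 1+C_s\E[Y_i]\le \exp(C_s\E[Y_i])$, so $\E[e^{s\sum_i Y_i}]\le e^{C_s\mu_{t'}}$. Markov's inequality then gives $\sum_i Y_i=O(\mu_{t'}+\log n)$ with probability $1-n^{-c}$, which is exactly your $L$. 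Alternatively, cite \cite{bahmani2010fast} as the paper does.

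Separately, the consistency condition you list as the main obstacle is not a hypothesis of Theorem~\ref{thm:landmark resampling}. That theorem holds against any adversary, stale realizations included, because it only uses that each realization at time $T$ was drawn at a landmark step from that step's static distribution. So resampling every walk through an updated node is harmless but unnecessary; the paper only resamples walks that use a deleted edge.
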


Analogously, the same principles can be used to maintain other random-walk–based statistics, such as SimRank~\cite{shao2015efficient}, or node embeddings such as node2vec \cite{grover2016node2vec}.

In the case of undirected graphs, we show how to maintain a collection of random walks such that, with high probability, every edge at every time step experiences low congestion. 

\begin{restatable}{theorem}{maintainLowCongestRandomWalksThm}
Consider a graph sequence $G_1, \ldots, G_T$ where $G_{i+1}$ is obtained from $G_i$ through adaptive adversarial edge deletions and insertions. Suppose every node $v$ in $G$ is a source of $k \leq \min\limits_t\deg_{G_t}(v)$ random walks of length $l$. Then, under landmark resampling, with high probability at time $T$ every edge has congestion at most $O(\log T\cdot\log n) \cdot l$.
\end{restatable}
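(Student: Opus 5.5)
The plan is to reduce the statement to the load-preservation guarantee of Theorem~\ref{thm:landmark resampling}. The objects are the random walks: each node $v$ owns $k$ walks of length $l$. A walk is adversarially resampled exactly when the adversary deletes an edge it currently traverses. An inserted edge changes degrees and hence the walk distribution; to keep the static distribution meaningful, I would treat insertions by resampling every walk through the endpoints whose distribution changed. Alternatively, I would simply note that the framework allows the distributions to change arbitrarily, with only deletions forcing resamples. The load of an edge $e$ under a realization is its congestion, i.e., the number of walk steps traversing $e$. The whole proof then reduces to one claim: in every single graph $G_t$, a static realization (all walks freshly sampled in $G_t$) gives every edge congestion $O(l\log n)$ with probability $1-n^{-c}$.

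For the static bound, fix $t$ and an edge $e=\{u,w\}$, and consider step $i\le l$ of the walks. Let $\pi_i(x)$ denote the expected number of walks located at $x$ after $i$ steps. Initially $\pi_0(x)=k_x\le \deg(x)$, since $k\le\min_t\deg_{G_t}(v)$. The vector $\deg(\cdot)$ is stationary for the random walk on an undirected graph, and the map $\pi\mapsto\pi P$ is monotone. Hence $\pi_i(x)\le \deg(x)$ for all $i$. The expected number of crossings of $e$ at step $i+1$ is therefore $\pi_i(u)/\deg(u)+\pi_i(w)/\deg(w)\le 2$. Summing over $i$ gives expected congestion at most $2l$. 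The congestion is a sum, over independent walks, of bounded contributions, with each walk contributing at most $l$. To get a Chernoff bound, I would apply it separately for each step index $i$: the indicator that walk $j$ crosses $e$ at step $i$ is independent across walks and has total mean at most $2$. Chernoff then bounds the count at each step by $O(\log n)$ with high probability. A union bound over the $l$ steps and all edges (at most $n^2$ of them) gives congestion $O(l\log n)$ whp. So I would take $L=O(l\log n)$ and $\varepsilon=n^{-c}$ for any desired constant $c$.

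Plugging this into Theorem~\ref{thm:landmark resampling} yields that, at time $T$, the joint realization has congestion at most $O(\log T)\cdot O(l\log n)$ on a given edge with probability $1-O(\log T)\,n^{-c}$. A union bound over edges gives the stated bound with high probability, assuming $T\le\mathrm{poly}(n)$ or $c$ chosen accordingly.

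The main obstacle is matching the problem to the framework's notion of load. Theorem~\ref{thm:landmark resampling} presumably requires the load to be monotone or additive over objects, so that a realization spliced from $O(\log T)$ static snapshots is dominated by a union of those snapshots. I expect this step to need care. The issue is that the per-snapshot bound must hold simultaneously for all edges, including edges present at time $T$ but absent in an earlier snapshot. Walks sampled at an earlier time cannot cross such an edge, so those contribute nothing. I would verify that congestion is indeed a sum over walks of a per-walk quantity, which is exactly the additive load form required.
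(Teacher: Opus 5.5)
Your proposal is correct and matches the paper's proof. Walks are the objects, the congestion of a fixed edge $e$ is the load function (additive over walks, hence valid under splicing), a static bound of $O(l\log n)$ is fed into Theorem~\ref{thm:landmark resampling}, and a union bound over edges finishes. The only difference is that you re-derive the static bound (the degree vector is stationary, so each step crosses $e$ at most twice in expectation, then Chernoff), whereas the paper simply cites the folklore Theorem~\ref{thm:low congest rws}.
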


Keeping congestion low is crucial for two reasons. First, in network applications, random walks are often used to establish communication paths; if many of these paths share the same edge, that edge becomes a bottleneck, slowing down traffic. Second, when an adversary deletes an edge, the algorithm must resample all walks passing through it; if many walks are affected, this results in high computational overhead per edge deletion.

\section{Settings and Framework}
In this section, we formally describe two particular settings we study in this work, and then give a generalized, abstract framework.
\subsection{Random Walks in Dynamic Graphs Setting}
\label{sec:random walks in dynamic graphs setting}
We consider a dynamic graph setting in which an adversary acts against an algorithm that maintains a collection of random walks over time. The process begins with an adversarially chosen graph $G_1$. Each node serves as the starting point for $k$ random walks, each of length $l$, which are initially sampled in $G_1$. At every time step $t = 1, 2, \ldots, T$, given the current graph $G_t$ and the current set of random walks, the adversary produces the next graph $G_{t+1}$ by adding or deleting edges. It may also designate a subset of random walks to be resampled in $G_{t+1}$. In addition, any random walk that was traversing an edge deleted from $G_t$ must be resampled. After these resamples, the algorithm may choose to resample an additional subset of random walks of its own choice. This process continues for $T$ time steps.

For a given time step $t$, we define the \emph{congestion} of an edge $e$ in $G_t$ as the number of random walks that traverse $e$. The maximum congestion at time $t$ is then a maximum edge congestion over all edges at time $t$.

\subsection{Job-Machine Setting}
\label{sec:job-machine setting}
In this section, we formally introduce the job-machine setting from \cite{bhattacharya2022simple}.

In it, we consider a collection of jobs and machines. Each job has several eligible routines - groups of machines on which it can be executed, but at any given time, it must be assigned to exactly one of these routines. This relationship can be naturally represented as a hypergraph, where jobs and machines are the vertices, and each hyperedge (routine) of the form $\{j, m_1, \ldots, m_k\}$ denotes that job $j$ can be simultaneously assigned to the machines $m_1, \ldots, m_k$. If the adversary deletes a machine that is currently used by a job, that job becomes invalid and must be resampled. Resampling a job means selecting uniformly at random one of the routines incident to that job in the current hypergraph. Now, we introduce the necessary notation.

Let $G_t$ be the job-machine hypergraph at time $t$. Let $J$ denote the set of all jobs and $M$ denote the set of all machines. Let $R_t$ be the set of all hyperedges (routines) of $G_t$, and $R_t(u)$ for $u \in J \cup M$ be the set of routines at time $t$ incident to $u$. Let $\Delta$ be the maximal degree over all jobs in $G_1$. Denote with $T$ the total number of steps in the process; since in \cite{bhattacharya2022simple} the adversary only advances to the next step by deleting a machine, $T \leq |M|$. Let $A_t$ be the assignment at time $t$, i.e., hyperedges used. Let $x^t$ be the machine deleted at time $t$, and $load^t(x)$ be the load of the machine $x$ at time $t$, i.e., the number of edges in $A_t$ incident to $x$. Let $target^t(x)$ be the expected load of $x$ at time $t$ if all the jobs are sampled simultaneously at time $t$, that is $target^t(x) = \sum\limits_{j \in J}\frac{|R_t(x) \cap R_t(j)|}{|R_{t}(j)|}$.

Since the algorithm must resample all the jobs that were using the deleted machine, the algorithm’s total recourse is defined as $\sum\limits_{t \in [T]}load^t(x^t)$ and the objective is to minimize this quantity. 

\subsection{General Framework}
\label{sec:framework}
We now describe our general framework. The process involves $n$ objects and unfolds over discrete time steps $t = 1, 2, \ldots, T$.

At the beginning, the adversary specifies an initial collection of distributions $(\mathcal{D}_1^1, \ldots, \mathcal{D}_n^1)$, one for each object, and samples all objects according to these distributions. Throughout the paper, we assume all the individual distributions to be over some common universe $\mathcal{U}$.

At each subsequent time step $t$, the system evolves as follows. Based on the entire history up to that point $\mathcal{H}^t$, consisting of previously chosen distributions and realized outcomes, the adversary determines (i) the next set of distributions $(\mathcal{D}_1^{t+1}, \ldots, \mathcal{D}_n^{t+1})$, and (ii) the subset of objects to be sampled at time $t+1$.

Formally, the adversary can be viewed as a pair of functions
\begin{gather*}
    f: \mathcal{H}^{t} \rightarrow (\mathcal{D}_1^{t+1}, \ldots, \mathcal{D}_n^{t+1})\ \text{ and }\ g:\mathcal{H}^{t} \rightarrow \mathcal{I}_{adv}^{t+1} \subset [n],
\end{gather*}

where $f$ specifies the next distributions and $g$ specifies which objects are sampled.

Acting after the adversary, the algorithm can also decide to sample a subset of objects. Hence, it can be represented by a function

\begin{gather*}
    A: (\mathcal{H}^{t+1}, \mathcal{D}_1^{t+1}, \ldots, \mathcal{D}_n^{t+1}) \rightarrow \mathcal{I}_{alg}^{t+1} \subset [n],
\end{gather*}
which, given the history up to time $t+1$, including the realizations of the objects sampled by the adversary at time $t+1$, and the distributions chosen by the adversary for time $t+1$, outputs the subset of objects to be sampled at time $t+1$.

This way, at any time $t$, each object $u$ has an \emph{individual realization} $\mathcal{R}_u^t$ that either originates from time $t$—if $u$ was sampled at that step by either the adversary or the algorithm—or persists from an earlier time otherwise. The \emph{joint realization} at time $t$ is defined as the collection of all individual realizations, $(\mathcal{R}_1^t, \ldots, \mathcal{R}_n^t)$.

Given the adversary and the algorithm, we can define a \emph{joint distribution} at time $t$, which is the distribution over all possible joint realizations at that time. We also define the \emph{static distribution} at time $t$ as the product distribution $(\mathcal{D}_1^t, \ldots, \mathcal{D}_n^t)$ obtained when all objects are sampled simultaneously at time $t$, and the \emph{static realization} as a sample from this static distribution.

The goal of the analysis is to characterize the joint distribution induced by this process after $T$ time steps.

\section{Temporal Selection}
\label{sec:temporal selection}
In this section, we describe temporal selection - an attack on stochastic dynamic systems that, if not addressed, can produce a joint distribution that is highly biased compared to any static one. We first demonstrate an application to dynamic graphs, where the attack can be used to over-congest a single edge/vertex with random walks and then adapt the same idea to the job–machine model, resulting in a strong counter-example to  \Cref{clm:incorrectJobMaschieneLoad}.

\subsection{Attack Random Walks in Dynamic Graphs}
Maintaining a collection of random walks is a cornerstone primitive in various applications, including routing \cite{tian2005randomwalk}, graph learning \cite{grover2016node2vec}, and assessing nodes' similarity \cite{jeh2002simrank}. A natural measure of quality for such a collection of random walks is the maximum edge congestion, which is known to be small if all the walks are sampled simultaneously; in particular, the following property holds for static distributions. 

\begin{theorem}[Folklore]
    \label{thm:low congest rws}
    Consider an undirected graph $G$. Assume each node $v \in V(G)$ initiates $k \leq deg(v)$ random walks of length $l$. Then for every edge $e \in E(G)$, the expected congestion of all random walks passing through $e$ is no more than $l$. Moreover, w.h.p., every edge has the congestion of $O(l\cdot \log n)$.
\end{theorem}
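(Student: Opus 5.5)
The plan is to compute the expected congestion exactly using stationarity of the simple random walk, and then obtain the high-probability bound from a Chernoff bound over independent walks.

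\textbf{Expectation.} Fix a node $v$ and one of the walks $W$ it starts, say $W = (v = X_0, X_1, \ldots, X_l)$. Fix a directed copy $(a,b)$ of an edge $e = \{a,b\}$. The number of times $W$ traverses $e$ is $\sum_{i=0}^{l-1} \mathbf{1}[\{X_i, X_{i+1}\} = e]$, and
\[
\Pr[X_i = a, X_{i+1} = b] = \Pr[X_i = a]/\deg(a).
\]
Sum over all sources, giving each source $v$ weight $k \le \deg(v)$. The total expected number of traversals of $(a,b)$ at step $i$ is then
\[
\sum_v k\,\Pr_v[X_i = a]/\deg(a) \;\le\; \sum_v \deg(v)\,\Pr_v[X_i = a]/\deg(a).
\]
Let $\pi_0(v) = \deg(v)$ be the unnormalized initial measure; it is the stationary measure of the simple random walk. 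Hence $\sum_v \deg(v)\Pr_v[X_i = a] = (\pi_0 P^i)(a) = \deg(a)$. The expected traversals of $(a,b)$ at step $i$ are therefore at most $1$. The same holds for the reverse direction $(b,a)$.

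Summing over the $l$ steps bounds the expected number of traversals of $e$ (in both directions) by $2l$. If congestion counts the walks that traverse $e$ in a given direction, or we use the normalization the folklore statement intends, the bound is $l$. Either way it is $O(l)$, and counting distinct walks through $e$ rather than traversals only decreases the quantity.

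\textbf{Concentration.} Let $Y_W$ be the number of traversals of $e$ by walk $W$. Each $Y_W$ lies in $[0,l]$, and the $Y_W$ are independent across walks. Rescale to $Y_W/l \in [0,1]$; the sum of the rescaled variables has mean $\mu \le 2$. The Chernoff bound for sums of independent $[0,1]$ variables gives $\Pr[\sum_W Y_W/l \ge c\log n] \le n^{-\Omega(c)}$ for a large constant $c$. If instead we count walks that touch $e$, the indicators are Bernoulli with total mean at most $2l$, and Chernoff gives congestion $O(l + \log n) \subseteq O(l\log n)$. A union bound over the at most $n^2$ edges finishes the proof.

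\textbf{Main obstacle.} The only real subtlety is the stationarity step, where the hypothesis $k \le \deg(v)$ is essential. Without it, a low-degree node could launch many walks and overload its incident edges. Once that step is done, the rest is routine.
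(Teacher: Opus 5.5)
The paper states this result as folklore and gives no proof, so there is no argument of the authors' to compare yours against. Your proof is the standard one and it is correct: the degree measure is stationary, so $k\le\deg(v)$ gives at most one expected traversal of each directed copy of $e$ per step; then Chernoff over the independent walks and a union bound over edges. Your Bernoulli version even yields the sharper $O(l+\log n)$.

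The one thing to fix is your hedge about the constant. With the paper's definition of congestion (the number of walks traversing $e$, regardless of direction), the bound $l$ is simply false. Take two vertices joined by a single edge, with $k=l=1$: both walks cross $e$, so the congestion is deterministically $2>l$. More generally, take a $d$-regular graph of girth larger than $2l$ with $k=d$. There the expected number of traversals is exactly $2l$, and re-traversals are rare, so the expected number of distinct walks through $e$ is at least $2l(1-O(1/d))$. So $2l$ is the correct constant, and you should say so plainly, or state $l$ explicitly for a fixed direction, which your computation also gives. Do not appeal to an unspecified ``intended normalization.''

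Nothing downstream in the paper depends on this constant. The theorem is only invoked through the $O(l\log n)$ high-probability bound, in the congestion guarantee for landmark resampling, and as the ``exponentially larger than $l$'' comparison point for the temporal selection attack.
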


Yet, perhaps surprisingly, we show through temporal selection that Theorem \ref{thm:low congest rws} does not hold in dynamic graphs in the following sense: if the graph is dynamic and nodes can launch random walks at different time steps, there can be an edge with expected congestion exponentially larger than $l$. Formally,

\begin{restatable}{theorem}{badRWsDynGraphs}
    \label{thm:high congest dyn rws}
    There exists an adaptive adversarial strategy that produces a sequence of $T = O(n)$ undirected graphs $G_1, G_2, \ldots, G_T$ with $E(G_1) \supset E(G_2) \supset \ldots \supset E(G_T)$ and for each vertex $v \in V(G_1)$ chooses a \emph{single} time $t(v) \in [T]$ to sample a \emph{single} random walk of length $O(\log n)$ from $v$ in $G_{t(v)}$, such that there exists an edge $e \in E(G_T)$ with an expected congestion $\Omega(n^{\varepsilon})$ at time $T$ for some constant $\varepsilon > 0$.
\end{restatable}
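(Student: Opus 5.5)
The plan is to build a rooted tree of constant branching $b$ and depth $h=\Theta(\log n/\log b)=O(\log n)$, hang the target edge $e=(r,r')$ above the root $r$, and let every leaf sample exactly one walk of length $h+1$. A walk from a leaf reaches $e$ only by moving to its parent at every step. At an internal node $w$ this happens with probability $1/\deg(w)$ at the time of sampling. In the static graph $\deg(w)\approx b+1$, so a walk succeeds with probability about $b^{-h}$ and the expected congestion of $e$ is $O(1)$, consistent with Theorem~\ref{thm:low congest rws}. The adaptive adversary shrinks these degrees through \emph{temporal selection}: it processes leaves in DFS order, and whenever a subtree has been fully processed it deletes the edge from that subtree's root to its parent if no walk from inside the subtree succeeded. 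The edge sequence is then decreasing, as required.

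The deletions never destroy a successful walk. Such a walk uses only the edges on its own leaf-to-root path plus $e$. Each of those edges lies above a subtree that contains a success, so it is never deleted, and the walk still traverses $e$ at time $T$. Walks that failed may later lose edges. This does not matter, since only the congestion of $e$ at time $T$ is measured and each vertex samples a single walk at its chosen time $t(v)$.

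The core is a recursive estimate. Fix a node $w$ at height $k$. When the DFS is inside its $j$-th child subtree, $\deg(w)$ equals $1$ (the parent edge) plus the number of earlier child subtrees that produced a success plus the $b-j+1$ children not yet finished. Let $S_k$ be the number of successful walks from the subtree of $w$, counted conditionally on all higher ancestors' degrees being small. Because the upward step at $w$ is an independent fresh choice, a walk reaching $w$ from a leaf steps up with probability at least $1/(s+b-j+2)$, where $s\le j-1$ is the number of successes so far. This gives, at each level, the harmonic-type gain of the star computation: $\sum_{j\le b}1/(b-j+2)\approx\ln b$. I would show by induction on $k$ that the probability of ``at least one walk from this subtree reaches and passes $w$'' is at least $c\,(\ln b)^k/b^k$ times the static value $b^k\cdot b^{-k}$ normalisation; more precisely, the expected number of successes from height $k$ grows by a factor $\Omega(\ln b)$ per level relative to the static count. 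Taking $b$ a large constant, the expected congestion of $e$ is then $(c\ln b)^h=n^{\log(c\ln b)/\log b}=\Omega(n^{\varepsilon})$ with $\varepsilon>0$. The total number of steps is $T=O(n)$, since there is one sampling step per leaf plus at most one deletion per tree edge.

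The main obstacle is making the per-level $\ln b$ gain compose multiplicatively. The degree of an ancestor depends on how many earlier siblings succeeded, and that count is itself random and correlated with the deeper levels. The first thing I would try is to lower-bound $\deg(w)$ pessimistically using the event that at most one success has passed through each node so far. I would then work with the indicator ``some walk exits this subtree upward'', and the induction becomes a recurrence $p_k\ge p_{k-1}\cdot\sum_{j}\frac{1}{b-j+3}$, with $p_k$ that indicator's probability at height $k$. Success counts beyond one only help the final congestion, but they inflate degrees. If this pessimism costs too much, a fallback is to stop deleting-and-processing a subtree once it has produced its first success. Its remaining leaves are then simply sampled last after all structure above is fixed, or skipped by giving them a trivial time, which keeps degrees at most $b-j+3$ deterministically.
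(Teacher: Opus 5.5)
Your tree, your choice of walk length equal to the distance to $e$, and your rule of cutting a subtree off once it has failed are the right mechanism, and essentially the paper's. They give the harmonic factor $H_{b+1}-1=\sum_{j\le b}\frac{1}{b-j+2}$ per level; $b=3$ already gives $\frac{13}{12}>1$, so a large $b$ is not needed.

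The gap is the step you yourself call the main obstacle, and none of your three routes closes it.
\begin{itemize}
\item An induction gaining $\Omega(\ln b)$ at \emph{every} level cannot be right. Near the root, the child subtrees already succeed with probability close to $1$, because of the very amplification you want. So there $s\approx j-1$ and $\deg(w)\approx b+1$, and those levels behave statically.
\item The recurrence $p_k\ge p_{k-1}\sum_j\frac{1}{b-j+3}$ cannot hold for a probability. For the large $b$ you need, the multiplier exceeds $1$, and $p_0=1$ since a leaf walk always leaves its leaf. Also, ``some walk passes $w$'' is not the event that triggers your deletions; those depend on reaching $e$.
\item If ``stop after the first success'' is applied to every subtree, the first success anywhere halts all of its ancestors' subtrees, i.e.\ the whole tree. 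Then at most one walk is placed adaptively, the rest are sampled in an essentially static configuration, and the congestion is $O(1)$.
\end{itemize}

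The missing idea is the paper's two-scale argument, which bounds a probability of at least one success instead of running an expected-count recurrence. Fix a depth $h_1$ and consider the $b^{h_1}$ subtrees hanging at depth $h_1$, each of depth $h_2=h-h_1$. Let $Y$ be the event that such a subtree produces at least one success. Until that first success, all $s$-terms inside the subtree are $0$ and its trimmed state is deterministic. So, with $p_i$ the success probability of its $i$-th walk given that all earlier walks in it failed, $\Pr[Y]=\sum_i p_i\Pr[\text{all earlier failed}]\ge(1-\Pr[Y])\sum_ip_i$. In this sum the per-level factors multiply with no correlation issue: $\sum_ip_i\ge(b+1)^{-h_1}(H_{b+1}-1)^{h_2}$. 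This holds because every node above the subtree has degree at most $b+1$ whatever happened elsewhere, and that part of the graph does not change while the subtree is processed. The paper, with $b=3$, gets $(\tfrac14)^{h_1+1}(\tfrac{13}{12})^{h_2}$. Choosing $h_2$ a large constant multiple of $h_1$ makes $\sum_ip_i\ge1$, hence $\Pr[Y]\ge\frac12$ conditionally on the history before the subtree. Summing over the $b^{h_1}=n^{\Theta(1)}$ subtrees gives expected congestion $\Omega(n^{\varepsilon})$. This argument looks at each subtree only up to its first success, so it also certifies your original no-stopping strategy. Your observation that successful walks are never invalidated then completes the proof.
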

\begin{proof}[Proof's core intuition.]
    The core idea is simple: because individual distributions can change over time, an adversary can schedule samples so that each random walk is sampled at a time when the probability of its ``bad'' realization is high. We illustrate this on a toy example.
    \begin{figure}[ht]
        \centering
        \begin{tikzpicture}[scale=0.7, every node/.style={font=\small}]
          \coordinate (v) at (0,0);
          \coordinate (top) at (0,1.6);
          \coordinate (a) at (-1.6,-1.3);
          \coordinate (b) at (0,-1.3);
          \coordinate (c) at (1.6,-1.3);
        
          \draw[thick] (v) -- (top) node[midway, right] {$e$};
          \draw[thick] (v) -- (a);
          \draw[thick] (v) -- (b);
          \draw[thick] (v) -- (c);
        
          \fill (v) circle (1.6pt);
          \fill (top) circle (1.6pt);
          \fill (a) circle (1.6pt);
          \fill (b) circle (1.6pt);
          \fill (c) circle (1.6pt);
        
          \node[above] at (top) {};
          \node[below left] at (a) {$a$};
          \node[below] at (b) {$b$};
          \node[below right] at (c) {$c$};
        \end{tikzpicture}
    \end{figure}
    Consider a graph on $5$ vertices as in the picture, and assume an adversary wants to hit an edge $e$ with a random walk of length $2$ from either of the vertices $a$, $b$, or $c$. If it launches all the random walks simultaneously, the probability of some random walk traversing $e$ is $\frac{1}{4} + (1 - \frac{1}{4}) \cdot \frac{1}{4} + (1 - \frac{1}{4})^2\cdot\frac{1}{4} = \frac{37}{64}$. However, if after launching a walk from $a$, the adversary sees that it failed, it can cut an edge to $a$, increasing the probability for a random walk from $b$ to hit $e$, and launch a random walk from $b$ after that. If that fails as well, the adversary can further cut an edge to $b$, increasing the probability for a random walk from $c$. Following this strategy, the adversary can achieve hitting probability of $\frac{1}{4} + (1 - \frac{1}{4})\cdot\frac{1}{3} + (1 - \frac{1}{4})\cdot (1 - \frac{1}{3})\cdot\frac{1}{2} = \frac{3}{4} > \frac{37}{64}$. 

    The full proof of the Theorem based on this idea can be found in Appendix \ref{sec:appendix:proofs:temporal selection in dynamic graphs}.
\end{proof}

Since the adversarial strategy of Theorem \ref{thm:high congest dyn rws} does not rely on resampling the same random walk multiple times, but rather samples each random walk once, it can be used to attack proactive resampling in the setting of dynamic graphs and random walks. 

\begin{restatable}{theorem}{attackProactiveResamplingDynGraphs}
    \label{thm:attack proactive resampling dyn graphs}
    Consider Random Walks in Dynamic Graphs with Proactive Resampling. For this setting, there exists a graph $G$ on $n$ vertices, where each vertex is a source of a single random walk of length $O(\log n)$, and an adaptive adversarial strategy for $G$, that within $O(n)$ time steps, induces a maximum expected congestion of $\Omega(n^\varepsilon)$ for some constant $\varepsilon > 0$. 

    Notably, this adversarial strategy does not add edges. 
\end{restatable}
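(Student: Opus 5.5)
The plan is to reduce this statement to Theorem~\ref{thm:high congest dyn rws} by embedding its adversarial strategy into a run of proactive resampling. The key feature of that strategy is that it samples each walk exactly once, at a time $t(v)$ chosen adaptively, in a graph that only loses edges. Proactive resampling adds two complications. It also resamples every walk at times $t+2^i$ after each adversarial sample, and every walk through a deleted edge is forcibly resampled. I would design the construction so that neither the proactive resamples nor the forced resamples destroy the bias accumulated on the target edge $e$, at least not in expectation.

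\textbf{Step 1: eliminate proactive interference.} I would stretch time. Between consecutive phases of the attack, the adversary inserts idle steps in which it deletes nothing and requests no samples; this is allowed because the framework lets the adversary advance time. The schedule would put all the ``selection'' samples inside one short window of length $O(n)$ right before the final time $T$. Concretely, the sampling time of the $j$-th walk would be placed so that none of its proactive resamples $t(v)+2^i$ lands in $(t(v),T]$, or so that any that does happens after the graph has reached its final configuration. If instead all walks are first sampled initially at time $1$, their proactive resamples occur at $1+2^i$. I would therefore place the attack window in a gap $(1+2^{i},1+2^{i+1})$ of length $\Theta(n)$ with $i=\Theta(\log n)$, so the total time stays $O(n)$. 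The adversary triggers the $j$-th walk's sample inside this window by deleting an edge currently on that walk, or by designating it if the setting permits. Walks sampled inside the window have their next proactive resample at least one step later, so the window can be arranged to end with time $T$ before any such resample fires.

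\textbf{Step 2: control forced resamples caused by deletions.} The attack of Theorem~\ref{thm:high congest dyn rws} deletes edges to raise hitting probabilities, for instance the edges toward vertices already tried, as in the toy example. Such a deletion forces resampling of every walk crossing it, including walks that already succeeded in hitting $e$. I would argue that a successful walk can be made immune: deleted edges are chosen only among edges incident to failed sources, at positions a successful walk cannot use. Alternatively, I would show that a resampled successful walk still hits $e$ with probability at least its original success probability, because the graph near $e$ has only become more funnelled. Failed walks being resampled is harmless, since it only adds further chances to hit $e$. The graph sequence is decreasing, so no edges are added, which matches the notable claim.

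\textbf{Main obstacle.} I expect Step 2 to be the hardest: checking that the deletions of the Appendix construction do not cut walks that already hit $e$, or that such cuts cost only a constant factor in expected congestion. My first attempt would be to modify the gadget so that deletions occur only at the leaves (the sources' own edges) after their walk has been launched and has failed. A failed walk is the only one using that edge's endpoint as a source, and a length-$O(\log n)$ walk that hit $e$ would need to revisit the source's edge, which can be excluded or bounded by a constant probability factor. With both steps in place, the expected congestion of $e$ at time $T$ is within a constant of the $\Omega(n^\varepsilon)$ guaranteed by Theorem~\ref{thm:high congest dyn rws}.
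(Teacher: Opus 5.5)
\textbf{The gap is in Step~1, and it breaks the plan as stated.}

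\emph{Why Step~1 fails.} Under proactive resampling, any sample the adversary causes at time $t_j$ starts a fresh schedule, whether it designates the walk or deletes an edge the walk uses. So the walk is resampled at $t_j+1$ and repeatedly after that.

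Your claim that the window ``can be arranged to end with time $T$ before any such resample fires'' therefore holds only if $t_j=T$. That is impossible for selection samples, which must be taken in different graph states at different times. For $t_j<T$, the realization of walk $j$ at time $T$ is a fresh draw made roughly in the second half of $(t_j,T]$, in a later graph. The selected sample is simply overwritten. Inserting idle steps after the selection samples only gives this schedule room to fire.

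Your fallback is to let these resamples occur only once the graph is final. That removes exactly the graph changes that temporal selection exploits: all triggered walks are then drawn in one and the same graph. This is the resample-until-successful situation that proactive resampling is designed to defeat.

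\emph{The missing idea} is to let the algorithm's own scheduled resample serve as the selection sample. The paper does this in three steps:
\begin{enumerate}
\item The adversary triggers the walk from the $i$-th leaf of every subtree at time $i$, in the untrimmed graph.
\item It then idles for $W=\Theta(n)$ steps ($W=1+2+\dots+2^r$ with $2^r\ge n$). This makes walk $i$ proactively resampled at time $W+i$ and not again for $\Omega(n)$ steps.
\item Only then does it replay the \texttt{RecourseDelete} trimming, bringing each subtree into exactly the state $T_i$ just before time $W+i$.
\end{enumerate}

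\emph{The real obstacle is then different from your Step~2.} The early samples, at time $i$ and at the proactive resamples before $W$, must avoid every tree edge. Otherwise the later trimming invalidates them, forces an adversarial resample, resets their schedule, and destroys the alignment at $W+i$.

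The paper handles this as follows:
\begin{itemize}
\item It attaches an $n^3$-clique to each leaf, so that with probability $1-O(1/n)$ all early walks stay inside their cliques.
\item It deletes $C_{v_i}$ right before $W+i$. This relies on the convention that this invalidation, coinciding with the scheduled resample, does not reset the schedule.
\item The resulting blow-up to $N=O(n^4)$ vertices only changes the exponent to $\varepsilon_0/4$.
\end{itemize}

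By contrast, the worry in your Step~2 is not where the difficulty lies. A successful monotone walk from a leaf uses only its root path and $e$, and \texttt{RecourseDelete} never deletes any of these edges afterwards.
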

\begin{proof}[Proof's core intuition.] 
    To leverage the result of Theorem \ref{thm:high congest dyn rws} against Proactive resampling, an adversary needs to ensure that if a random walk from vertex $v$ sampled at time $t(v)$ hits the target edge, it will not get resampled for a long enough period for other random walks to also hit the target edge and hence aggregate the congestion. 

    The trick to achieving this is rather simple. In the construction of a proof sketch of Theorem \ref{thm:high congest dyn rws}, an adversary samples a random walk from $a$ at time $1$, and if that fails, cuts the edge to $a$ and samples a random walk from $b$ at time $2$. To achieve stability of those samples for the time period of $T$ steps, one can do the following. Sample random walks at times $1$ and $2$ respectively, but without yet deleting any edges. Then wait for $W = 1 + 2 + 4 + \ldots 2^r > T$ time steps to ensure that the proactive resampling period for walks from $a$ and $b$ is at least $T$. Then, at time $W + 1$, a walk from $a$ will be resampled ``automatically'' by the proactive resampling algorithm, and if this sample fails to hit $e$, an adversary cuts vertex $a$ increasing the probability for a random walk from $b$, which will be sampled at time $W + 2$ by proactive resampling, to hit $e$. This way, we essentially replay the adversarial strategy for $a$ and $b$ but at times $W + 1$ and $W + 2$, ensuring that samples at these steps are stable.  

    We dedicate the full proof based on this idea to Appendix \ref{sec:appendix:proofs:temporal selection in dynamic graphs}.
\end{proof}

Let us now proceed to an application of the temporal selection attack to the Job and Machines setting. 

\subsection{Attack Job-Machines}
For the concepts and notation of the job-machine setting, we refer the reader to Section~\ref{sec:job-machine setting}.

The recourse of the algorithm is defined in \cite{bhattacharya2022simple} as
\begin{gather*}
    \sum\limits_{t\leq T}load^t(x^t),
\end{gather*}
and the goal is to minimize this quantity. The work of \cite{bhattacharya2022simple} approaches this by bounding the load at each time step $t$ in terms of the target load at time $t$. In particular, the key technical lemma in \cite{bhattacharya2022simple} asserts that when using proactive resampling, these two quantities differ only by polylogarithmic additive and multiplicative factors.

\incorrectJobMachineLoad*

Through the temporal selection, we show that this is not the case. In particular, the attack proceeds as follows. The adversary chooses a machine $S$ to overload. Initially, every job has a small probability of sampling a hyperedge containing $S$. Successively for each job $j$, the adversary does the following. First, by deleting most of the hyperedges incident to $j$ that do not contain $S$, it increases the chance of $j$ hitting $S$ when sampled. Then, it samples $j$. Afterward, it deletes all the hyperedges of $j$ incident to $S$ but the sampled one to decrease the probability of $j$ hitting $S$ in the future. This way, for any static distribution, at most one job has a high chance of hitting $S$. Hence, for any static distribution, the expected total number of hits is low. However, the actual number of sampled hyperedges containing $S$ by the end of the process is high. Let us now proceed to the formal counterexample.

\temporalSelectionJobMachines*
\begin{proof}
Consider $n$ jobs. For every job $j$, there are $n$ machines specific to this job: $a_1^j, \ldots, a_{n}^j$. Apart from job-specific machines, there is a special machine $S$. Additionally, there are $k = O(n^3)$ machines $B = \{b_1, \ldots, b_k\}$ that are used to skip time. Every job $j$ is contained in $n$ hyper edges: $\{j, a_1^j\}, \ldots, \{j, a_{n-1}^j\}, \{j, a_n^j,  S\}$.

We think of jobs as being arranged in $\sqrt{n}$ groups of size $\sqrt{n}$. In this setup, the adversary does the following. It triggers the resampling of jobs in the group $i$ at times $[(X + \sqrt{n})\cdot i + 1, (X+\sqrt{n})\cdot i + \sqrt{n}]$, where $X = \sqrt{n} -1 + n^{3/2}-n$. We explain the choice of $X$ in a few moments. In between triggering two consecutive groups, the adversary skips time by deleting machines in $B$. After triggering all the groups, the adversary waits, by deleting machines in $B$ at each step, for the proactive resampling period for job $1$ to become more than $n^3$. Denote with $t(1)$ the first time step such that job $1$ is sampled at time $t(1)$ and the next time job $1$ is scheduled is more than $t(1) + n^3$. Note that $t(1) = O(n^3)$. This way we ensure that jobs in group $i$ will be sampled at times $\text{GroupSampleInterval}(i) = [(X + \sqrt{n})\cdot i + t(1) + 1, (X + \sqrt{n}) \cdot i + t(1) + \sqrt{n}]$. In particular, the last job is sampled before the first job is sampled again, that is, before the time $T = t(1) + n^3$. Before the $i$-th group gets sampled at the $\text{GroupSampleInterval}(i)$, the adversary \emph{pretrims} it. In particular, it leaves each job $j$ in the group with only $\sqrt{n}$ hyper edges by deleting $n - \sqrt{n}$ machines other than $a_n^j$ that are not a part of a currently selected hyperedge. So pretrimming the group takes $\sqrt{n} \cdot (n - \sqrt{n}) = n^{3/2} - n$ time steps. After pretrimming, all the jobs in the group get sampled by the algorithm during the $\text{GroupSampleInterval}(i)$. During this phase, the adversary skips time by deleting machines in $B$. After the group gets sampled, the adversary \emph{posttrims} it. Namely, for every job $j$ in the group that did not select a hyper edge containing $S$, it removes this edge by deleting $a_n^j$.

Pretrimming, sampling, and posttrimming constitute processing a group. After a group $i$ is processed, the adversary starts to process a group $i + 1$. Now we can justify the choice of $X$ - it is the time period needed to posttrim the group $i$ and pretrim the group $i + 1$ before the group $i + 1$ gets sampled.

\begin{figure}[h]
    \centering
    \resizebox{\linewidth}{!}{%
    \begin{tikzpicture}[
      >=latex,
      x=1cm,
      block/.style={minimum height=6pt, rounded corners=1pt, draw=none},
      lab/.style={font=\small, rotate=35, anchor=west, align=center},
      brace/.style={decorate, decoration={brace, amplitude=4pt, mirror}},
      mlabel/.style={font=\small, inner sep=1pt}
    ]
    
    \definecolor{trigblue}{RGB}{104,141,189}
    \definecolor{pretrim}{RGB}{196,77,69}
    \definecolor{sample}{RGB}{120,170,110}
    \definecolor{posttrim}{RGB}{151,128,189}
    
    \def\ybase{0}      %
    \def\h{0.22}       %
    \def\Ltrig{1.2}
    \def\LgapA{1.4}
    \def\LtrigB{1.2}
    \def\LgapB{1.5}    %
    \def\LpreA{2.0}
    \def\LsamA{1.3}
    \def\LpostA{1.3}
    \def\LpreB{2.0}
    \def\LsamB{1.3}
    \def\LpostB{1.3}
    
    \coordinate (x0) at (0, \ybase);
    \coordinate (x1) at ($(x0) + (\Ltrig,0)$);
    \coordinate (x2) at ($(x1) + (\LgapA,0)$);
    \coordinate (x3) at ($(x2) + (\LtrigB,0)$);
    \coordinate (x4) at ($(x3) + (\LgapB,0)$);
    \coordinate (x5) at ($(x4) + (\LpreA,0)$);
    \coordinate (x6) at ($(x5) + (\LsamA,0)$);
    \coordinate (x7) at ($(x6) + (\LpostA,0)$);
    \coordinate (x8) at ($(x7) + (\LpreB,0)$);
    \coordinate (x9) at ($(x8) + (\LsamB,0)$);
    \coordinate (x10) at ($(x9) + (\LpostB,0)$);
    
    \draw[line width=0.6pt] (x0) -- (x10);
    
    \draw[fill=trigblue, block] (x0) rectangle ++(\Ltrig, \h);       %
    \draw[fill=trigblue, block] (x2) rectangle ++(\LtrigB, \h);      %
    \draw[densely dotted, line width=0.6pt] (x3) -- (x4);            %
    
    \draw[fill=pretrim,  block] (x4) rectangle ++(\LpreA,  \h);      %
    \draw[fill=sample,   block] (x5) rectangle ++(\LsamA,  \h);      %
    \draw[fill=posttrim, block] (x6) rectangle ++(\LpostA, \h);      %
    
    \draw[fill=pretrim,  block] (x7) rectangle ++(\LpreB,  \h);      %
    \draw[fill=sample,   block] (x8) rectangle ++(\LsamB,  \h);      %
    \draw[fill=posttrim, block] (x9) rectangle ++(\LpostB, \h);      %
    
    \node[lab] at ($(x0) + (0.1,0.45)$) {Trigger group 1};
    \node[lab] at ($(x1) + (0.2,0.45)$) {Skip time};
    \node[lab] at ($(x2) + (0.1,0.45)$) {Trigger group 2};
    \node[lab] at ($(x3) + (0.2,0.45)$) {Skip time};
    \node[lab] at ($(x4) + (0.1,0.45)$) {Pretrim group 1};
    \node[lab] at ($(x5) + (0.05,0.45)$) {Group 1\\ gets sampled};
    \node[lab] at ($(x6) + (0.05,0.45)$) {Posttrim group 1};
    \node[lab] at ($(x7) + (0.1,0.45)$) {Pretrim group 2};
    \node[lab] at ($(x8) + (0.05,0.45)$) {Group 2\\ gets sampled};
    \node[lab] at ($(x9) + (0.05,0.45)$) {Posttrim group 2};
    
    \def\down{0.55} %
    
    \draw[brace] ($(x0)  + (0,-\down)$) -- ($(x1)  + (0,-\down)$)
      node[mlabel, midway, below=4pt] {$\sqrt{n}$};
    
    \draw[brace] ($(x1)  + (0,-\down)$) -- ($(x2)  + (0,-\down)$)
      node[mlabel, midway, below=4pt] {$X$};
    
    \draw[brace] ($(x2)  + (0,-\down)$) -- ($(x3)  + (0,-\down)$)
      node[mlabel, midway, below=4pt] {$\sqrt{n}$};
    
    \draw[brace] ($(x3)  + (0,-\down)$) -- ($(x4)  + (0,-\down)$)
      node[mlabel, midway, below=4pt] {$O(n^{3})$};
    
    \draw[brace] ($(x4)  + (0,-\down)$) -- ($(x5)  + (0,-\down)$)
      node[mlabel, midway, below=4pt] {$n^{3/2}-n$};
    
    \draw[brace] ($(x5)  + (0,-\down)$) -- ($(x6)  + (0,-\down)$)
      node[mlabel, midway, below=4pt] {$\sqrt{n}$};
    
    \draw[brace] ($(x6)  + (0,-\down)$) -- ($(x7)  + (0,-\down)$)
      node[mlabel, midway, below=4pt] {$\sqrt{n}-1$};
    
    \draw[brace] ($(x7)  + (0,-\down)$) -- ($(x8)  + (0,-\down)$)
      node[mlabel, midway, below=4pt] {$n^{3/2}-n$};
    
    \draw[brace] ($(x8)  + (0,-\down)$) -- ($(x9)  + (0,-\down)$)
      node[mlabel, midway, below=4pt] {$\sqrt{n}$};
    
    \draw[brace] ($(x9)  + (0,-\down)$) -- ($(x10) + (0,-\down)$)
      node[mlabel, midway, below=4pt] {$\sqrt{n}-1$};
    
    \end{tikzpicture}

    }
\caption{A timeline of a process for two groups. Values under the braces denote how long a given phase lasts.}
\end{figure}
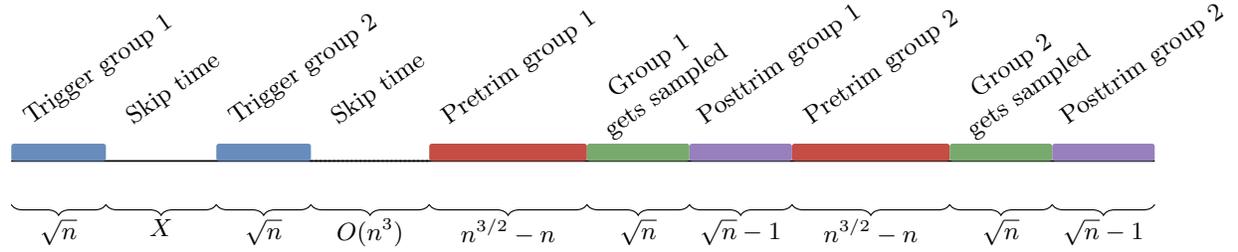

We now need to prove two claims. First, we state that the expected load of $S$ at time $T$ is $\sqrt{n}$. Second, we show that at any time the target load of $S$ is $O(1)$.

\noindent\textbf{Expected load of $S$.} Observe that once the group is pretrimmed, it has $\sqrt{n}$ jobs with each having $\sqrt{n}$ hyper edges with one of those containing $S$. Hence, the expected number of sampled hyperedges in this group that contain $S$ is $1$. Those remain after posttrimming. And since we have $\sqrt{n}$ groups, the expected total number of hyperedges containing $S$ at time $T$ is $\sqrt{n}$.

\noindent\textbf{Highest target load.} Initially, every job has $\frac{1}{n}$ probability to hit $S$, hence the expected load on $S$ from each group before pretrimming is $\frac{1}{\sqrt{n}}$. After a group gets pretrimmed, but before it gets posttrimmed, every node in this group has a probability of $\frac{1}{\sqrt{n}}$ to hit $S$, hence the expected load on $S$ from such a group is $1$. After a group gets posttrimmed, the expectation of load $L$ on $S$ from this group is $\E[L] = \E[\E[L \mid H]] = \E[H \cdot \frac{1}{\sqrt{n}} + (\sqrt{n} - H) \cdot 0]$, where $H$ is a random variable denoting the number of jobs in the group that hit $S$ in the sample phase. Since $\E[H] =1$, the expected load on $S$ from the group is $\frac{1}{\sqrt{n}}$. Now, since at any point in time all groups but one are either not yet pretrimmed or are already posttrimmed, the total expected load of $S$ at any point in time is no more than $(\sqrt{n} - 1)\cdot \frac{1}{\sqrt{n}} + 1 \leq 2$.

This way we get that 
\begin{gather*}
    \mathbb{E}[load^T(S)] = \sqrt{n}\ \ \text{ but }\ target^T(S) \leq 2
\end{gather*}
and more over $\max\limits_{t\leq T}target^t(S)\leq 2$ which disproves \Cref{clm:incorrectJobMaschieneLoad}.
    
\end{proof}

Overall, the temporal selection attack is a powerful adversarial approach to create biases in dynamic stochastic systems. In the next section, we incorporate the insights obtained from its analysis into the design of our algorithms.

\section{Algorithms}
\label{sec:algorithms}

Recall that in designing algorithms, we pursue two primary objectives: (i) the algorithm should perform only a small number of additional resamples, and (ii) the resulting joint distribution should remain close to a static one.

The high-level intuition behind our approach is that to approximate a static distribution, individual realizations should originate from only a few distinct time steps. Consequently, the joint distribution can be viewed as being “spliced” together from a small number of static ones.

To build intuition, consider a group assignment setting where the adversary aims to concentrate malicious participants in the first group. Suppose, moreover, that the adversary is restricted to sampling participants only at times~$1$ and~$2$. By Chernoff bounds, at each time step, not much more than $\frac{q}{g}$ malicious participants will fall into the first group, where $q$ is the total number of malicious participants and $g$ is the number of groups. Hence, even if the adversary constructs a final joint realization by splicing all malicious participants assigned to the first group at times~$1$ and~$2$, the total number still cannot exceed approximately $2 \cdot \frac{q}{g}$.

In this example, the adversary’s goal is to load the first group with malicious participants. Motivated by this intuition, we introduce a general notion of the \emph{load} of a joint realization.

\begin{definition}
    Consider a function $f: \mathcal{U}^n\rightarrow\mathbb{R}_{\geq 0}$. We call $f$ a \emph{load function} if and only if for all $(x_1, \ldots, x_n)\in \mathcal{U}^n$, $(y_1, \ldots, y_n)\in \mathcal{U}^n$, and $(z_1, \ldots, z_n)$ such that $z_i = x_i$ or $z_i = y_i$
    \begin{gather*}
        f(z_1, \ldots, z_n) \leq f(x_1, \ldots, x_n) +f(y_1, \ldots, y_n).
    \end{gather*}
\end{definition}
This definition captures the idea of forming a joint realization by ``splicing'' two static realizations together—selecting each individual realization from either the first or the second. Extending this operation to more than two static realizations yields the following load behavior.
\begin{lemma}
    \label{lem:compose loads}
    Let $f$ be a load function and consider $k$ input vectors to it: $\{(x_1^i, \ldots, x_n^i) \in \mathcal{U}^n\}_{i\in[k]}$ such that for every $i \in [k]$ $f(x_1^i, \ldots, x_n^i) \leq L$, then for $(z_1, \ldots, z_n) \in \mathcal{U}^n$ such that $z_i = x_i^j$ for some $j \in [k]$
    \begin{gather*}
        f(z_1, \ldots, z_n) \leq k\cdot L.
    \end{gather*}
\end{lemma}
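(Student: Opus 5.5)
The plan is to argue by induction on $k$, using the defining inequality of a load function once per inductive step. For $k=1$ the claim is immediate, since then $z_i = x_i^1$ for every $i$ and $f(z) = f(x^1) \le L$. For the inductive step, assume the statement holds for $k-1$ input vectors, and fix $z$ whose coordinates are each taken from one of $x^1,\dots,x^k$. For each coordinate $i$ choose an index $j(i)\in[k]$ with $z_i = x_i^{j(i)}$, making an arbitrary choice when several indices work.

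Next, split $z$ into a part coming from the first $k-1$ vectors and a part coming from $x^k$, and build two auxiliary vectors.
\begin{itemize}
\item The first is $y\in\mathcal{U}^n$, defined by $y_i = z_i$ when $j(i)\le k-1$ and $y_i = x_i^1$ when $j(i)=k$.
\item The second is $x^k$ itself.
\end{itemize}
Every coordinate of $y$ comes from one of $x^1,\dots,x^{k-1}$, so the induction hypothesis gives $f(y)\le (k-1)L$. Moreover, every coordinate of $z$ equals either $y_i$ (when $j(i)\le k-1$) or $x_i^k$ (when $j(i)=k$). The load-function inequality therefore applies to $z$ as a splice of $y$ and $x^k$, giving
\[
f(z)\le f(y)+f(x^k)\le (k-1)L + L = kL .
\]

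The only step needing care is filling in the coordinates where $j(i)=k$ when defining $y$. Those coordinates must be filled with a value drawn from one of the first $k-1$ vectors, so that the induction hypothesis applies to $y$. The value is otherwise irrelevant, because $z$ takes those coordinates from $x^k$ and never from $y$. Choosing $x_i^1$ works whenever $k\ge 2$, which is exactly the range of the inductive step, and nonnegativity of $f$ is not needed anywhere. Equivalently, one could write this as a chain of $k-1$ successive splices, but the induction above is the cleanest presentation.
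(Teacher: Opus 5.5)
Your induction is correct. Padding the coordinates where $j(i)=k$ with $x^1_i$ makes $y$ a splice of $x^1,\dots,x^{k-1}$, and $z$ is then a splice of $y$ and $x^k$, so a single application of the load-function inequality closes the step. The paper states this lemma without a written proof, presenting it only as the extension of two-vector splicing to $k$ vectors, and your argument is exactly that iteration made precise.
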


We use the notion of a load function to formally reason about our algorithms: Greedy Temporal Aggregation and Landmark Resampling. 

\subsection{Greedy Temporal Aggregation}
The greedy temporal aggregation algorithm is gracefully simple. Denote with $objs^t(t')$ the set of objects whose realizations at time $t$ come from time $t'$. 

\begin{algorithm}
\caption{\texttt{Greedy Temporal Aggregation}}\label{alg:greedy ta}
\begin{algorithmic} 
    \State \textcolor{gray}{At each time step $t$:}
    \While{$\exists$ time steps $t_1 \neq t_2$ s.t. $|objs^t(t_1)|/|objs^t(t_2)| \in [1/2, 2]$}
      \State Sample all $objs^t(t_1)$
      \State Sample all $objs^t(t_2)$
    \EndWhile
\end{algorithmic}
\end{algorithm}

Yet, despite its simplicity, this algorithm achieves meaningful competitiveness with static distributions:
\GreedyTemporalAggregationThm*

In order to prove the load bound, we first demonstrate the following auxiliary lemma.
\begin{lemma}
    \label{lem:greedy temporal aggregation few time steps}
    When using greedy temporal aggregation, at time $T$ all realizations come from at most $O(\log n)$ time steps.
\end{lemma}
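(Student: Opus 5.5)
The plan is to analyze the state after the while-loop of Algorithm~\ref{alg:greedy ta} terminates at time $T$. At that moment, no two distinct time steps $t_1 \neq t_2$ with nonempty $objs^T(t_1)$, $objs^T(t_2)$ have size ratio in $[1/2,2]$. The loop cannot end in a state violating this, so the bound reduces to a counting statement about such ``ratio-separated'' families of sizes.

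First I will make sure the loop terminates and that its exit condition is the one I just stated. When the loop fires on $t_1,t_2$, it resamples both groups at the current time $T$. Both groups are then merged into $objs^T(T)$, which may itself be nonempty from adversarial samples or earlier iterations. So each iteration strictly decreases the number of distinct origin times with nonempty object sets, since at least one of $t_1,t_2$ differs from $T$ and disappears. Hence the loop terminates after at most $n$ iterations. At exit, the condition fails for every pair of distinct times with nonempty sets. I will note that empty sets are irrelevant, interpreting the ratio only over times that still own objects. Otherwise a ratio $0/0$ or $x/0$ would never lie in $[1/2,2]$ anyway.

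Next comes the counting step. Let $s_1 < s_2 < \dots < s_m$ be the sizes $|objs^T(t)|$ over the $m$ distinct origin times, sorted. Pairwise non-membership of ratios in $[1/2,2]$ means that consecutive sorted sizes satisfy $s_{i+1} > 2 s_i$; in particular all sizes are distinct. Since $s_1 \ge 1$, induction gives $s_i > 2^{i-1}$, roughly. Together with $\sum_i s_i \le n$ (each object has exactly one origin time), this forces $2^{m-1} < n$, so $m \le \log_2 n + 1 = O(\log n)$.

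The main subtlety is the termination and exit-state argument, specifically confirming that merging into $objs^T(T)$ never re-creates an infinite loop. That is handled by the potential ``number of distinct nonempty origin times'', which drops by at least one per iteration. The counting itself is routine.
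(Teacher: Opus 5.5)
Your proposal is correct and follows the paper's argument: when the while-loop exits, any two nonempty origin times have size ratio outside $[1/2,2]$, so the sorted sizes more than double, and only $O(\log n)$ of them fit within $n$ objects. The explicit termination argument, that the number of nonempty origin times drops with every iteration, is a small addition that the paper leaves implicit.
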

\begin{proof}
    Indeed, by the algorithm, if individual realizations come from different time steps $t_1$ and $t_2$, $|objs^T(t_1)|$ and $|objs^T(t_2)|$ differ by at least a factor of $2$ (otherwise those objects would have been resampled by the algorithm). Hence, at time $T$, there can be no more than $O(\log n)$ non-empty time steps, that is, $t'$ such that $objs^T(t') \neq \emptyset$. 
\end{proof}

The first part of the theorem is then proven as follows.
\begin{proof}[Proof of a low load.]
    Since each static realization has a load of no more than $L$ at each time step $t \in [T]$ with probability at least $1 - \varepsilon$, we conclude by the union bound that all static realizations have a load of at most $L$ with probability at least $1 - \varepsilon\cdot T$.

    By Lemma \ref{lem:greedy temporal aggregation few time steps}, the joint realization at time $T$ is spliced from $O(\log n)$ static realizations, and each of those has a load of at most $L$ with probability at least $1 - \varepsilon\cdot T$. Applying Lemma \ref{lem:compose loads} completes the proof.
\end{proof}

Second, we prove that the number of additional resamples the algorithm does is small.
\begin{proof}[Proof of few resamples.]
    To prove that the algorithm does a bounded number of additional samples, we conduct an amortized analysis. We say that each object has coins. When the adversary samples an object, we assume this object receives $\log_{3/2}n$ coins from the adversary. Moreover, if the object was previously sampled at time $t$ and there were $k$ objects sampled at time $t$, they all received $4/k$ coins.

    When an algorithm samples an object, it takes one coin from it.

    Denote with $group(u)$ a set of objects sampled at the same time step as $u$ and with $coins(u)$ the number of coins $u$ has. Consider the potential function of $u$: $\Phi(u) = \log_{3/2}(|group(u)|) + coins(u)$. We claim that this potential is monotonically increasing. Indeed, consider the cases when this potential is changed. 
    \begin{itemize}
        \item The object $u$ is sampled by the adversary. Then, the $|group(u)|$ may decrease (at most) from $n$ to $1$, hence the $\log_{3/2}(|group(u)|)$ from $\log_{3/2}(n)$ to $0$, but adversary gives the object $\log_{3/2}(n)$ coins, so $\Phi(u)$ can not decrease.
        \item Some other object from $group(u)$ is sampled by the adversary. Assume that before the sample $|group(u)| = k \geq 2$. Then $\Delta\Phi(u) = \log_{3/2}(k - 1) - \log_{3/2}(k) + 4/k > 0$.
        \item All the objects in $group(u)$ are sampled by the algorithm. This means that they are being sampled together with some other group, whose size is at least $\frac{1}{2}|group(u)|$. Therefore, the new size $|group'(u)| \geq |group(u)| + \frac{1}{2}|group(u)| = \frac{3}{2}|group(u)|$ and hence $\Delta\Phi(u) = \log_{3/2}(|group'(u)|) - \log_{3/2}(|group(u)|) - 1 \geq 0$.
    \end{itemize}
    
    This allows us to prove that the algorithm will always be able to take a coin. Indeed, initially, $\Phi(u) = \log_{3/2}n$. If the object is sampled, it means its group's size is increased by at least $3/2$ times, hence it was less than $\frac{n}{3/2}$, hence the number of coins was at least $1$, since $\log_{3/2}(|group(u)|) + coins(u) \geq \log_{3/2}n$ and $\log_{3/2}(|group(u)|) \leq \log_{3/2}n - 1$.

    Finally, note that 
    \begin{align*}
        \text{Total \# of samples by the alg} &\leq \text{\# of coins introduced to the system}\\
        &\leq \text{\# of adversarial samples}\cdot(\log_{3/2}n + 4).
    \end{align*}
\end{proof}

The greedy temporal aggregation algorithm can be naturally tuned to trade off the number of resamples it performs against the load guarantees.

\begin{algorithm}[H]
\caption{\texttt{Parameterized Greedy Temporal Aggregation}}\label{alg:gpta}
\begin{algorithmic} 
    \State \textbf{Parameter:} $\alpha \in [1, +\infty)$
    \State \textcolor{gray}{At each time step $t$:}
    \While{$\exists$ time steps $t_1 \neq t_2$ s.t. $|objs^t(t_1)|/|objs^t(t_2)| \in [1/\alpha, \alpha]$}
      \State Sample all $objs^t(t_1)$
      \State Sample all $objs^t(t_2)$
    \EndWhile
\end{algorithmic}
\end{algorithm}

This parameterized version gives the following guarantees.
\begin{restatable}{theorem}{ParameterizedGreedyTemporalAggregationThm}
    \label{thm:parameterized temporal aggregation}
    If the system evolves for $T$ time steps, and in each time step a static realization has a load at most $L$ with probability at least $1 - \varepsilon$, then Greedy Parameterized Temporal Aggregation with parameter $\alpha$ ensures that at time $T$ the joint realization has a load at most $L\cdot O(\log_\alpha n)$ with probability at least $1 - \varepsilon \cdot T$.

    Moreover, if the adversary does $q$ samples by the time $T$, Greedy Temporal Aggregation performs no more than $q\cdot O(\alpha\log n)$ samples.
\end{restatable}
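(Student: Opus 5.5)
The plan is to mirror the proof of the unparameterized greedy temporal aggregation theorem, replacing the factor $2$ by $\alpha$ and the growth factor $3/2$ by $1+1/\alpha$. The proof has two parts, and I take them in order.

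\textbf{Load bound.} First, I would establish the analogue of Lemma~\ref{lem:greedy temporal aggregation few time steps}. When the while-loop terminates at time $T$, any two distinct nonempty time steps $t_1\neq t_2$ satisfy $|objs^T(t_1)|/|objs^T(t_2)|\notin[1/\alpha,\alpha]$. Sorting the nonempty group sizes, consecutive sizes therefore differ by a factor greater than $\alpha$. All sizes lie in $[1,n]$, so there are at most $\log_\alpha n+1=O(\log_\alpha n)$ nonempty time steps.

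Next, a union bound over the $T$ static realizations shows that with probability at least $1-\varepsilon T$ each of them has load at most $L$. The joint realization at time $T$ is spliced from $O(\log_\alpha n)$ of these, so Lemma~\ref{lem:compose loads} gives a load of at most $L\cdot O(\log_\alpha n)$. If $\alpha$ is so large that $\log_\alpha n<1$, the bound is read as $O(1+\log_\alpha n)$, which is still a constant number of groups.

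\textbf{Resample bound.} This uses the same coin and potential argument as before, with $\Phi(u)=\log_{\beta}|group(u)|+coins(u)$ and $\beta=1+1/\alpha$. When the algorithm merges two groups of sizes $a\le b$ with $b\le\alpha a$, each object in the smaller group sees its group grow by a factor of at least $1+b/a\ge 2$. Each object in the larger group sees growth by a factor of $1+a/b\ge 1+1/\alpha=\beta$. So every resampled object's $\log_\beta|group|$ rises by at least $1$, which pays for the one coin the algorithm takes.

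The coins handed out are as follows.
\begin{itemize}
\item An adversarially sampled object receives $\log_\beta n$ coins. This covers the drop of $\log_\beta|group|$ from at most $\log_\beta n$ down to $0$.
\item When one member leaves a group of size $k\ge2$, each remaining member receives $c/k$ coins. This must cover $\log_\beta\frac{k}{k-1}\le \frac{1}{(k-1)\ln\beta}$.
\end{itemize}
Since $\ln\beta\ge \frac{1}{2\alpha}$ for $\alpha\ge1$, it suffices to take $c=4\alpha$. That hands out $(k-1)\cdot 4\alpha/k\le 4\alpha$ coins per adversarial sample. The argument that a coin is always available is unchanged: a merge forces the group to have been smaller than $n/\beta$, so $coins(u)\ge1$.

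The total number of algorithmic samples is then at most $q(\log_\beta n+4\alpha)$. Using $\log_\beta n=\ln n/\ln(1+1/\alpha)\le 2\alpha\ln n$, this is $O(q\,\alpha\log n)$.

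The main obstacle is the potential accounting for the larger group in a merge, together with the departure compensation. Both bring in $1/\ln(1+1/\alpha)=\Theta(\alpha)$, and the point of the argument is to verify that this is exactly where the $\alpha$ factor comes from and that no worse dependence appears.
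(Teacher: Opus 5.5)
Your proposal is correct and follows the paper's proof essentially step for step. For the load, you use the same factor-$\alpha$ separation argument to get $O(\log_\alpha n)$ nonempty time steps, then a union bound and Lemma~\ref{lem:compose loads}; for the sample count, you use the same coin/potential argument with $\Phi(u)=\log_{1+1/\alpha}|group(u)|+coins(u)$. The only cosmetic difference is that you compensate departures with $4\alpha/k$ coins instead of the paper's $\frac{2}{\ln(1+1/\alpha)\,k}$, which is equivalent up to constants because $\ln(1+1/\alpha)=\Theta(1/\alpha)$.
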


The proof of this theorem is analogous to the proof of the non-parameterized version and is dedicated to Appendix~\ref{sec:appendix:proofs:parameterized ta}.

\subsection{Landmark Resampling}
The key idea of landmark resampling is to resample objects with an exponential back-off, like proactive resampling does, ensuring, however, that objects are being resampled at only a few different time steps called \emph{landmarks}. 

\begin{boxedremark}[Landmark Resampling]
    If an adversary samples an object at time $t_0$, landmark resampling schedules its resamples at times $t_1$, $t_2$, $t_3$, $\ldots$ with $t_i$ defined as $t_i = t_{i-1} + 2^i + x$, where $x \in [0,2^{i}]$ is chosen to maximize the number of trailing zeros in the binary description of $t_i$.
\end{boxedremark}

The landmark resampling enjoys the following guarantees:
\LandmarkResamplingThm*
We start the proof of the theorem by showing the following lemma.
\begin{lemma}
    \label{lem:few landmarks}
    The number of different time steps from which realizations might come from at a given time $T$ is $O(\log T)$. Moreover, the set of these time steps only depends on $T$.
\end{lemma}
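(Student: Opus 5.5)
The plan is to look at the last sample of each object before time $T$ and show that its time is pinned to one of $O(\log T)$ ``landmark'' candidates, with each candidate determined by $T$ alone.

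\textbf{Setup.} Fix an object $u$ and let $\tau\le T$ be the time of its most recent sample. The adversary may have sampled $u$ several times. Each adversarial sample at some time $t_0$ restarts the schedule $t_0<t_1<t_2<\dots$, so I only consider the current chain, started by the last adversarial sample of $u$. Then $\tau$ is either $t_0$ itself or some landmark $t_i$ with $i\ge 1$. In both cases the next scheduled time of this chain exceeds $T$, since otherwise $u$ would have been resampled at or before $T$. (If $u$ was never sampled adversarially, $\tau$ is the initial time $1$, a single extra candidate.)

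\textbf{Key structural fact.} For $i\ge 1$, $t_i$ is chosen from the window $[t_{i-1}+2^i,\ t_{i-1}+2^{i+1}]$. This window contains $2^i+1$ consecutive integers, so it contains a multiple of $2^i$. Because $t_i$ maximizes the number of trailing zeros, $2^i \mid t_i$. The window also gives
\begin{gather*}
t_i\ \ge\ t_{i-1}+2^i, \qquad t_{i+1}\ \le\ t_i+2^{i+2}.
\end{gather*}
The first inequality gives $t_i\ge 2^i$. Since $t_i\le T$, only indices $i\le \log_2 T$ can occur.

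\textbf{Counting candidates.} Suppose $\tau=t_i$ with $i\ge1$. From $t_{i+1}>T$ and the second inequality, $t_i> T-2^{i+2}$, so $t_i$ lies in the interval $(T-2^{i+2},\,T]$. This interval has length $2^{i+2}$ and contains at most $4$ multiples of $2^i$, so for each $i$ there are at most $4$ possible values of $\tau$. Suppose instead $\tau=t_0$. Then $t_1\le t_0+4$ and $t_1>T$ force $\tau\in(T-4,T]$, which gives at most $4$ values. Taking the union over $0\le i\le \log_2 T+1$, plus the initial time, yields a set of size $O(\log T)$. It consists of the multiples of $2^i$ in $(T-2^{i+2},T]$ over all $i$, together with $(T-4,T]$ and $\{1\}$, and so depends only on $T$. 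This proves both claims of Lemma~\ref{lem:few landmarks}.

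\textbf{Main obstacle.} The delicate step is the divisibility claim $2^i\mid t_i$. It needs the window to have length at least $2^i$, which is why $x$ ranges over $[0,2^i]$. It also needs the resulting candidate set to be independent of the adversary's choices: $t_{i-1}$ depends on the history, but the interval $(T-2^{i+2},T]$ together with divisibility by $2^i$ does not. The bookkeeping of restarted chains after repeated adversarial samples is routine once we restrict to the current chain.
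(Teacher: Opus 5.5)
Your proof is correct and follows the same route as the paper. Both arguments classify an object by how many landmark resamples it has had since its last adversarial sample, use the window $[t_{i-1}+2^i,\,t_{i-1}+2^{i+1}]$ to get $2^i \mid t_i$, and use $t_{i+1} > T$ to confine $t_i$ to a window of length $O(2^i)$ ending at $T$, which contains only $O(1)$ multiples of $2^i$.

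The differences are cosmetic. You bound $t_i$ directly via $t_{i+1}\le t_i+2^{i+2}$, giving the window $(T-2^{i+2},T]$ with at most $4$ candidates. The paper instead goes through $t_{i-1}$, giving a window of length $2^{i+3}$ with at most $8$ candidates. You also handle the $i=0$ case and the fact that the candidate set depends only on $T$ explicitly, whereas the paper leaves both implicit.
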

\begin{proof}
    Denote a subset of objects that were resampled $i$ times by the time $T$ since last sampled by the adversary with $S_i$. Clearly, there are at most $\log_2 T$ non-empty $S_i$-s, since an object can be resampled $\log_2 T + 1$ times only after time $2^1 + 2^2 + \ldots +2^{\log_2 T + 1}> T$. We now show that realizations of objects from each $S_i$ come from constantly many different time steps at time $T$. 

    Consider an object $u \in S_i$ and define $t_j(u)$ for $j\leq i$ to be the time step at which $u$ was resampled for the $j$-th time. One can observe that $t_{i-1}(u) \in [T - 2^{i+3}, T]$. That is because $t_{i+1}(u) \leq t_i(u)+2^{i+2} \leq t_{i-1}(u) + 2^{i+1} + 2^{i+2} \leq t_{i-1}(u) + 2^{i+3}$. And hence, if $t_{i -1}(u) < T - 2^{i+3}$, then $t_{i+1}(u) < T$, which contradicts the fact that $u \in S_i$.  

    Now, notice that on every segment of natural numbers of length $2^i$, there will be a number with at least $i$ trailing zeros. This implies that $t_i(u)$ has at least $i$ trailing zeros for every $u \in S_i$. Furthermore, notice that on every segment of natural numbers of length $2^{i+3}$, there will be at most $8$ numbers with at least $i$ trailing zeros. Thus, since for all $u \in S_i$, $t_i(u) \in [T- 2^{i+3}, T]$, we conclude that for a given $i$, objects from $S_i$ are sampled at at most $8$ different time steps. 
\end{proof}

Denote the set of time steps from which individual realizations might come from at time $T$ with $landmarks(T)$.

The first part of the theorem is then proven as follows.
\begin{proof}[Proof of a low load.]
    By Lemma~\ref{lem:few landmarks}, the joint realization at time $T$ is spliced from $O(\log T)$ static realizations, each corresponding to one of the landmark time steps in $landmarks(T)$. For each such step, a static realization has a load at most $L$ with probability at least $1 - \varepsilon$. Applying a union bound over all $O(\log T)$ landmark steps, we obtain that all these static realizations simultaneously have load at most $L$ with probability at least $1 - \varepsilon \cdot O(\log T)$. Finally, by Lemma~\ref{lem:compose loads}, the splicing of these $O(\log T)$ realizations—namely, the joint realization at time $T$—also has load at most $L$, completing the proof.
\end{proof}

The proof of the second part of the theorem is relatively straightforward.
\begin{proof}[Proof of few resamples.]
    Fix some time $T$ and an object $u$. We show that for each adversarial sample of $u$, the landmark resampling schedules $O(\log T)$ samples of $u$ before $T$. 
    
    Let $t_0$ denote the time step at which an adversary samples $u$, and $t_1, t_2, t_3\ldots$ denote the time steps at which landmark resampling resamples $u$. Note that $t_i \geq t_0 + 2^i$, therefore, if $t_k$ is the last resampling before $T$, $k$ is $O(\log T)$.
\end{proof}

\subsection{Table Games}
We now present a unified and simplified perspective on the algorithms considered in this work, intended to make their comparison more transparent. Specifically, we recast the interaction between the algorithm and the adversary as a \emph{table game}.

In such a game, the adversary constructs a table with one row per object and one column for each time step $1,\ldots T$. The table is built column by column: for every column, the adversary first specifies a distribution for each object, then samples all objects according to these distributions and observes the realizations. After all $T$ columns have been constructed, the adversary chooses one realization for each object, obtaining the final joint realization. This completes the general formulation of a table game. To get a table game corresponding to a specific algorithm, we further constrain this general description by specifying which realizations the adversary may select in the end. 

We now define a corresponding table game for each of the algorithms under consideration. For every such game, we argue that the table game adversary possesses at least as much power as the adversary in the original framework in terms of which joint distributions they can obtain. Consequently, if a certain joint realization is unlikely to occur in the table game regardless of the adversary’s choices, it must also be unlikely to occur in the original setting. Although the adversary is formally stronger in the table games, these additional capabilities play only a minor role, and the table games can therefore be regarded as effectively \emph{equivalent} to the original framework.

\paragraph{Greedy Temporal Aggregation.}
In the greedy temporal aggregation table game, the adversary is restricted to selecting individual realizations originating from at most $O(\log n)$ distinct time steps.

We now show that this table game faithfully represents the original setting. In particular, we prove that the table game adversary $\mathcal{A}^{tbl}$ can reproduce the behavior of any adversary $\mathcal{A}^o$ in the original model.

\begin{theorem}
\label{thm:greedy temporal aggregation table game}
For every adversarial strategy $\mathcal{A}^o$ in the original setting with greedy temporal aggregation, there exists a corresponding strategy $\mathcal{A}^{tbl}$ in the greedy temporal aggregation table game that induces the same joint distribution.
\end{theorem}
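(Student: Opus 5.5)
The plan is a simulation argument. $\mathcal{A}^{tbl}$ runs a copy of $\mathcal{A}^o$ together with the greedy temporal aggregation algorithm, and uses the columns of the table to supply the random bits that the original process would consume. The simulation keeps, for every time step $t$ of the original process, a matching column $t$ of the table. In column $t$ the table adversary uses exactly the distributions $(\mathcal{D}_1^t,\ldots,\mathcal{D}_n^t)$ that $\mathcal{A}^o$ would pick, namely $f$ applied to the simulated history $\mathcal{H}^{t-1}$. It then samples all $n$ objects, as the table game requires.

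The simulated history is built from the column as follows. The objects in $\mathcal{I}_{adv}^t=g(\mathcal{H}^{t-1})$ take their column-$t$ entries as their new realizations. Next, greedy temporal aggregation is run on the current bookkeeping $objs^t(\cdot)$. Every object it decides to resample at step $t$ also takes its column-$t$ entry. Every other object keeps its realization from the column in which it was last sampled.

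One subtlety is that the while loop can resample the same objects several times within a single step. After the first merge, though, the merged objects all share the timestamp $t$, and any later merge at step $t$ just moves more objects onto timestamp $t$. So at the end of the step each object's realization is either the column-$t$ entry or an older one. Within-step resamples are fresh independent draws from the same $\mathcal{D}^t$. Using the column-$t$ entry for all of them therefore gives the same distribution of what is observed, since each object is sampled independently and only its final value at step $t$ enters the history.

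The key step is to show by induction on $t$ that the simulated history $\mathcal{H}^t$ has exactly the law of the real history under $\mathcal{A}^o$ and the algorithm. Conditioned on $\mathcal{H}^{t-1}$, the distributions at step $t$ coincide. The set of objects sampled at step $t$ is a function of $\mathcal{H}^{t-1}$, of the adversarial realizations at step $t$ (the algorithm's choices depend only on group sizes, i.e. timestamps), and, for adaptive continuation, of the observed realizations. Each newly sampled object draws independently from $\mathcal{D}_u^t$. The unused column entries are independent of everything the simulation reveals to $\mathcal{A}^o$: the table adversary sees them, but the simulated $\mathcal{A}^o$ does not. So the conditional law of the step-$t$ update matches, and the induction closes.

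At the end, $\mathcal{A}^{tbl}$ outputs as its final choice, for each object $u$, the entry of the column $t'$ with $u\in objs^T(t')$. This is exactly the joint realization of the simulated original process, so the two joint distributions are equal. It remains to check that this choice is legal in the table game, i.e. that it uses only $O(\log n)$ distinct columns. That is precisely Lemma~\ref{lem:greedy temporal aggregation few time steps} applied to the simulated run, which is a genuine execution of greedy temporal aggregation.

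I expect the main obstacle to be the careful independence argument in the induction step. One has to verify that reusing one column entry for a possibly repeated within-step resample, and having the table adversary see extra entries, introduces no correlation that the simulated $\mathcal{A}^o$ could exploit. The way to handle this is to condition explicitly on the simulated history and use that distinct column entries are mutually independent and independent of past columns.
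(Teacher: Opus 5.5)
Your proposal is correct and is essentially the paper's argument. You simulate $\mathcal{A}^o$ column by column, prove by induction on $t$ that the simulated history $\mathcal{H}^t$ has the same law as the real one, and invoke Lemma~\ref{lem:greedy temporal aggregation few time steps} to certify that the final selection uses only $O(\log n)$ columns. Your explicit handling of the algorithm's own resamples, including repeated merges within one step of the while loop, fills in a detail the paper's proof glosses over. If one reads the framework as letting $\mathcal{A}^o$ also observe an overwritten intermediate draw, generate that draw from private randomness and keep the column entry for the surviving value.
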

\begin{proof}
The key idea is that $\mathcal{A}^{tbl}$ can simulate the actions of $\mathcal{A}^o$ step by step.

At each time step $t$, the original adversary $\mathcal{A}^o$ chooses a static distribution $\mathcal{D}^t = (\mathcal{D}_1^t, \ldots, \mathcal{D}_n^t)$ and observes realizations for some objects $ \mathcal{I}^t \subset [n]$ from it. Its decisions depend only on the history of its past observations, denoted $\mathcal{H}_{t-1}$. Therefore, formally, we can view $\mathcal{A}^o$ as a pair of functions $(f, g)$, where
\begin{gather*}
    \mathcal{D}^t = f(\mathcal{H}^{t-1})\ \text{ and }\ \mathcal{I}^t = g(\mathcal{H}^{t-1}).
\end{gather*}
We prove by induction on $t$ that $\mathcal{A}^{tbl}$ can produce the same distribution over histories $\mathcal{H}^t$ as $\mathcal{A}^o$.

For $t = 1$, both adversaries have empty history, so the claim trivially holds.
Assume now that at time $t-1$, $\mathcal{A}^{tbl}$ has already matched the distribution of $\mathcal{H}^{t-1}$ generated by $\mathcal{A}^o$. Since $\mathcal{A}^o$ selects $\mathcal{D}^t = f(\mathcal{H}^{t-1})$, and $\mathcal{A}^{tbl}$ can simulate $f$, both adversaries draw $\mathcal{D}^t$ from the same distribution $\sim f(\mathcal{H}^{t-1})$. Then, $\mathcal{A}^{tbl}$ obtains $\mathcal{H}^{t}$ by augmenting $\mathcal{H}^{t-1}$ with realizations of objects $g(\mathcal{H}^{t-1})$, exactly as $\mathcal{A}^o$ would. Thus, their resulting histories $\mathcal{H}^t$ have identical distributions.

By induction, $\mathcal{A}^{tbl}$ and $\mathcal{A}^o$ generate the same distribution for $\mathcal{H}^{T+1}$. By Lemma \ref{lem:greedy temporal aggregation few time steps}, the final joint realization obtained by $\mathcal{A}^o$ is composed of individual realizations from $\mathcal{H}^{T+1}$ that come from at most $O(\log n)$ different time steps. Therefore, $\mathcal{A}^{tbl}$—which observes the entire table and is allowed to select realizations from at most $O(\log n)$ distinct time steps—can pick exactly the same joint realization as $\mathcal{A}^o$. Hence, $\mathcal{A}^{tbl}$ is at least as powerful as $\mathcal{A}^o$, completing the proof.
\end{proof}

\paragraph{Landmark Resampling.}
In the landmark resampling table game, the adversary is restricted to selecting individual realizations from $landmarks(T)$ -- a fixed set of $O(\log T)$ landmark time steps, which depend only on $T$.

Similarly to greedy temporal aggregation, we show that the table-game adversary for landmark resampling is at least as powerful as its counterpart in the original setting.

\begin{restatable}{theorem}{landmarkTableGame}
\label{thm:landmark table game}
For every adversarial strategy $\mathcal{A}^o$ in the original setting with landmark resampling, there exists a corresponding strategy $\mathcal{A}^{tbl}$ in the landmark resampling table game that induces the same joint distribution.
\end{restatable}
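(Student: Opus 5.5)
The plan mirrors the proof of Theorem~\ref{thm:greedy temporal aggregation table game}: the table-game adversary $\mathcal{A}^{tbl}$ simulates $\mathcal{A}^o$ step by step, and then uses Lemma~\ref{lem:few landmarks} to argue that the final selection is admissible. View $\mathcal{A}^o$ as the pair $(f,g)$, with $\mathcal{D}^t = f(\mathcal{H}^{t-1})$ and $\mathcal{I}^t_{adv} = g(\mathcal{H}^{t-1})$. Landmark resampling is a deterministic function of the sampling history: the schedule of resamples for an object depends only on the time $t_0$ of its last adversarial sample. So the algorithm's set $\mathcal{I}^t_{alg}$ is also determined by $\mathcal{H}^{t-1}$ together with $g$.

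In column $t$, $\mathcal{A}^{tbl}$ sets the distributions to $f(\mathcal{H}^{t-1})$ computed on its simulated history. It samples all objects, but it records into the simulated history only the realizations of objects in $\mathcal{I}^t_{adv}\cup\mathcal{I}^t_{alg}$. The other entries of the column are fresh independent draws that the simulated process never looks at, so they do not affect the law of the simulation. By induction on $t$, exactly as before, the simulated history $\mathcal{H}^t$ has the same distribution as the history of the original process. The inductive step uses only that, conditioned on $\mathcal{H}^{t-1}$, the realizations of the sampled objects in column $t$ are independent draws from $f(\mathcal{H}^{t-1})$ in both processes.

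At the end, the original joint realization at time $T$ assigns to each object $u$ its realization from the last time $u$ was sampled, either by the adversary or by the algorithm. By Lemma~\ref{lem:few landmarks}, these times all lie in a set $landmarks(T)$ of size $O(\log T)$ that depends only on $T$. In particular, this set does not depend on the adversary's randomness or choices, so it is a legitimate fixed restriction for the table game. The table adversary then picks, for each $u$, the entry of column $t(u)$ in row $u$, which lies in $landmarks(T)$. Since the simulated history equals the original one in distribution, the selected joint realization has the same distribution as the original.

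The main obstacle is the subtle case in Lemma~\ref{lem:few landmarks} where an object is sampled by the adversary recently, so that it has been resampled zero times since. The fix is to check that every object's last sampling time falls into $landmarks(T)$. Because the adversary may sample arbitrary objects at arbitrary times, I would handle this by including the recent window in $landmarks(T)$: the set $S_0$ contributes times in $[T-8,T]$, which are constantly many fixed steps depending only on $T$. I would also verify that the argument of Lemma~\ref{lem:few landmarks} covers every $S_i$ uniformly in the last sampling time, whether that sample was adversarial or algorithmic. Once that is settled, the rest is the same as for greedy temporal aggregation.
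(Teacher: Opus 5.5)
Your proposal is correct and follows the paper's proof: simulate $\mathcal{A}^o$ column by column so that the simulated histories agree in distribution by induction, then use Lemma~\ref{lem:few landmarks} to see that the final joint realization only uses entries from the fixed set $landmarks(T)$. Two points in your version are clarifications the paper leaves implicit rather than a different route: you explicitly record the algorithm's samples $\mathcal{I}^t_{alg}$ in the history (the paper mentions only $g(\mathcal{H}^{t-1})$), and you check that objects not resampled since their last adversarial sample lie in a constant-size window just before $T$.
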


The proof closely follows that of greedy temporal aggregation (Theorem~\ref{thm:greedy temporal aggregation table game}), differing only in that it applies Lemma~\ref{lem:few landmarks} in place of Lemma~\ref{lem:greedy temporal aggregation few time steps}. For completeness, we include the full proof in Appendix~\ref{sec:appendix:proofs:landmark resampling}.

\paragraph{Proactive Resampling.}
The proactive resampling table game differs slightly from the previous two. In this setting, after selecting the distributions for a given column but before observing their outcomes, the adversary may \emph{mark} a subset of objects in this column. At the end of the game, the adversary may choose individual realizations only among the marked ones. The marking is limited: for each object, the adversary can mark at most $O(\log T)$ realizations.

To understand why this corresponds to the original proactive resampling setup, recall that when an object is sampled by the adversary at time $t$, the algorithm schedules its resamples at times $t + 2^i$. Suppose, for instance, that an object is first sampled at time $1$; then it will be resampled at roughly time $T/2$. Consequently, neither the realization at time $1$ nor any realization between $1$ and $T/2$ is relevant, as these will be overwritten by the later resampling. In effect, to observe a relevant realization, the adversary must sample an object and then wait for about half of the remaining time—yielding at most $O(\log T)$ relevant realizations per object.

The table-game adversary is thus strictly stronger: it may choose the final realization from any $O(\log T)$ time steps, not necessarily exponentially spaced, and it makes this choice after observing all of them.

\section{Applications}
We now illustrate the versatility of our framework through three use cases. First, we demonstrate how parameterized greedy temporal aggregation can be used to employ existing palette sparsification and sublinear coloring algorithms in dynamic settings. Then, we revisit and refine the main result of “Simple Dynamic Spanners”~\cite{bhattacharya2022simple}, replacing its use of proactive resampling with greedy temporal aggregation, which restores the claimed guarantees. Finally, we show how landmark resampling can be used to efficiently maintain a collection of random walks and estimate PageRank in dynamic graphs.

\subsection{Dynamic Palette Sparsification and Coloring}

In the vertex coloring problem, the goal is to assign a color to each vertex of a graph so as to obtain a \emph{proper} coloring, meaning that no two adjacent vertices receive the same color. A classical fact is that any graph with maximum degree $\Delta$ admits a proper coloring using at most $\Delta+1$ colors.

A key refinement of this guarantee was shown by Assadi, Chen, and Khanna~\cite{assadi2019sublinear}. They proved that it is not necessary for every vertex to have access to the full set of $\Delta+1$ colors: instead, if each vertex is given a short random palette of only $O(\log n)$ colors sampled uniformly from $\{1,\dots,\Delta+1\}$, then with high probability the graph still has a proper $(\Delta+1)$-coloring that respects these per-vertex palettes.

\begin{theorem}[Palette-Sparsification Theorem of~\cite{assadi2019sublinear}]
\label{thm:palette-sparsification}
Let $G$ be an $n$-vertex graph with maximum degree $\Delta$. For each vertex $v \in V(G)$, sample a list $L(v)$ of $O(\log n)$ colors independently and uniformly at random from $\{1,\ldots,\Delta+1\}$. Then, with high probability, there exists a proper $(\Delta+1)$-coloring of $G$ in which every vertex $v$ receives a color from its list $L(v)$.
\end{theorem}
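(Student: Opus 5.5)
The plan is to reprove the theorem along the lines of~\cite{assadi2019sublinear}, since the earlier results of this paper play no role here. The structural tool is a sparse--dense decomposition of $G$ in the style of Molloy--Reed / Harris--Schneider--Su. Fix a small constant $\epsilon>0$. The vertex set is partitioned into $\epsilon$-sparse vertices $V_{sp}$ and almost-cliques $K_1,\dots,K_m$. A sparse vertex is one whose neighborhood spans at most $(1-\epsilon)\binom{\Delta}{2}$ edges. Each almost-clique $K_i$ has size in $[(1-\epsilon)\Delta,(1+\epsilon)\Delta]$, every vertex of $K_i$ has at least $(1-\epsilon)\Delta$ neighbors inside $K_i$, and $K_i$ has few edges leaving it.

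We split each list as $L(v)=L_1(v)\cup L_2(v)\cup L_3(v)$, with each part of size $\Theta(\log n)$ and the parts independent. Each stage below uses fresh randomness, which lets us condition on the outcome of the earlier stages.

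\textbf{Stage 1: sparse vertices.} Run one round of random trial coloring, with each vertex tentatively taking a color from $L_1(v)$.
\begin{enumerate}
\item Because the neighborhood of a sparse vertex $v$ contains $\Omega(\epsilon\Delta^2)$ non-adjacent pairs, Chernoff bounds (Talagrand-type concentration, since the count depends on many independent trials) give many repeated colors among the neighbors of $v$ w.h.p.
\item Consequently $v$ gains slack $\Omega(\epsilon^2\Delta)$: the number of colors still available to $v$ exceeds its number of uncolored neighbors by this amount.
\item We then color the remaining sparse vertices from $L_2(v)$ in a fixed greedy order. For this, $v$ needs a color in $L_2(v)$ that no already-colored neighbor uses. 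The slack makes a constant fraction of $[\Delta+1]$ available to $v$.
\item That alone does not suffice for greedy, because later neighbors shrink the palette. We handle this, as ACK do, by a union bound over the $O(\log n)$-size list, arguing that the palettes stay large; if needed, we instead color high-degree-of-uncolored vertices first.
\end{enumerate}

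\textbf{Stage 2: each almost-clique $K$.} First color the few vertices of $K$ that have many outside neighbors, or that are non-adjacent to many vertices inside $K$, using the slack these vertices enjoy, exactly as in Stage 1. For the remaining core of $K$, build a bipartite graph $H$ between the uncolored vertices of $K$ and the colors not used by any of their already-colored neighbors. The edge $(v,c)$ is present when $c\in L_3(v)$ and $c$ is still available to $v$. A perfect matching of the core into colors (extended to a matching of the core to distinct colors) gives a proper coloring, because vertices of $K$ outside the core are already colored. We show $H$ has such a matching w.h.p. by verifying Hall's condition.
\begin{itemize}
\item \emph{Small sets} $S$: each vertex has $\Theta(\log n)$ random available colors, so a union bound over sets of size $s$ shows $|N_H(S)|\ge |S|$.
\item \emph{Large sets} $S$: we use the complementary count on color sets that are missed. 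A fixed set of $(|K|-|S|)$-many colors avoids all lists of $S$ with probability roughly $(1-s'/\Delta)^{|S|\log n}$, which beats the number of choices.
\end{itemize}

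The main obstacle is Stage 2. The available colors of core vertices are correlated through external neighbors and through Stage~1, and the number of colors available to the core can be only slightly larger than the core itself. I would first try to bound, w.h.p., the number of colors blocked for each core vertex by $O(\epsilon\Delta)$. I would then argue that, conditioned on Stages 1--2a, the sets $L_3(v)$ remain uniform and independent. This is the point where the split into $L_1,L_2,L_3$ is essential. After that, the Hall-condition calculation for a random bipartite graph with minimum degree $\Theta(\log n)$ on nearly balanced sides goes through. A final union bound over all vertices and all almost-cliques yields the high-probability statement.
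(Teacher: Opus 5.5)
The paper gives no proof of this statement. It is quoted from \cite{assadi2019sublinear} and used only as a black box, namely in the proof of Theorem~\ref{thm:dynamic palettes}, applied to the subgraph induced by vertices resampled at the same time step. So your sketch has to stand on its own as a reproof of the ACK theorem, and as written its dense-case step fails.

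\textbf{The gap: Stage 2 colors the core injectively.} You color the core of each almost-clique $K$ by a perfect matching into \emph{distinct} available colors. This fails in general, because an almost-clique can have more vertices than colors available to it.

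Take $K$ to be the complement of a Hamiltonian cycle on $\Delta+3$ vertices, as a component of $G$.
\begin{itemize}
\item Every vertex has degree exactly $\Delta$, anti-degree $2$, and no outside neighbors.
\item Each neighborhood misses only $\Delta-1$ edges, so every vertex is dense and $K$ is an almost-clique of size $\Delta+3\le(1+\epsilon)\Delta$.
\item No vertex has many outside neighbors or non-neighbors, so the core is all of $K$. Even if your threshold were that low, these vertices have only $O(1)$ slack, so the Stage~1 greedy argument with $O(\log n)$-size lists could not color them.
\item Hence your graph $H$ has $\Delta+3$ vertices on one side and at most $\Delta+1$ colors on the other. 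Hall's condition fails deterministically for $S=K$, whatever the lists are.
\end{itemize}
Your large-set estimate implicitly treats the color side as having at least $|K|$ colors.

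More generally, for $v\in K$ with anti-degree $a(v)$ and outside degree $e(v)$, the degree bound only gives $|K|-(\Delta+1)+e(v)\le a(v)$. So whenever anti-degrees are positive, the core can be larger than the palette left to it, either because $|K|>\Delta+1$ or because outside neighbors block common colors. In that case any valid coloring must reuse colors on non-adjacent pairs inside $K$.

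\textbf{The missing ingredient} is the \emph{colorful matching} of \cite{assadi2019sublinear}. Using a separate batch of list colors, one shows that w.h.p.\ each almost-clique contains a matching of non-edges $\{u,w\}$ whose endpoints share a list color $c\in L(u)\cap L(w)$. Distinct pairs get distinct colors, and the matching is large enough to absorb the deficit. The count behind this: there are about $|K|\bar a/2$ non-edges, where $\bar a$ is the average anti-degree, and each is colorful with probability about $|L(u)||L(w)|/(\Delta+1)$. Only after these pairs are colored does the remaining core have more available colors than vertices. Only then does a Hall-type perfect-matching argument for the random bipartite graph go through. Your sketch has no substitute for this step.

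\textbf{Two smaller remarks.}
\begin{itemize}
\item Item~4 of Stage~1 is unnecessary. With a fixed greedy order, when $v$ is reached at least the slack $\Omega(\epsilon^2\Delta)$ colors remain available, and $L_2(v)$ is independent of them. So lists of size $O(\epsilon^{-2}\log n)$ already suffice.
\item Your concentration bounds need $\Delta\ge C\log n$, so the regime of small $\Delta$ must be handled separately.
\end{itemize}
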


In a dynamic setting, however, the situation is more delicate. If an adaptive adversary is allowed to observe the sampled palettes and then perform edge insertions and deletions, the initially sampled palettes may eventually cease to support any proper coloring of the current graph. This necessitates selectively resampling palettes over time.

We show that parameterized greedy temporal aggregation provides a way to maintain palette sparsification against an adaptive adversary, at the cost of using a constant-factor larger global color range and incurring an $n^{\varepsilon}$ overhead in the number of palette resamples.

\begin{theorem}
    \label{thm:dynamic palettes}
    Consider a sequence of graphs on $n$ vertices $G_1, \ldots, G_T$ where $G_{t+1}$ is obtained from $G_t$ through adaptive adversarial edge deletions and insertions. Suppose that there exists an integer $\Delta$ such that for every $t \in [T]$, the maximum degree in $G_t$ is no more than $\Delta$. Then, for any $\varepsilon > 0$, using parameterized greedy temporal aggregation with parameter $n^{\varepsilon}$, for every vertex $v \in V(G)$ and time $t \in [T]$, we can maintain a palette $L_t(v)$ such that 
    \begin{itemize}
        \item For every $t \in [T]$, $G_t$ is proper colorable with high probability with each $v \in V(G)$ receiving a color from $L_t(v)$
        \item Each $L_t(v)$ is of size $O(\log n)$
        \item For every $t \in [T]$, the total number of colors used, i.e., $|\bigcup\limits_{v \in V(G)}L_t(v)|$, is $O(\frac{1}{\varepsilon}\cdot \Delta)$ 
        \item If the adversary performs $q$ edge insertions and deletions, the algorithm does $O(q\cdot n^\varepsilon\cdot \log n)$ palette resamples.
    \end{itemize}
\end{theorem}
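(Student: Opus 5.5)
We take the objects of the framework to be the vertices and the realization of a vertex $v$ to be its palette $L_t(v)$. Each adaptive edge insertion or deletion of $\{u,v\}$ is charged as an adversarial sample of both $u$ and $v$. On top of this we run parameterized greedy temporal aggregation (Algorithm~\ref{alg:gpta}) with $\alpha=n^{\varepsilon}$. The argument of Lemma~\ref{lem:greedy temporal aggregation few time steps} goes through verbatim for parameter $\alpha$: any two distinct nonempty groups $objs^t(t')$ have sizes differing by more than a factor of $\alpha$. Hence at every time $t$ the palettes come from at most $k=O(\log_\alpha n)=O(1/\varepsilon)$ distinct time steps. Likewise, Theorem~\ref{thm:parameterized temporal aggregation} gives $O(q\cdot n^{\varepsilon}\log n)$ algorithmic resamples, since there are $2q$ adversarial samples. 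This settles the last bullet, and the size bound $O(\log n)$ holds because every palette is drawn with $O(\log n)$ colors.

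To obtain $O(\Delta/\varepsilon)$ total colors, we fix $k+2$ disjoint \emph{blocks} $B_1,\dots,B_{k+2}$ of $\Delta+1$ colors each. All vertices sampled at a given time step $t'$, whether by the adversary or by the algorithm, form one group. Their palettes are drawn only after the while loop at time $t'$ terminates. At that moment the group is assigned a block index not used by any other currently nonempty group, and each of its vertices draws $O(\log n)$ colors uniformly from that block. A free index exists because at most $k$ other groups survive at that point, plus slack for the new group. Thus at any time the union of palettes lies within at most $k+2$ blocks, giving $O(\Delta/\varepsilon)$ colors. Different groups use disjoint colors, so a proper coloring of $G_t$ is obtained by properly coloring each group's induced subgraph separately from its own block.

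The key correctness step is showing that, for a group $S_{t'}=objs^t(t')$ still alive at time $t$, the graph $G_t[S_{t'}]$ is $L$-colorable. First, $G_t[S_{t'}]=G_{t'}[S_{t'}]$: any edge changed between two vertices after $t'$ would have triggered an adversarial resample of its endpoints, removing them from the group. Second, let $P_{t'}$ be the full group as it stood when its palettes were drawn at $t'$. Then $S_{t'}\subseteq P_{t'}$. The set $P_{t'}$ and the graph $G_{t'}$ are fixed before the palettes are drawn, and $G_{t'}[P_{t'}]$ has maximum degree at most $\Delta$. Theorem~\ref{thm:palette-sparsification}, applied with the $\Delta+1$ colors of the chosen block, therefore gives w.h.p. a proper list coloring of $G_{t'}[P_{t'}]$. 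Restricting this coloring to the adaptively shrunk subset $S_{t'}$ keeps it proper. A union bound over the $T=\mathrm{poly}(n)$ time steps completes the argument.

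The main obstacle is exactly this adaptivity issue: the adversary sees the palettes and then chooses both which vertices leave the group and how the graph evolves. We must make sure that the only randomness used, namely the palettes of group $P_{t'}$, is drawn for a graph and vertex set fixed beforehand. The "resample both endpoints on every edge change" rule together with closure of list-colorability under taking induced subgraphs is what resolves this. The block bookkeeping must also be done carefully within a single time step, including merges triggered inside the while loop, so that palettes are drawn only once the group's vertex set at $t'$ is final.
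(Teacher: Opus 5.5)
Your proof is correct and follows the paper's argument: palettes as objects, parameterized greedy temporal aggregation with $\alpha = n^{\varepsilon}$, one $(\Delta+1)$-color range per sampling time step, palette sparsification within each group, and the few-time-steps lemma plus the resample bound for the remaining bullets. The differences are minor: you also resample on deletions, whereas the paper resamples only on insertions since deletions only remove constraints; you recycle a fixed pool of $k+2$ blocks instead of drawing a fresh range each step; and your explicit treatment of adaptivity (drawing palettes for the fixed pair $(G_{t'},P_{t'})$ and restricting to the surviving subset) spells out what the paper leaves implicit.
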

\begin{proof}
    We begin by mapping a problem to our framework. For each vertex, its palette is treated as an object, and we say that the adversary resamples a palette of vertex $v$ iff it inserts an edge incident to $v$.

    When a subset of palettes is resampled, either by the adversary or the algorithm, at time $t$ we sample them from a fresh set of $\Delta + 1$ colors, which we call a \emph{range}. Sampling a palette from a range is picking $O(\log n)$ uniformly random colors from this range.
    
    This completes the description of the procedure. Let us now prove the stated properties.
    
    The fact that the graph is colorable at any time with high probability follows from two observations. First, vertices sampled at different time steps have distinct palettes since they are sampled from disjoint ranges, so no conflict is possible between two vertices sampled at different times no matter which colors they pick from their palettes. Second, vertices that are sampled at the same time induce a subgraph, which clearly has a max degree no more than $\Delta$, and hence is colorable with high probability with colors from palettes due to Theorem \ref{thm:palette-sparsification}.  
    
    Palettes have size $O(\log n)$ at any time by construction. 
    
    Lemma \ref{lem:parameterized ta few time steps} ensures that under parameterized temporal aggregation with parameter $n^\varepsilon$ at any point in time, realization of objects comes from at most $O(\frac{1}{\varepsilon})$ time steps, meaning at any point in time at most $O(\frac{1}{\varepsilon})$ ranges are used. The total number of colors potentially used at time $t$ is the number of colors in each range times the number of ranges, that is $(\Delta + 1) \cdot O(\frac{1}{\varepsilon})= O(\frac{1}{\varepsilon}\cdot \Delta)$. 
    
    Finally, Theorem \ref{thm:parameterized temporal aggregation} ensures that if adversary causes $q'$ object resamples, then with parameter $n^\varepsilon$, the parameterized temporal aggregation does at most $O(q' \cdot n^\varepsilon \cdot \log n)$ object resamples, and if in our context an adversary inserts no more than $q$ edges, it causes no more than $q' \leq 2q$ object resamples so the theorem claim follows.  
\end{proof}

A major consequence of the palette-sparsification phenomenon of Assadi, Chen, and Khanna is that it enables \emph{sublinear-time} randomized algorithms for $(\Delta+1)$-vertex coloring. In particular, they showed that one can compute a proper $(\Delta+1)$-coloring in time that is independent of the number of edges.

\begin{theorem}[Sublinear Coloring~\cite{assadi2019sublinear}]
\label{thm:sublinear coloring}
Let $G$ be an $n$-vertex graph with maximum degree $\Delta$. There exists a randomized algorithm that, with high probability, computes a proper $(\Delta+1)$-coloring of $G$ in $\widetilde{O}(n\sqrt{n})$ time.
\end{theorem}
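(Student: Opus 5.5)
The plan is to derive the sublinear algorithm from the Palette-Sparsification Theorem (Theorem~\ref{thm:palette-sparsification}). The key point is that only \emph{conflicting} edges matter once palettes are fixed, namely edges whose endpoints share a color in their lists, and these can be found without reading the whole graph. First, each vertex $v$ independently samples a list $L(v)$ of $O(\log n)$ colors from $\{1,\ldots,\Delta+1\}$. An edge $\{u,v\}$ becomes conflicting with probability at most $O(\log^2 n)/(\Delta+1)$. Since $G$ has at most $n\Delta/2$ edges, the conflict graph $G_c$ has $O(n\log^2 n)$ edges in expectation, and w.h.p.\ by a Chernoff-type argument. Any proper coloring of $G_c$ that respects the lists is also a proper coloring of $G$, because non-conflicting edges can never be monochromatic. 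So it suffices to (i) construct $G_c$ quickly and (ii) list-color $G_c$.

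For step (i) I would use two query strategies, assuming the standard model with both adjacency-list and adjacency-matrix access, and pick one according to $\Delta$.
\begin{itemize}
\item If $\Delta\le\sqrt n$: scan the adjacency list of every vertex and keep an edge iff its endpoint lists intersect. Using a hash of each list, this costs $O(n\Delta\log n)=\widetilde O(n\sqrt n)$.
\item If $\Delta>\sqrt n$: for every color $c$ form the bucket $V_c=\{v: c\in L(v)\}$, whose size is $O(n\log n/\Delta)$ w.h.p. Then query every pair inside each bucket via the adjacency matrix. The total cost is
\[
\sum_c |V_c|^2 = O\!\left((\Delta+1)\cdot \frac{n^2\log^2 n}{\Delta^2}\right)=\widetilde O\!\left(\frac{n^2}{\Delta}\right)=\widetilde O(n\sqrt n).
\]
\end{itemize}
In both cases we recover exactly $G_c$ in $\widetilde O(n\sqrt n)$ time.

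For step (ii), Theorem~\ref{thm:palette-sparsification} guarantees that w.h.p.\ a valid list coloring of $G_c$ exists. What remains is to find one in time near-linear in $|E(G_c)|=\widetilde O(n)$. This is the step I expect to be the main obstacle, since list coloring is NP-hard in general, so the existence statement alone does not suffice. I would make the proof of Theorem~\ref{thm:palette-sparsification} constructive. That proof splits $G$, via a sparse--dense decomposition, into sparse vertices and $O(\Delta)$-size almost-cliques. Sparse vertices have enough slack that a randomized greedy, or a few rounds of random color trials followed by greedy completion, succeeds using only conflict edges. Each almost-clique is colored by computing a matching between its vertices and colors in its lists, which exists by the Hall-type argument in the existence proof and takes time near-linear in the conflict edges inside the clique.

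I would also show that the decomposition itself can be computed approximately in $\widetilde O(n\sqrt n)$ time by sampling neighborhoods, using the same two-regime query strategy as in step (i). Summing the costs of steps (i) and (ii) gives the claimed $\widetilde O(n\sqrt n)$ bound w.h.p.
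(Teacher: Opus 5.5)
The paper gives no proof of this theorem. It is imported as a black box from Assadi, Chen and Khanna~\cite{assadi2019sublinear}, and your outline is essentially their argument. Step (i) is correct as written: conflict degrees are $O(\log^2 n)$ w.h.p.\ by Chernoff after conditioning on $L(u)$, and the two query regimes cost $O(n\Delta\log n)$ and $\widetilde O(n^2/\Delta)$, which balance at $\Delta=\sqrt n$. You also correctly note that the existential statement of Theorem~\ref{thm:palette-sparsification} is not enough, so its proof has to be made algorithmic. For sparse vertices, the color-trial-then-greedy variant is the one the argument supports.

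The step that fails as you wrote it is the almost-clique step. An almost-clique $C$ can have more than $\Delta+1$ vertices. For example, $K_{(1+\delta)\Delta}$ minus a $\delta\Delta$-regular graph has maximum degree $\Delta-1$ and, for small $\delta$, is an almost-clique. Vertices of $C$ also lose colors to already-colored neighbors outside $C$. So an injective assignment of list colors to all of $C$ need not exist, and no Hall-type argument can produce one.

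The missing idea is to first create slack with a \emph{colorful matching}: pairs of non-adjacent vertices of $C$ that receive a common color from both their lists. This is still implementable from your data, because two vertices with intersecting lists are adjacent in $G$ iff they are adjacent in $G_c$. Only the vertices left after these pairs are matched to distinct available colors, and for them the random palette graph has a matching saturating them w.h.p.

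Finally, that matching step is not near-linear. Hopcroft--Karp costs $O(|C|^{1.5}\log n)$ per almost-clique, i.e.\ $\widetilde O(n\sqrt{\Delta})$ in total. This still fits within the $\widetilde O(n\sqrt n)$ budget, so the claimed bound survives once the colorful-matching step is added.
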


Notably, whenever the input graph is sufficiently dense (e.g., $|E(G)| \gg n^{3/2}$), the algorithm runs in time sublinear in the input size.

Building on our framework, we further show how to \emph{maintain} a proper $O(\Delta)$ coloring under adaptive adversarial edge updates with sublinear average computation per update. More precisely, we obtain an amortized update cost of $\widetilde{O}(n^{1/2+\varepsilon})$, which is sublinear in the number of vertices.

\dynamicSublinearColoringThm*
\begin{proof}
    We frame the problem in our framework. In particular, a vertex is considered an object, and if an adversary inserts an edge $(u, v)$, objects $u$ and $v$ are considered adversarially resampled. We then use parameterized greedy temporal aggregation with parameter $n^\varepsilon$ on top of this. 

    When a subset of nodes is being resampled at time $t$, either adversarially or by the algorithm, these nodes induce a subgraph $S_t$. We launch an algorithm from Theorem \ref{thm:sublinear coloring} on $S_t$, giving it fresh $\Delta + 1$ colors to use. This completes the description of the algorithm. 

    Let us first show that this algorithm produces a proper coloring of $G$ at any time with high probability. By induction on $t$, we assume that $G_t \setminus S_t$ is properly colored, $S_t$ is properly colored whp by Theorem \ref{thm:sublinear coloring}, and there can not be conflict between two nodes in $G_t \setminus S_t$ and $S_t$ since $S_t$ uses a fresh set of $\Delta + 1$ colors. 

    The number of colors used at any time is bounded by $O(\frac{1}{\varepsilon}\cdot \Delta)$ since by Lemma \ref{lem:parameterized ta few time steps}, objects' realizations come from at most $O(\log_{n^\varepsilon}n) = O(\frac{1}{\varepsilon})$ different time steps, and for each time step we are using $\Delta + 1$ colors.

    As for the total computation, we first observe that the number of adversarially sampled objects is no more than $2q$ since each edge insertion only resamples two vertices. Now define $n_t = |V(S_t)|$ - the number of nodes being sampled at time $t$. By Theorem \ref{thm:sublinear coloring}, at time $t$, with high probability, the algorithm uses $\widetilde{O}(n_t\sqrt{n_t})$ computation, hence

    \begin{align*}
        \mathrm{Total\ Computation} &= \sum\limits_{t \in [T]}\widetilde{O}(n_t \sqrt{n_t})\\
        &\leq \sqrt{n}\cdot \sum\limits_{t \in [T]}\widetilde{O}(n_t)
    \end{align*}
    where the last inequality is because $\sqrt{n_t}\leq\sqrt{n}$.
    
    Furthermore, Theorem \ref{thm:parameterized temporal aggregation} gives us that $\sum\limits_{t \in [T]}n_t = O(q\cdot n^{\varepsilon} \cdot \log n)$. Thus we obtain
    \begin{align*}
        \mathrm{Total\ Computation} &= \sqrt{n}\cdot \widetilde{O}(q\cdot n^{\varepsilon}),
    \end{align*}
    which completes the proof.
\end{proof}

The palette-sparsification paradigm was subsequently strengthened by Alon and Assadi~\cite{beyondPalette} in the special case of \emph{triangle-free} graphs. Unlike the general setting—where $\Delta+1$ colors are always sufficient and essentially unavoidable—triangle-free graphs can be colored using asymptotically fewer than $\Delta$ colors. Alon and Assadi showed that this improved color bound is still compatible with sparsified random per-vertex palettes.

\begin{theorem}[Palette-Sparsification Theorem of \cite{beyondPalette}]
    \label{thm:palette-sparsification triangle free}
    Let $G(V, E)$ be an $n$-vertex triangle-free graph with maximum degree $\Delta$. Let $\gamma \in (0, 1)$. Suppose for any vertex $v \in V(G)$, we sample $O(\Delta^\gamma + \sqrt{\log n})$ colors $L(v)$ from a set of $O(\frac{\Delta}{\gamma\cdot\ln \Delta})$ colors independently and uniformly at random. Then, with high probability, there exists a proper coloring of $G$ in which the color for every vertex $v$ is chosen from $L(v)$.
\end{theorem}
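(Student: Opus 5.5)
This theorem is imported from \cite{beyondPalette}, so within the paper it is used as a black box. If I had to prove it myself, I would follow the semi-random ("nibble") method behind the Johansson--Molloy bound for triangle-free graphs, run only inside the sampled lists. Let $C$ be the global palette, of size $c\cdot\frac{\Delta}{\gamma\ln\Delta}$, and let each $L(v)$ have size $s=\Theta(\Delta^\gamma+\sqrt{\log n})$. The key quantity is $s/|C|$, the probability that a fixed color lies in a fixed list. Since $\deg(v)\le\Delta$, a fixed color $c\in L(v)$ appears in the lists of only about
\begin{gather*}
\Delta\cdot\frac{s}{|C|}\approx s\gamma\ln\Delta
\end{gather*}
neighbors of $v$ in expectation. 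So in the \emph{list-restricted conflict graph} (the graph on pairs $(v,c)$ with $c\in L(v)$), every relevant degree is polylogarithmic in $\Delta$ times $s$. By a Chernoff bound and a union bound over the $n$ vertices, this holds with high probability.

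The first phase is an iterative partial coloring. In each round, every uncolored vertex $v$ activates with a small probability and proposes a uniformly random color from its current available list $A(v)\subseteq L(v)$. It keeps the color if no neighbor proposes the same color. Otherwise it uncolors, and we apply the usual equalizing coin flips so that colors are removed from $A(v)$ with uniform probability. I would track two parameters: $a(v)=|A(v)|$, and for each $c\in A(v)$ the color degree $d_c(v)$, the number of uncolored neighbors $u$ with $c\in A(u)$. Triangle-freeness is used here exactly as in Johansson's argument: the neighbors of $v$ form an independent set, so their choices are independent and do not conflict with each other. This gives the favorable drift: $a(v)$ shrinks by a factor $e^{-\Theta(\text{rate}\cdot d/a)}$ while $d_c(v)$ shrinks strictly faster, roughly by an extra $\ln$ factor.

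Iterating for $O(\ln\Delta\cdot\text{polylog})$ rounds, I would aim for the invariant
\begin{gather*}
d_c(v)\le \frac{a(v)}{8e}\quad\text{for all } c\in A(v).
\end{gather*}
The size $|C|=\Theta(\Delta/(\gamma\ln\Delta))$ is calibrated so that this crossover happens before $a(v)$ falls below $\Delta^{\Omega(\gamma)}$. After that, the second phase is a direct application of the Lovász Local Lemma to the standard finishing step (Reed's/Haxell's lemma): give each vertex a uniformly random color from $A(v)$. Each bad event "$u$ and $v$ pick the same $c$" has probability $1/(a(u)a(v))$ and depends on few others, and the ratio $d_c/a\le 1/(8e)$ makes the LLL condition hold.

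The main obstacle is concentration in each round of the first phase, because the lists are tiny. The quantities $a(v)$ and $d_c(v)$ start at $\Delta^\gamma$ and $\Delta^\gamma\ln\Delta$ respectively, not at $\Delta$, so the usual Talagrand-based per-round error bounds of the form $\sqrt{\Delta}$ are far too crude. I would first try applying Talagrand's inequality (or Kim--Vu) to the list-restricted conflict graph, where every degree is $\mathrm{poly}(s)$. That would make the per-round errors $\tilde O(\sqrt{s})$, which compound to a $1+o(1)$ factor over the polylogarithmically many rounds. Requiring failure probabilities small enough for a union bound over $n$ vertices is where the additive $\sqrt{\log n}$ term in $s$ is needed. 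Tuning the constants so that the crossover time and the floor $\Delta^{\Omega(\gamma)}$ are compatible is the delicate calculation, and it is the source of the $1/\gamma$ dependence in $|C|$.
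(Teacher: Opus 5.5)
\textbf{Verdict: the sketch has a genuine gap in the regime the $\sqrt{\log n}$ term is meant to cover.} The paper does not prove this statement; it imports it from the cited work, so there is no internal argument to compare against.

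\textbf{Where it fails.} You claim that Chernoff plus a union bound over the $n$ vertices makes every color-degree $d_c(v)$ close to its mean $\mu\le s\gamma\ln\Delta/c$, where $|C|=c\Delta/(\gamma\ln\Delta)$. That needs per-pair failure probability $1/\mathrm{poly}(n)$, hence $\mu=\Omega(\log n)$. This fails when $\Delta^\gamma\ll\sqrt{\log n}$ and $s=\Theta(\sqrt{\log n})$.
\begin{itemize}
\item Concretely, take $\gamma=1/4$, $\Delta=(\log n)^{3/2}$, and a disjoint union of $\Theta(n/\Delta)$ copies of $K_{\Delta,\Delta}$. Color-degrees in different copies are independent.
\item Then w.h.p.\ some pair has $d_c(v)=\Omega(\log n/\log\log n)$. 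Your nibble needs every $d_c(v)=O(s\ln s)=O(\sqrt{\log n}\,\log\log n)$ at the start, so the invariant breaks at round zero.
\item Your explanation of the $\sqrt{\log n}$ term also does not hold. Union-bounding Talagrand tails $e^{-\Omega(\lambda^2)}$ over $n$ vertices forces per-round deviations of order $\sqrt{s\log n}$. These are $o(s)$ only if $s\gg\log n$.
\end{itemize}
As written, the argument gives at best $s=\Theta(\Delta^\gamma+\log n)$. Separately, the coloring phase is an existence statement for fixed lists. As in Johansson's and Molloy's proofs, it should be driven by the LLL, where failure probabilities only need to beat a local $\mathrm{poly}(\Delta)$ dependency degree, not $1/\mathrm{poly}(n)$.

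\textbf{The missing idea: tolerate a few bad colors per vertex instead of excluding them globally.}
\begin{itemize}
\item Set $t=2e\max\{s,\ s\gamma\ln\Delta/c\}\ge 2e\mu$. Call $c\in L(v)$ bad if $d_c(v)>t$, so that $\Pr[c\text{ bad}\mid L(v)]\le 2^{-t}$.
\item Given $L(v)$, the indicators $\mathbf{1}[c\in L(u)]$ are independent over $u\in N(v)$ and negatively associated within each list. So the bad events for distinct colors of $L(v)$ are negatively correlated, and
\begin{gather*}
\Pr\bigl[\text{at least } s/2 \text{ colors of } L(v) \text{ are bad}\bigr]\le 2^{s}\cdot 2^{-ts/2}\le 2^{-\Omega(s^2)}.
\end{gather*}
This is at most $n^{-3}$ once $s\ge C\sqrt{\log n}$.
\item That is where $\sqrt{\log n}$ really enters: a vertex fails only if $\Omega(s)$ of its colors are simultaneously bad, each with probability $2^{-\Omega(s)}$.
\item Deleting bad colors leaves lists of size at least $s/2$ with all color-degrees at most $t$, since deletions only lower other vertices' color-degrees.
\item Since $\ln t\ge\ln s\ge\gamma\ln\Delta$, you get $t/\ln t=O(s/c+s/\ln s)$. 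This is where $s\ge\Delta^\gamma$ and the $1/\gamma$ in $|C|$ are used.
\end{itemize}
From there you can run your nibble with LLL-based rounds. Alternatively, apply a known color-degree version of Johansson's theorem instead of rebuilding the nibble. It says that in a triangle-free graph, lists of size at least $K\,t/\ln t$ with all color-degrees at most $t$ (for $t$ large and an absolute constant $K$) admit a proper list coloring.
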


As in \cite{assadi2019sublinear}, this insight can be used to obtain a sublinear algorithm for coloring of triangle-free graphs.

\begin{theorem}[Sublinear Triangle-Free Coloring of \cite{beyondPalette}]
    \label{thm:triangle-free coloring sublinear alg}
    Let $\gamma \in (0, 1)$. There exists a randomized algorithm that, with high probability, properly colors a triangle-free graph on $n$ vertices with maximum degree $\Delta$ into $O(\frac{\Delta}{\gamma \cdot \log \Delta})$ colors in time $\widetilde{O}(n^{3/2 + \gamma})$.
\end{theorem}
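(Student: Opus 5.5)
The plan is to run the Assadi--Chen--Khanna template: sparsify the palettes, build the resulting conflict graph in sublinear time, and list-color that graph. Fix $\gamma$ and set $\gamma'=\gamma/2$. For every vertex $v$, sample a list $L(v)$ of $s=O(\Delta^{\gamma'}+\sqrt{\log n})$ colors uniformly from $C=O(\frac{\Delta}{\gamma'\ln\Delta})=O(\frac{\Delta}{\gamma\ln\Delta})$ colors. By Theorem~\ref{thm:palette-sparsification triangle free} with parameter $\gamma'$, with high probability $G$ has a proper coloring in which each $v$ uses a color from $L(v)$. Such a coloring only has to avoid clashes on edges whose endpoints share a list color. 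So it suffices to build the \emph{conflict graph} $H$ on $V$, whose edges are the edges $(u,v)\in E(G)$ with $L(u)\cap L(v)\neq\emptyset$, and then find an $L$-list-coloring of $H$.

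\textbf{Bounding $H$ and building it.} A fixed edge becomes a conflict edge with probability at most $s^2/C$. Hence $\E|E(H)|\le n\Delta s^2/C=\widetilde{O}(n\,\Delta^{2\gamma'})=\widetilde{O}(n^{1+\gamma})$, and a Chernoff-type bound makes this hold with high probability. To construct $H$ we split on $\Delta$.
\begin{itemize}
\item If $\Delta\le\sqrt n$, we scan all adjacency lists in $O(n\Delta)=O(n^{3/2})$ time and test each edge for a common list color.
\item If $\Delta>\sqrt n$, then for each color $c$ we form the class $V_c=\{v: c\in L(v)\}$ and query the adjacency matrix on all pairs inside $V_c$. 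Each class has size $\widetilde{O}(ns/C)$ with high probability, so the total work is $C\cdot\widetilde{O}\big((ns/C)^2\big)=\widetilde{O}(n^2s^2/C)$. Using $C=\widetilde{\Omega}(\Delta)\ge\widetilde\Omega(\sqrt n)$ and $s^2=\widetilde O(\Delta^{\gamma})\le\widetilde O(n^{\gamma})$, this is $\widetilde{O}(n^{3/2+\gamma})$.
\end{itemize}
In both cases $H$ is obtained in $\widetilde{O}(n^{3/2+\gamma})$ time.

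\textbf{Coloring $H$.} It remains to compute an $L$-coloring of $H$ in time near-linear in $|E(H)|+\sum_v|L(v)|=\widetilde{O}(n^{1+\gamma})$. This is the main obstacle. Theorem~\ref{thm:palette-sparsification triangle free} only asserts that such a coloring exists, and list coloring is hard in general. I would first check that the Alon--Assadi existence proof is algorithmic. Such proofs are typically a semi-random (nibble) procedure: repeatedly assign tentative colors from the lists, keep the non-conflicting ones, remove the used colors from neighbors' lists, and finish with a greedy step once lists dominate the residual degrees. Each round only inspects edges of $H$, because non-conflict edges can never cause a clash. It therefore runs in $\widetilde{O}(|E(H)|)$ time per round, over polylogarithmically many rounds, with the same high-probability guarantee. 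If the proof instead relies on a non-constructive step, I would replace that step with its algorithmic Lov\'asz Local Lemma version (Moser--Tardos), whose expected resampling cost is again near-linear in $|H|$.

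Combining the three stages gives total time $\widetilde{O}(n^{3/2+\gamma})$. The number of colors is $O(\frac{\Delta}{\gamma\log\Delta})$, and the result holds with high probability after a union bound over the bad events: existence of the list coloring, size of $H$, sizes of the color classes, and success of the coloring procedure.
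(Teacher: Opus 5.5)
The paper does not prove this statement; it imports it as a black box from \cite{beyondPalette}, so the question is whether your argument stands on its own. Your first two stages do. The reduction to $L$-coloring the conflict graph $H$ is correct. The reparametrization $\gamma'=\gamma/2$, which gives $s^2=O(\Delta^{\gamma}+\log n)$, is what makes the pair-query cost $C\cdot\widetilde O((ns/C)^2)=\widetilde O(n^2s^2/C)$ fit into $\widetilde O(n^{3/2+\gamma})$ for every $\gamma\in(0,1)$ when $\Delta>\sqrt n$. The $O(s)$ cost per edge test when $\Delta\le\sqrt n$ is also within budget.

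The gap is the third stage, which carries all of the algorithmic content of the statement. Theorem~\ref{thm:palette-sparsification triangle free} is purely existential, and list coloring is NP-hard in general. So knowing that an $L$-coloring of $H$ exists gives no procedure for finding one, let alone in $\widetilde O(n^{3/2+\gamma})$ time. What you write there (``I would first check that the Alon--Assadi proof is algorithmic \ldots\ otherwise use Moser--Tardos'') is a contingency plan, not a proof.

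Concretely, you never fix which list-coloring result you would run, so you never verify its hypotheses.
\begin{itemize}
\item \textbf{Missing hypothesis check.} Results giving $O(d/\ln d)$-type list colorings of triangle-free graphs are phrased in terms of \emph{color-degrees}: for $c\in L(v)$, the number of neighbours $u$ with $c\in L(u)$. Here that is about $d\approx\Delta s/C=\Theta(s\,\gamma\ln\Delta)$. They also need a ratio condition like $|L(v)|\ge K\,d/\ln d$. Verifying this is exactly where $s\gtrsim\Delta^{\gamma'}$ (so that $\ln d\gtrsim\gamma'\ln\Delta$) and $C=\Theta(\Delta/(\gamma'\ln\Delta))$ are used, and it is absent.
\item \textbf{Probability does not transfer.} The high-probability guarantee of the existence theorem is over the sampling of the lists, not over any coloring procedure. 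So ``the same high-probability guarantee'' does not carry over to your nibble.
\item \textbf{Small color-degree regime.} When $s$ is dominated by $\sqrt{\log n}$, the color-degrees are only $O(\sqrt{\log n}\,\ln\log n)=o(\log n)$. Then the per-round concentration of list sizes and color-degrees cannot be union-bounded over all $n$ vertices, and some LLL-type step is needed in every round.
\item \textbf{Unsupported running time.} Saying Moser--Tardos then costs near-linear time in $|H|$ skips the accounting that decides the matter. You would need the number and dependency structure of the bad events per round, the cost of detecting violated events, and how these compose over all rounds.
\end{itemize}

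To close the gap, you must either cite a constructive list-coloring algorithm for triangle-free graphs under such a color-degree condition, or prove one. Its running time must be explicit and fit the budget, e.g.\ $\widetilde O(|E(H)|)$ per round over polylogarithmically many rounds. You must also verify its hypotheses for the randomly sampled lists. This algorithmic list-coloring step is precisely what the cited theorem supplies beyond palette sparsification.
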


Our framework naturally supports ``dynamizing'' results of this type, extending them to the fully dynamic setting. We show that $O(\frac{\Delta}{\ln \Delta})$ palette sparsification \emph{and} sublinear coloring algorithm are both possible against an adaptive adversary.  

\begin{restatable}{theorem}{triangleFreeDynamicPaletteThm}
    \label{thm:dynamic tringle-free palettes}
    Consider a sequence of graphs on $n$ vertices $G_1, \ldots, G_T$ where $G_{t+1}$ is obtained from $G_t$ through adaptive adversarial edge deletions and insertions. Each $G_t$ is guaranteed to be triangle-free; furthermore, there exists an integer $\Delta$ such that for every $t \in [T]$, the maximum degree in $G_t$ is no more than $\Delta$. Then, for any $\gamma, \varepsilon \in (0, 1)$, using parameterized greedy temporal aggregation with parameter $n^{\varepsilon}$, for every vertex $v \in V(G)$ and time $t \in [T]$, we can maintain a palette $L_t(v)$ such that 
    \begin{itemize}
        \item For every $t \in [T]$, $G_t$ is properly colorable with high probability with each $v \in V(G)$ receiving a color from $L_t(v)$
        \item Each $L_t(v)$ is of size $O(\Delta^\gamma + \sqrt{\log n})$
        \item For every $t \in [T]$, the total number of colors used, i.e., $|\bigcup\limits_{v \in V(G)}L_t(v)|$, is $O(\frac{1}{\varepsilon}\cdot \frac{\Delta}{\gamma\cdot \ln \Delta})$ 
        \item If the adversary performs $q$ edge insertions and deletions, the algorithm does $O(q\cdot n^\varepsilon\cdot \log n)$ palette resamples.
    \end{itemize}
\end{restatable}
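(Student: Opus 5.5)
The plan is to copy the proof of Theorem~\ref{thm:dynamic palettes} almost line by line, replacing Theorem~\ref{thm:palette-sparsification} with Theorem~\ref{thm:palette-sparsification triangle free}.

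\textbf{Reduction to the framework.} Each vertex's palette is an object. An insertion of edge $(u,v)$ counts as an adversarial resample of the palettes of $u$ and $v$. Deletions trigger nothing, since removing edges cannot destroy a list coloring. On top of this we run parameterized greedy temporal aggregation with parameter $\alpha = n^{\varepsilon}$. Whenever a set of palettes is (re)sampled at time $t$, adversarially or by the algorithm, we allocate a fresh \emph{range} of $C = O(\frac{\Delta}{\gamma \ln \Delta})$ colors, disjoint from all ranges currently in use. Each palette is then drawn as $O(\Delta^\gamma + \sqrt{\log n})$ independent uniform colors from that range. The palette-size bullet then holds by construction.

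\textbf{Colorability.} Fix a time $t$ and partition the vertices by the time step their palette comes from. Two vertices in different classes have palettes from disjoint ranges, so an edge between them can never be monochromatic. It therefore suffices to list-color each class separately. Each class $V_{t'}$ induces the subgraph $G_t[V_{t'}]$, which is triangle-free with maximum degree at most $\Delta$, because it is a subgraph of $G_t$.

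The subtle point is that the palettes of $V_{t'}$ were sampled at time $t'$, while the graph is the later $G_t$. I handle this exactly as the paper implicitly does in Theorem~\ref{thm:dynamic palettes}. Any edge present in $G_t[V_{t'}]$ that was not present at time $t'$ must have been inserted after $t'$. That insertion would have resampled both endpoints, moving them out of class $V_{t'}$. Hence $G_t[V_{t'}]$ is a subgraph of $G_{t'}[V_{t'}]$. This graph was fixed, adaptively but before the coins of time $t'$ were drawn, so Theorem~\ref{thm:palette-sparsification triangle free} applies to it with high probability, and list-colorability passes to subgraphs.

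To get a simultaneous statement, I union bound over all $O(\frac1\varepsilon)$ classes present at time $t$, and more crudely over all $T$ sampling events, assuming $T$ is polynomial in $n$, as for the ``w.h.p.''\ claims elsewhere in the paper.

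\textbf{Color count and resamples.} By Lemma~\ref{lem:parameterized ta few time steps}, with $\alpha = n^\varepsilon$, at any time the realizations come from at most $O(\log_{n^\varepsilon} n) = O(\frac1\varepsilon)$ time steps. Hence at most $O(\frac1\varepsilon)$ ranges are live, and the total number of colors is $O(\frac1\varepsilon \cdot \frac{\Delta}{\gamma\ln\Delta})$. Ranges of retired time steps can be recycled, so fresh colors never need to exceed this count.

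For the resample bound, $q$ insertions and deletions cause at most $q' \le 2q$ adversarial object samples. Theorem~\ref{thm:parameterized temporal aggregation} then bounds the algorithm's resamples by $O(q' \alpha \log n) = O(q\, n^\varepsilon \log n)$.

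\textbf{Main obstacle.} The delicate step is justifying that Theorem~\ref{thm:palette-sparsification triangle free}, a statement about a fixed graph, applies against an adaptive adversary. The argument above takes care of it. Palettes of a class are used only on a subgraph of the graph that existed when they were drawn, because every later insertion touching the class forces a resample of its endpoints.
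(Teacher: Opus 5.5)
Your proposal is correct and follows the paper's proof essentially step for step. The shared steps are: palettes as objects with insertions counted as adversarial resamples; a fresh color range per sampling time, so edges between classes never conflict; Theorem~\ref{thm:palette-sparsification triangle free} applied to each same-time class; the $O(\log_{n^\varepsilon} n)=O(1/\varepsilon)$ bound on live time steps for the color count; and Theorem~\ref{thm:parameterized temporal aggregation} with $q'\le 2q$ for the resample count. Your added observation that $G_t[V_{t'}]\subseteq G_{t'}[V_{t'}]$, because every later insertion resamples both endpoints, makes explicit the step that justifies using the static theorem against an adaptive adversary, which the paper only asserts.
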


\dynamicSublinearTriangleFreeColoringThm*

The proofs follow those of Theorems \ref{thm:dynamic palettes} and \ref{thm:dynamic sublinear coloring} closely and are dedicated to Appendix \ref{sec:appendix:proofs:dynamic palettes}.

\subsection{Simple Dynamic Spanners~\cite{bhattacharya2022simple} via Temporal Aggregation}

Before proceeding, we recommend that the reader review the notation and definitions from \cite{bhattacharya2022simple}, which are summarized in Section~\ref{sec:job-machine setting}.

In the job-machine setting, when a machine gets deleted, all the jobs that were using it must be resampled. Hence, the total recourse of the algorithm is defined as
\begin{gather*}
    \sum\limits_{t\leq T}load^t(x^t),
\end{gather*}
and this quantity is stated to be low in \cite{bhattacharya2022simple}, namely, bounded by the number of jobs times some logarithmic factors.

The central argument supporting this statement in \cite{bhattacharya2022simple} is \Cref{clm:incorrectJobMaschieneLoad}. It states that, under proactive resampling, for every time step $t$ and machine $x$, $load^t(x) = \widetilde{O}(target^t(x))$. Then, the argument proceeds as follows:
\begin{gather*}
    \sum\limits_{t\leq T}load^t(x^t) = \widetilde{O}(\sum\limits_{t\leq T}target^t(x^t)) = \widetilde{O}(|J|).
\end{gather*}

As demonstrated in Section~\ref{sec:temporal selection}, \Cref{clm:incorrectJobMaschieneLoad} does not hold under proactive resampling, leaving a gap in the proof.

In this section, we show that replacing proactive resampling with greedy temporal aggregation yields a streamlined, self-contained proof of the main result of~\cite{bhattacharya2022simple}. We note, however, that alternative proof also appears in~\cite{fine2025minimizing}. Accordingly, the purpose of this section is not to present a new 
$3$-spanner algorithm, but rather to illustrate how our framework can be applied to the job–machine setting.

To recover the main result of~\cite{bhattacharya2022simple}, we make two key modifications to the original analysis.

First, we substitute proactive resampling with greedy temporal aggregation, which yields a weaker but sufficient guarantee: competitiveness not with the instantaneous target $target^t(x)$, but with the maximal historical one:
\begin{gather*}
    load^t(x) = \widetilde{O}(\max\limits_{t' \leq t}target^{t'}(x)).
\end{gather*} 
Second, to make this relaxation sufficient for the proof, we rely on the additional assumption that each routine contains at most two machines—an assumption that indeed holds in the specific setting of \cite{bhattacharya2022simple}. This yields the following sequence
\begin{gather*}
    \sum\limits_{t\leq T}load^t(x^t) = \widetilde{O}(\sum\limits_{t\leq T}\max\limits_{t' \leq t}target^t(x^t)) = \widetilde{O}(|J|).
\end{gather*}

The proof consists of two major parts. First, through greedy temporal aggregation, we show that the load at time $t$ is competitive with the maximum historical load:
\begin{lemma}
    \label{lem:loadWithTargetBound}
    With high probability, for any time $t \leq T$
    \begin{gather*}
        load^t(x^t) = O(\log |J| \cdot \log T) \cdot \max\limits_{t'\leq t}target^{t'}(x_t) + O(\log |J| \cdot \log T).
    \end{gather*}
\end{lemma}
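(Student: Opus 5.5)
Under greedy temporal aggregation, the realizations at time $t$ come from at most $O(\log |J|)$ distinct time steps; this is Lemma~\ref{lem:greedy temporal aggregation few time steps} with $n=|J|$ objects. The plan is to splice this with a high-probability per-step load bound that is uniform over all (machine, time) pairs. The load function is $f_x(R_1,\dots,R_{|J|}) = \#\{j : x \in R_j\}$. It is a count of indicators, so a splice $z$ of $x$ and $y$ satisfies $f_x(z)\le f_x(x)+f_x(y)$, and Lemma~\ref{lem:compose loads} applies.

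The key point is that the jobs realized at time $t$ that come from time step $t'$ were all sampled at $t'$, and they were sampled simultaneously and independently. Their realizations therefore coincide with the corresponding coordinates of a static realization at time $t'$. The contribution of that group to $load^t(x)$ is at most the load of $x$ in the full static realization at $t'$ (restricted to jobs alive in the group). That load is a sum of independent indicators with mean $target^{t'}(x)\le \max_{t''\le t}target^{t''}(x)$.

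Strictly, a job may be sampled at $t'$ and later lose some routine. This never increases the count on a surviving machine $x$, and if $x$ itself is deleted its load is irrelevant afterward. For the quantity $load^t(x^t)$, $x^t$ is still present at time $t$, so every counted job indeed uses a routine that existed at $t'$.

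The probabilistic step is a Chernoff bound. For fixed $t'$ and $x$, the static load at $t'$ is at most $O(\log T)\cdot target^{t'}(x) + O(\log T)$ with probability $1-T^{-c}$ (taking $\log T\ge\log|J|$; otherwise use $\log(|J|T)$). The subtlety is that the adversary is adaptive, so the distributions at $t'$ depend on the past. However, conditioned on $\mathcal{H}^{t'-1}$ the samples at $t'$ are independent, so the Chernoff bound holds conditionally and hence unconditionally. This is exactly the table-game view of Theorem~\ref{thm:greedy temporal aggregation table game}: the adversary picks each column's distributions before seeing that column's outcomes. Next, union bound over all $t'\le T$ and all machines $x$. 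There are at most $T\le|M|$ deletion steps, and if machines outnumber steps, only the polynomially many machines ever deleted matter. This gives an event $\mathcal{E}$ of high probability on which every column has every machine's load within the bound.

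On $\mathcal{E}$, for each $t$ the joint realization at $t$ is spliced from at most $O(\log|J|)$ columns $t'\le t$. By Lemma~\ref{lem:compose loads},
\[
load^t(x^t)\le O(\log|J|)\cdot\Bigl(O(\log T)\max_{t'\le t}target^{t'}(x^t)+O(\log T)\Bigr),
\]
which is the claimed bound.

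I expect the main obstacle to be the care needed in identifying the group from time $t'$ with a sub-splice of an honest static realization under adaptivity. The columns must be treated as full static samples, including jobs not actually sampled, which is legitimate by coupling. Handling the additive $O(\log T)$ term needs a Chernoff form that works for small means: $\Pr[X\ge C\log T\,(\mu+1)]\le T^{-c}$.
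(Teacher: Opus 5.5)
Your proposal is correct and is essentially the paper's proof: a per-column Chernoff bound $O(\log T)\cdot target^{\hat t}(x^t)+O(\log T)$, relaxed to the historical maximum of the target, then spliced over the $O(\log|J|)$ surviving time steps via Theorem~\ref{thm:greedy temporal aggregation main} (that is, Lemmas~\ref{lem:greedy temporal aggregation few time steps} and~\ref{lem:compose loads}). Your conditioning on $\mathcal{H}^{t'-1}$ and your union bound over all machines make explicit a point the paper glosses over, since the paper applies Chernoff to the adaptively chosen $x^t$ as if it were fixed; note, however, that your aside ``only the machines ever deleted matter'' cannot shrink that union bound, because which machines get deleted is decided adaptively after the columns are seen, so in general the Chernoff exponent should be $\log(|M|T)$ rather than $\log T$, which is immaterial when all quantities are polynomially related, as in the spanner application.
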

\begin{proof}
    We begin by mapping the problem to our framework. Each job is treated as an object, and its realization corresponds to the routine this job is assigned to. Whenever the adversary deletes a machine that a given job uses, the corresponding object is seen to be adversarially resampled. Resampling an object means picking uniformly at random one of its eligible routines.

    Now, fix a time $t$ and consider a machine deleted at that time -- $x^t$. For an assignment $A$ define a function $f(A)$ to be the number of routines in $A$ that contain $x^t$. Clearly, $f$ is a load function: if $\mathcal{R}_1$ and $\mathcal{R}_2$ are two joint realizations, and $\mathcal{R}_3$ is formed by choosing for each job either its realization from $\mathcal{R}_1$ or from $\mathcal{R}_2$, then the number of routines in $\mathcal{R}_3$ containing $x^t$ is at most the sum of those containing $x^t$ in $\mathcal{R}_1$ and $\mathcal{R}_2$.

    Consider the static distribution $S_{\hat{t}}$ at some time $\hat{t} \leq t$. Suppose we sample jobs according to $S_{\hat{t}}$, and let $X_i$ be an indicator variable that equals $1$ if the routine of the $i$-th job contains $x^t$, and $0$ otherwise. Then
    \begin{gather*}
        \E\left[\sum\limits_iX_i\right] = target^{\hat{t}}(x^t).
    \end{gather*}
    
    Moreover, since all $X_i$ are independent, by standard Chernoff bounds, with probability at least $1 - \frac{1}{T^{c+1}}$ for an arbitrary large $c$
    \begin{gather*}
        f(S_{\hat{t}}) = \sum\limits_iX_i = O(\log T)\cdot target^{\hat{t}}(x^t) + O(\log T).
    \end{gather*}
    And hence with probability at least $1 - \frac{1}{T^{c+1}}$, $f(S_{\hat{t}})$ possesses a bound independent of $\hat{t}$:
    \begin{gather*}
        f(S_{\hat{t}}) = O(\log T)\cdot \max\limits_{t'\leq t}target^{t'}(x^t) + O(\log T).
    \end{gather*}
    Thus by Theorem \ref{thm:greedy temporal aggregation main}, with probability at least $1 - \frac{1}{T^c}$
    \begin{gather*}
        load^t(x^t) = O(\log |J| \cdot \log T) \cdot \max\limits_{t'\leq t}target^{t'}(x_t) + O(\log |J| \cdot \log T).
    \end{gather*}
\end{proof}

This completes the application of greedy temporal aggregation, and the rest of the proof is a modification of the original proof of \cite{bhattacharya2022simple}. In particular, given Lemma~\ref{lem:loadWithTargetBound}, it is sufficient to give the bound on the sum of maximal historical target loads.

\begin{restatable}{lemma}{lowTargetSum}
    \label{lem:low target sum}
    Given that each routine contains at most $2$ machines,
    \begin{gather*}
        \sum\limits_{t \in [T]}\max\limits_{t'\leq t}target^{t'}(x^t) = O(\log \Delta) \cdot |J|.
    \end{gather*}
\end{restatable}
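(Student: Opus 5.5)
We plan to prove the bound job by job, via a charging argument. First expand the target, using that each deleted machine $x^t$ is deleted exactly once:
\begin{gather*}
    \max_{t'\leq t} target^{t'}(x^t) = \max_{t' \le t} \sum_{j\in J} \frac{|R_{t'}(x^t)\cap R_{t'}(j)|}{|R_{t'}(j)|} \le \sum_{j \in J} \max_{t'\le t}\frac{|R_{t'}(x^t)\cap R_{t'}(j)|}{|R_{t'}(j)|}.
\end{gather*}
It therefore suffices to show that, for every fixed job $j$,
\begin{gather*}
    \sum_{t\in[T]} \max_{t'\le t} \frac{|R_{t'}(x^t)\cap R_{t'}(j)|}{|R_{t'}(j)|} = O(\log \Delta).
\end{gather*}
Summing this over $j$ then gives the lemma.

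Fix $j$ and write $d_s = |R_s(j)|$. This is non-increasing in $s$ and satisfies $d_1\le\Delta$. Routines are only removed by machine deletions, and all routines of $j$ that contain $x^t$ disappear at time $t$. Hence $c_t := |R_t(x^t)\cap R_t(j)|$ routines are destroyed at step $t$, and $R_{t'}(x^t)\cap R_{t'}(j)$ only shrinks as $t'$ grows, so its size is at most the number of routines of $j$ containing $x^t$ at time $1$. The difficulty is that the maximum over $t'\le t$ may be attained at an early $t'$, when $d_{t'}$ was still large, rather than at $t$. We would bound each term by $\max_{t' \le t} |R_{t'}(x^t)\cap R_{t'}(j)|/d_{t'}$. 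Routines containing $x^t$ that are destroyed before time $t$ were destroyed by deleting some other machine; since a routine has at most two machines, such a routine is $\{j, x^t, y\}$ with $y = x^{s}$ for some $s<t$. We charge each routine $r$ of $j$ at most twice, once for each of its (at most two) machines. Concretely, $|R_{t'}(x^t)\cap R_{t'}(j)|/d_{t'} = \sum_{r} 1/d_{t'}$ over routines $r \ni x^t$ alive at $t'$, and each such $r$ dies at some time $s_r \le t$ with $d_{s_r}\le d_{t'}$.

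The main obstacle is that, because we take a maximum over $t'$, the weight $1/d_{t'}$ charged to $r$ is smaller than $1/d_{s_r}$, which looks favourable. However, $r$ can be charged by up to two machine deletions: its own death time $s_r$, and the later deletion of its other machine $x^t$, at which point $r$ is already dead but still counted in the maximum. We bound each charge to $r$ by $1/d_{\tau_r}$, where $\tau_r \le s_r$ is the last time $r$ was alive, i.e.\ $\tau_r = s_r$. For every term $t$ and $t'\le t$, every $r$ counted is alive at $t'$, and $t'\le s_r$, so $1/d_{t'} \le 1/d_{s_r}$. Since each routine contains at most two machines, it is counted in at most two terms $t$. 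Therefore the total is at most $2\sum_r 1/d_{s_r}$. Order the routines of $j$ by death time. When $r$ dies at time $s_r$ it is among the $d_{s_r}$ routines still alive, so the $k$-th-from-last routine to die has $d_{s_r}\ge k$. The sum is therefore at most a harmonic sum $\sum_{k\le\Delta}1/k = O(\log\Delta)$. One detail to check is that $j$ retains at least one routine throughout; otherwise $j$ contributes nothing further.
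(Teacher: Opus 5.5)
Your proof is correct and follows the paper's argument: swap the maximum and the sum to reduce to a per-job bound, note that each routine of $j$ is charged in at most two terms because it contains at most two machines (each deleted once), and finish with a harmonic sum of order $O(\log \Delta)$. The only difference is that you prove the nested-sets charging Lemma~\ref{lem:charging jobs} inline for this setting. You bound every charge to $r$ by $1/d_{s_r}$, the degree of $j$ when $r$ dies, and count by ordering routines by death time; the paper instead uses the minimum $|U_i|$ over the sets in which the element is charged and counts via nestedness.
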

\begin{proof}
We start by expanding the definition of the target load
\begin{align*}
    \sum\limits_{t \in [T]}\max\limits_{t'\leq t}target^{t'}(x^t) = \sum\limits_{t \in [T]}\max\limits_{t'\leq t}\sum\limits_{j \in J}\frac{|R_{t'}(j)\cap R_{t'}(x^t)|}{|R_{t'}(j)|}
\end{align*}
Swapping max and sum, we get 
\begin{align*}
    \sum\limits_{t \in [T]}\max\limits_{t'\leq t}target^{t'}(x^t) &\leq \sum\limits_{t \in [T]}\sum\limits_{j \in J}\max\limits_{t'\leq t}\frac{|R_{t'}(j)\cap R_{t'}(x^t)|}{|R_{t'}(j)|}\\
    &= \sum\limits_{j \in J}\sum\limits_{t \in [T]}\max\limits_{t'\leq t}\frac{|R_{t'}(j)\cap R_{t'}(x^t)|}{|R_{t'}(j)|}.
\end{align*}

We now bound the inner sum for any $j$. Intuitively, the $t$-th term in this inner sum is an upper bound on how much $j$ contributes to the load of $x^t$ at time $t$. This upper bound is obtained as a historical maximum of the expected load of $j$ on $x^t$, or in other words the expected load at time $t'(x^t)$ defined as $\arg\max\limits_{t'\leq t}\dfrac{|R_{t'}(j)\cap R_{t'}(x^t)|}{|R_{t'}(j)|}$. At that time, $j$ had a set of incident hyper edges that contained $x^t$, denote those $S_{t'(x^t)}$, and the expected load is a fraction of those over the total number of edges $j$ had at time $t'(x^t)$, denote those $U_{t'(x^t)}$. We are therefore interested in analyzing the sum
\begin{align*}
    \sum\limits_{t\in[T]}\frac{|S_{t'(x^t)}|}{|U_{t'(x^t)}|}.
\end{align*}
To do so, we make two observations. First, since the set of hyperedges of $j$ only undergoes deletions with time, the collection of sets $\mathcal{U} = \{U_{t'(x^t)}\}_{t\in[T]}$ can be ordered by inclusion. Second, since every hyperedge contains at most two machines, and a hyperedge $r$ appears in some $S_{t'(x^t)}$ only if $x^t \in r$, and each machine is deleted at most once, we deduce that every hyperedge can appear in at most two $S$-s. Now, letting $U = R_1(j)$, so that $|U| \leq \Delta$, we apply Lemma \ref{lem:charging jobs} (see below) to get
\begin{align*}
    \sum\limits_{t \in [T]}\max\limits_{t'\leq t}\frac{|R_{t'}(j)\cap R_{t'}(x^t)|}{|R_{t'}(j)|} = O(\log \Delta),
\end{align*}
which allows us to conclude that 
\begin{align*}
    \sum\limits_{t \in [T]}\max\limits_{t'\leq t}target^{t'}(x^t) = O(\log \Delta) \cdot |J|
\end{align*}
completing the proof.
\end{proof}

\begin{restatable}{lemma}{chargingJobs}
\label{lem:charging jobs}
Let $U$ be a set with $|U|=n$, and let $\mathcal{U}$ be a collection of $m$ nested subsets of $U$. Index subsets in $\mathcal{U}$ in a way that $U_1 \supseteq \ldots \supseteq U_m$, and for each $i \in [m]$, let $S_i$ be some subset of $U_i$. Also, assume that every $u\in U$ belongs to at most two of the sets $S_i$. Then,
\begin{align*}
    \max \sum_{i=1}^m \frac{|S_i|}{|U_i|} \;=\; O(\log n).
\end{align*}
\end{restatable}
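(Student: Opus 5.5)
The plan is a fractional charging argument that distributes each term $\frac{|S_i|}{|U_i|}$ over the elements of $S_i$. Write
\begin{align*}
    \sum_{i=1}^m \frac{|S_i|}{|U_i|} \;=\; \sum_{i=1}^m \sum_{u\in S_i} \frac{1}{|U_i|} \;=\; \sum_{u\in U} \sum_{i\,:\,u\in S_i} \frac{1}{|U_i|}.
\end{align*}
The task is then to bound $\sum_{u\in U}\sum_{i: u\in S_i} 1/|U_i|$. Each $u$ contributes at most two terms, because it lies in at most two of the $S_i$. Each such term is $1/|U_i|$ for some $U_i$ containing $u$, since $S_i\subseteq U_i$.

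The key step is to replace $|U_i|$ by a quantity that depends only on $u$. Let $i^*(u)$ be the largest index $i$ with $u\in U_i$, i.e.\ the smallest set in the chain that still contains $u$. If $u\in S_i$, then $u\in U_i$, so $i\le i^*(u)$ and nestedness gives $U_i\supseteq U_{i^*(u)}$. Hence $1/|U_i|\le 1/|U_{i^*(u)}|$. Setting $r(u)=|U_{i^*(u)}|$ for elements appearing in some $U_i$, we get
\begin{align*}
    \sum_{i=1}^m \frac{|S_i|}{|U_i|} \;\le\; 2\sum_{u} \frac{1}{r(u)}.
\end{align*}

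It remains to show $\sum_u 1/r(u) = O(\log n)$, a harmonic-sum bound. Order the elements of $U_1$ by the time they leave the chain: decreasing $r(u)$, with ties broken arbitrarily. Suppose $u$ is the $k$-th element counted from the end of this order. Every element with larger or equal index $i^*$ lies in $U_{i^*(u)}$, because the sets are nested. So $r(u)$ is at least the number of elements $w$ with $i^*(w)\ge i^*(u)$, which is at least $k$. Therefore $\sum_u 1/r(u)\le\sum_{k=1}^{n}1/k = H_n = O(\log n)$. Elements never lying in any $U_i$ are not in any $S_i$ and contribute nothing.

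The main point to get right is the second step: charging to the smallest containing set in the chain, not the actual $U_i$. This is legitimate only because the sets are nested, and it turns the problem into a harmonic sum. The final ranking argument needs care with ties, but the "at least $k$ elements survive" count handles them directly. The overall bound is $2H_n=O(\log n)$.
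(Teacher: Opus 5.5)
Your proof is correct and follows essentially the same route as the paper. Both arguments split the sum per element, bound each element's contribution by $2/r(u)$ where $r(u)$ is the size of a chain set containing $u$, and then use nestedness to show that the $k$-th smallest value of $r(u)$ is at least $k$, which gives $O(H_n)$. The only cosmetic differences are that you take $r(u)$ to be the size of the smallest chain set containing $u$, which does not depend on the $S_i$, whereas the paper takes the smallest $|U_i|$ over the at most two indices $i$ with $u\in S_i$; and the paper phrases the harmonic step as ``at most $k$ elements have $r(u)\le k$'' instead of as a ranking argument.
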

We prove this lemma in Appendix \ref{sec:appendix:proofs:rectifying}.

Combined together, Lemmas \ref{lem:loadWithTargetBound} and \ref{lem:low target sum} result in the original low-recourse claim of \cite{bhattacharya2022simple}:
\begin{gather*}
    \sum\limits_{t\in[T]}load^t(x^t) = O(\log \Delta \cdot \log |J| \cdot \log T) \cdot |J|.
\end{gather*}

\subsection{Random Walks in Dynamic Graphs}
Before proceeding, we recommend that the reader review the relevant notation and definitions, which are summarized in Section~\ref{sec:random walks in dynamic graphs setting}.

Sampling multiple random walks from each node is a common building block in a wide range of graph algorithms. For instance, it is used to compute node embeddings in network representation learning methods such as node2vec~\cite{grover2016node2vec}, or to measure node similarity in algorithms like SimRank~\cite{jeh2002simrank}.

When the underlying network is dynamic, however— i.e., when edges are deleted or added—maintaining such random walks becomes challenging. In particular, the deletion of a single edge invalidates all walks that traverse it, requiring them to be resampled. In highly available systems, where updates must be processed quickly and with minimal downtime, this can lead to significant computational overhead. An adversary could exploit this vulnerability by forcing many walks to pass through the same edge (like in Theorem~\ref{thm:high congest dyn rws}) and then deleting it, thereby triggering excessive resampling effort.

To mitigate this issue, we employ our landmark resampling algorithm that provides strong congestion guarantees:

\maintainLowCongestRandomWalksThm*
\begin{proof} 
    We begin by mapping the problem to our framework. Each random walk, labeled by $(v, i)$ for $v \in V(G)$ and $i \in [k]$, is treated as an object, and its realization corresponds to the set of edges it traverses. Whenever the adversary deletes an edge that a given random walk uses, the corresponding object is seen to be adversarially resampled.

    Now, fix an edge $e$. Define $f(\mathcal{R})$ for a joint realization $\mathcal{R}$ as the number of random walks in $\mathcal{R}$ that traverse $e$. Clearly, $f$ is a load function: if $\mathcal{R}_1$ and $\mathcal{R}_2$ are two joint realizations, and $\mathcal{R}_3$ is formed by choosing for each random walk either its realization from $\mathcal{R}_1$ or from $\mathcal{R}_2$, then the number of walks in $\mathcal{R}_3$ that traverse $e$ is at most the sum of those traversing $e$ in $\mathcal{R}_1$ and $\mathcal{R}_2$. 

    Consider the static distribution $S_{\hat{t}}$ at some time $\hat{t} \leq T$. Suppose we sample random walks according to $S_{\hat{t}}$, and let $X_i$ be an indicator variable that equals $1$ if the $i$-th random walk goes through $e$ and $0$ otherwise. Then, by Theorem \ref{thm:low congest rws} $f(S_{\hat{t}})$ possesses a bound independent of $\hat{t}$:
    \begin{gather*}
        f(S_{\hat{t}}) = \sum\limits_iX_i = O(\log n)\cdot l
    \end{gather*}
    
    with probability at least $1 - n^{-(c+2)}$, for any constant $c$. Applying Theorem~\ref{thm:landmark resampling} then yields that, at any time $T$, the number of random walks traversing $e$ is at most 
    \[ O(\log T\cdot \log n)\cdot l \] 
    with probability at least $1 - O(\log T) / n^{c+2}$. Taking a union bound over all edges completes the proof. 
\end{proof}

Along with congestion guarantees, landmark resampling preserves the mixing behavior of random walks. To demonstrate this, we provide a theorem stating that for every vertex $v$ of a graph, an adversary is not able to make all the random walks from $v$ end up in a specific region $S_v$ of the graph, provided $S_v$ is small enough. 
\begin{theorem}
    Consider a graph sequence $G_1, \ldots, G_T$ where $G_{i+1}$ is obtained from $G_i$ through adaptive adversarial edge deletions and insertions. Further assume that for every $i \in [T]$, the mixing time of $G_i$ is upper bounded by some integer $l$. Assume that for each vertex $v \in V(G_1)$, in the beginning, an adversary can select a subset $S_v \subset V(G_1)$ such that $|S_v| = O(n/\log T)$.

    Then there exists an integer $k = O(\log n)$ such that every node in the graph is a source of $k$ random walks of length $l$ and under landmark resampling, no matter the adversarial strategy, with high probability at time $T$ for every node $v \in V(G_1)$ there exists a random walk starting at $v$ whose endpoint is not in $S_v$. 
\end{theorem}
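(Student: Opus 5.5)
The approach is the landmark resampling table game (Theorem~\ref{thm:landmark table game}). Fix a vertex $v$ and its set $S_v$, which is chosen at the beginning. At time $T$, each of the $k$ walks from $v$ has its realization taken from one of the landmark steps in $landmarks(T)$. This set has size $m = O(\log T)$ and, by Lemma~\ref{lem:few landmarks}, depends only on $T$. The bad event is that all $k$ walks from $v$ end in $S_v$. The idea is to split this event by landmark and bound each piece by a static computation.

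First, consider a single landmark step $\hat t$, and let $k_{\hat t}$ be the number of walks from $v$ whose realization at time $T$ comes from $\hat t$. Look at the static sample at time $\hat t$: all walks from $v$ are sampled in $G_{\hat t}$. Since the mixing time of $G_{\hat t}$ is at most $l$, each walk's endpoint distribution is close to stationary. I would take ``mixing time'' to mean that the endpoint distribution is pointwise within a constant factor of $\pi$, or at least in total variation distance at most $1/4$. Then each walk ends in $S_v$ with probability at most about $\pi(S_v) + 1/4$.

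This is the main obstacle. On irregular graphs the set $S_v$, having $O(n/\log T)$ vertices, could carry large stationary mass. I would therefore assume degrees are comparable (or $\pi$ near uniform), which gives $\pi(S_v) = O(1/\log T)$. I would also use a multiplicative mixing notion, so that the endpoint lands in $S_v$ with probability at most $c\,|S_v|/n \le c'/\log T =: p$.

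The main counting step works over the landmarks. In the table game the adversary chooses, for each walk, which landmark's realization to keep. The walk ends in $S_v$ only if the realization at that landmark ends in $S_v$. So the bad event implies that the $k$ walks can be covered by landmark realizations that all end in $S_v$. That is, for each walk $(v,i)$ there is some $\hat t\in landmarks(T)$ such that walk $(v,i)$ sampled at $\hat t$ ends in $S_v$.

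Fix an assignment $\sigma:[k]\to landmarks(T)$. There are $m^k$ of these, and $m^k$ does not depend on the randomness since the landmarks depend only on $T$. For a fixed $\sigma$, the $k$ walk samples involved are independent: different walks are independent, and the realizations at different columns are fresh samples. Each has conditional probability at most $p$ of ending in $S_v$, even though the adversary chooses graphs adaptively, because $p$ bounds the probability for every graph in the allowed class. Multiplying the conditional probabilities gives at most $p^k$. A union bound over $\sigma$ then gives
\[
\Pr[\text{bad}] \le (m p)^k = \bigl(O(\log T)\cdot c'/\log T\bigr)^k.
\]
By choosing the hidden constant in $|S_v| = O(n/\log T)$ small enough, we get $mp \le 1/2$.

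Finally, taking $k = (c+1)\log_2 n$ gives probability at most $n^{-(c+1)}$ per vertex, and a union bound over the $n$ vertices finishes the proof. Besides the stationary-mass issue above, one further point needs care. Walks counted at a landmark must be sampled in the graph of that landmark, and an adversarial resample at a landmark time is also a fresh sample in that column. The table-game reduction already handles this.
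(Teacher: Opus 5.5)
Your proof is correct, given the assumptions you state explicitly: a near-uniform stationary distribution, and multiplicative rather than total-variation-$1/4$ mixing. The paper's proof uses the same assumption without stating it, when it asserts that a walk sampled at a single step ends in $S_v$ with probability at most $1/(c\log_2 T)$.

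Where you differ is in how the landmarks and the $k$ walks are combined. The paper first bounds each walk separately: the probability that it ends in $S_v$ at one of the $m=c\log_2 T$ landmark steps is at most $1-(1-\frac{1}{m})^{m}$. It then multiplies over the $k$ walks from $v$ by declaring them independent. You instead fix an assignment $\sigma$ of walks to landmarks, bound the probability that all designated samples land in $S_v$ by $p^k$ by conditioning on the history column by column, and take a union bound over the $m^k$ assignments.

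Your route is the more careful one under an adaptive adversary. The per-walk events are not independent, because the graph at a later landmark may depend on earlier samples of the other walks from $v$. Your chain-rule argument handles exactly this. The paper's independence step would need an additional argument, for instance a monotone coupling using that all walks from $v$ at a given step are identically distributed.

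Your route also avoids the paper's numerical step. There the inequality $1-(1-\frac{1}{m})^m\le 1-\frac{1}{e}$ is actually reversed, although the quantity is still at most $3/4$ for $m\ge 2$, so no real harm is done.

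The per-walk route, once justified, allows a weaker requirement on the hidden constant. The bound $1-(1-p)^m$ stays bounded away from $1$ for any constant value of $mp$, whereas your bound $(mp)^k$ needs $mp\le 1/2$. The paper also fixes $|S_v|\le n/(c\log_2 T)$, so this costs you only a constant factor in the hypothesis.
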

\begin{proof}
    Lemma \ref{lem:few landmarks} states that there exists a constant $c$ such that at time $T$, realizations of random walks come from at most $c \cdot \log_2 T$ time steps. Assuming for each $i \in [T]$, a random walk of length $l$ mixes in $G_i$, and assuming $|S_v| \leq \frac{n}{c\log_2 T}$ for every vertex $v$, we conclude that the probability for a random walk to end up in $S_v$ if sampled in a particular time step is no more than $\frac{1}{c\log_2 T}$. 

    By Theorem \ref{thm:landmark table game}, the probability of a random walk ending in $S_v$ at time $T$ is no more than the probability of it ending in $S_v$ at one out of $c\log_2T$ time steps, that is no more than $1 - (1 - \frac{1}{c\log_2T})^{c\log_2T} \leq 1 - \frac{1}{e}$. 

    The expected number of random walks from $v$ ending up in $S_v$ at time $T$ is then no more than $k\cdot (1 - \frac{1}{e})$. Let $k = 20\log n$. Since all random walks are independent, by Chernoff bounds, all random walks from $v$ end up in $S_v$ with probability no more than $O(\frac{1}{n^{2}})$. Taking a union bound over all vertices, we conclude that with probability $O(\frac{1}{n})$, for any vertex $v$, at least one random walk out of $k$ from $v$ does not end up in $S_v$ at time $T$.
\end{proof}

\subsection{PageRank}
The PageRank of a node in a directed graph on $n$ vertices is defined as the stationary distribution of the following random walk: with constant probability $\lambda$, the walk jumps to a uniformly random node in the graph, and with probability $1 - \lambda$, it moves to a uniformly random out-neighbor of the current node.

In practice, to estimate the PageRank of each node, one can run $k = \Omega(\log n)$ independent random walks from every node of a random length distributed as $Geom(\lambda)$. Given a collection of such random walks, the \emph{estimated PageRank} of a node $v$ is defined as 
\begin{gather*}
    \widehat{\textnormal{PageRank}}(v) = \frac{H_v}{nk/\lambda},
\end{gather*}
where $H_v$ is the total number of walks that pass through $v$. The work of \cite{bahmani2010fast} shows that this quantity is sharply concentrated around the actual PageRank of $v$.

In dynamic graphs, however, the situation becomes more delicate. If an adversary can resample different random walks at different times, the system becomes vulnerable—for example, the adversary may perform a temporal selection attack (analogous to that in Theorem~\ref{thm:high congest dyn rws}) to overload a specific node. In the PageRank context, this would correspond to a page acquiring a PageRank value substantially higher than it ever had at any single moment in time.

To prevent such bias, we employ landmark resampling, which yields the following guarantee.

\pageRankThm*
\begin{proof}
     We begin by mapping the problem to our framework. Each random walk, labeled by $(v, i)$ for $v \in V(G)$ and $i \in [k]$, is treated as an object, and its realization corresponds to the set of edges it traverses. Whenever the adversary deletes an edge that a given random walk uses, the corresponding object is seen to be adversarially resampled.

    Now, fix a node $v$. For a joint realization $\mathcal{R}$, let $H_v(\mathcal{R})$ be the number of random walks in $\mathcal{R}$ passing through $v$. Further, define $f(\mathcal{R}) = \frac{H_v(\mathcal{R})}{nk/\lambda}$. Clearly, $f$ is a load function: if $\mathcal{R}_1$ and $\mathcal{R}_2$ are two joint realizations, and $\mathcal{R}_3$ is formed by choosing for each random walk either its realization from $\mathcal{R}_1$ or from $\mathcal{R}_2$, then the number of walks in $\mathcal{R}_3$ that pass through $v$ is at most the sum of those passing through $v$ in $\mathcal{R}_1$ and $\mathcal{R}_2$. And the denominator $nk/\lambda$ is independent of $\mathcal{R}$.
    
    Consider the static distribution $S_{\hat{t}}$ at some time $\hat{t} \leq T$. Then the sharp-concentration result from \cite{bahmani2010fast} ensures that
    \[
        f(S_{\hat{t}}) =O(\textnormal{PageRank}^{\hat{t}}(v))
    \]
    with probability at least $1 - n^{-(c+1)}$, for any constant $c$. And hence with probability at least $1 - n^{-(c+1)}$, $f(S_{\hat{t}})$ possesses a bound independent of $\hat{t}$:
    \begin{gather*}
        f(S_{\hat{t}}) = O(\max\limits_{t'\leq T}\textnormal{PageRank}^{t'}(v))
    \end{gather*}

    Applying Theorem~\ref{thm:landmark resampling} then yields that at time $T$ the estimated PageRank of $v$ is at most
    \begin{gather*}
        O(\max\limits_{t'\leq T}\textnormal{PageRank}^{t'}(v) \cdot\log T)
    \end{gather*}
    with probability at least $1 - n^{-(c+1)}$. Taking a union bound over all nodes completes the proof.
\end{proof}

\newpage
\bibliographystyle{plain}
\bibliography{refs}

\newpage
\appendix
\section{Proofs}
\label{sec:appendix:proofs}

\subsection{Cuckoo analysis}
\label{sec:appendix:cuckoo}
Let us first recap the setting. There are $n$ participants and $g$ groups. At any point in time, each participant in the system must be assigned to some group. Participants might leave and join the system; however, they don't have the identities, so the algorithm can not recognize which of the left participants is rejoining. A fraction $p$ of participants can be malicious, and can leave and join the system adversarially. The algorithm can not tell if the participant is malicious or not. Initially, all the participants are assigned uniformly at random.

The cuckoo rule with parameter $k$ is the rule on what to do when the participant joins the system. In particular, when a participant $u$ joins, cuckoo assigns them to a uniformly random group, kicks out uniformly random $k$ participants from that group, and reasigns each of them to a uniformly random group, all that done independently. 

We now give two lemmas showing respectively that cuckoo maintains some properties of the purely random assignment, but not all. To this end, we utilize the assumption that at any point in time, the size of each group is between $\frac{n}{1.5g}$ and $\frac{1.5n}{g}$, which happens with high probability if all honest (not malicious) participants are assigned uniformly at random and $p\leq \frac{1}{6}$.

The following lemma demonstrates that the cuckoo rule is well-tailored to keep the honest majority in every group.
\begin{lemma}
    For $n$ participants, $g \leq n^{0.99}$ groups, if at most the fraction $p \leq 0.5 - \frac{1.1}{k} \lesssim 0.166$ of participants is malicious, then with high probability, over the course of polynomially many steps, the cuckoo rule with $k \geq 3$ ensures that in all groups the majority of the participants is honest.
\end{lemma}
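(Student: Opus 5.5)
We plan to prove the statement by tracking a single fixed group $G$ over an epoch of polynomially many steps, and showing that the number of malicious participants in $G$ stays below half its size at every step with high probability. We rely on the standing assumption that every group has size between $\frac{n}{1.5g}$ and $\frac{1.5n}{g}$. A union bound over the $g$ groups and the polynomially many steps then finishes the argument. The key quantity is the \emph{age} of a participant in $G$: the time since it was last placed. Since $g \le n^{0.99}$, the average group size $n/g \ge n^{0.01}$ is polynomial, so Chernoff bounds in $n/g$ give failure probabilities $\exp(-n^{\Omega(1)})$, which is small enough for the union bounds.

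The first step bounds how long anyone, honest or malicious, stays in $G$. Each join (adversarial or honest) lands in $G$ with probability $1/g$. When it does, it evicts $k$ of the at most $\frac{1.5n}{g}$ current members. So per join, a fixed member of $G$ is evicted with probability at least $\frac{1}{g}\cdot\frac{k}{1.5n/g} = \frac{k}{1.5n}$.

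Consider a window of $W = c\,n/k$ joins, for a suitable constant $c$. A member survives the whole window with probability at most $e^{-ck/1.5}$. Hence the old members still present after the window are few. Their count is dominated by a sum of negatively correlated indicators, so by Chernoff at most about $e^{-ck/1.5}\cdot\frac{1.5n}{g}$ of them remain, with high probability.

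The second step bounds the malicious participants that \emph{entered} $G$ during the window. Malicious participants enter $G$ in two ways: by joining, or by being evicted elsewhere and reassigned. Every join, adversarial or not, chooses $G$ with probability $1/g$, independently of the adversary's choice. The adversary performs at most $W$ joins in the window. Each such join places the newcomer in $G$ with probability $1/g$, and reassigns $k$ evicted participants, each landing in $G$ with probability $1/g$. A malicious evictee is more dangerous: there are at most $pn$ malicious participants in total, and each is evicted at rate at most $k/(1.5^{-1}n)$ per join. Summing, the expected number of malicious arrivals in $G$ over the window is about $\frac{W}{g}(1 + k\cdot\frac{1.5 p}{1}\cdot \text{const})$.

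Meanwhile, honest participants also arrive at rate proportional to their count: $(1-p)n$ of them, evicted and reassigned uniformly. Stationarity of cuckoo shuffling implies that the fraction of malicious among the recent arrivals to $G$ is roughly $p + O(1/k)$, since only the adversary's own joins (one per $k$ reassignments) are ``extra''. Concentration again follows from Chernoff in the adaptive setting. We use a martingale or Freedman-type bound, because the adversary chooses whom to rejoin adaptively, but each placement is a fresh $1/g$ coin.

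The final step compares this with $|G| \ge \frac{n}{1.5g}$. We aim for
\[
\text{malicious in } G \;\le\; \Bigl(p + \tfrac{1}{k}\Bigr)(1+\delta)\,|G| + e^{-ck/1.5}\tfrac{1.5n}{g},
\]
which is below $|G|/2$ exactly when $p \le 0.5 - \frac{1.1}{k}$, after choosing $c$ and $\delta$ appropriately. This is where the constants $1.1$ and $k\ge 3$ get calibrated.

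We expect this second step to be the main obstacle. We must rigorously show that the adversary's adaptivity does not tilt the malicious fraction of recent arrivals beyond $p + O(1/k)$. We also need the eviction and reassignment randomness to give a true concentration bound despite the dependencies created by the adaptive rejoin choices. Our first attempt would be to couple each placement decision with an independent $1/g$ coin, and then apply Azuma/Freedman to the counting martingale of malicious arrivals in $G$.
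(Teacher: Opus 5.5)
You take a genuinely different route from the paper. You split the group into old survivors and arrivals within a window. The paper instead fixes the group and defines epochs that end when either the joiner lands there or a malicious evictee is reassigned there. It computes the per-epoch expected malicious gain $(1-p'k)\frac{1}{1+pk}+\frac{pk}{1+pk}=1-\frac{p'k}{1+pk}<0$ whenever the group's malicious fraction is $p'\ge p+\frac1k+0.01$. It then applies Azuma--Hoeffding to the supermartingale $X_t+\mu t$.

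\textbf{The gap.} As written, your plan breaks at the step that carries the content. A bound on the malicious fraction among \emph{arrivals} to $G$ in the window does not bound the malicious fraction among the \emph{current members} of $G$. The window sees about $W(k+1)/g\approx c\,n/g=\Theta(c)\,|G|$ arrivals, and most of them are evicted again before it ends. So the final members are a small subset of the arrivals, and a subset can be far more malicious than the whole. Your final display silently equates the two fractions.

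\textbf{The missing idea.} An eviction from $G$ removes $k$ \emph{uniformly random current members}, independently of their type, and adversarial leaves only remove malicious ones. Hence survivors are a type-blind thinning of arrivals. Because later arrivals survive with higher probability, you would also need the arrival bound on short sub-windows near the end, under adaptivity. Write $X$ and $s$ for the malicious and total counts in $G$, and $q$ for the target fraction. Then an eviction event changes $X-qs$ by $-k(X/s-q)$ in expectation. This self-correcting term is exactly the $-p'k$ in the paper's computation. Your conclusion therefore needs a supermartingale (Azuma/Freedman) argument on the group's own composition, not a Chernoff bound on arrivals.

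\textbf{Smaller issues.}
\begin{enumerate}
\item Justifying the arrival fraction $p+O(1/k)$ ``by stationarity'' is circular: an evictee's type depends on the compositions of the other groups, which is what you are trying to control. What does hold is that the joiner lands in a uniformly random group. So the expected number of malicious evictees per join is $\frac{k}{g}\sum_j m_j/s_j$, with $m_j,s_j$ the malicious count and size of group $j$. This is an average over groups, not a maximum, and under the size assumption it only gives $1.5p$ in place of $p$. The paper uses the same $\approx p$ heuristic, but then your calibration does not reproduce $p\le 0.5-\frac{1.1}{k}$ ``exactly''.
\item In Step 1, surviving $W=cn/k$ joins with per-join eviction probability at least $\frac{k}{1.5n}$ gives survival probability $e^{-c/1.5}$, not $e^{-ck/1.5}$. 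Since the slack below $|G|/2$ is only of order $\frac{0.1}{k}|G|$, you need $c=\Theta(\log k)$.
\item Survival indicators are positively correlated through the common number of joins landing in $G$. Condition on that count before invoking negative correlation.
\end{enumerate}
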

\begin{proof}[Proof sketch]
    Fix some group, w.l.o.g., let's call it the first group. Define an \emph{epoch} to be the time period between one of the following scenarios: (I) the joining participant joins the first group, (II) the malicious participant gets kicked out and gets reassigned to the first group. 

    In scenario I, the expected gain of the first group in the number of malicious participants is $1$ (plus the negligible term for kicked out malicious participants to rejoin) minus the expected number of kicked out malicious participants, which is $p'k$, where $p'$ is a fraction of malicious participants in the first group. As for scenario II, the expected gain is just $1$ (plus the negligible term for other kicked out malicious participants to fall into the first group).

    The probability that an epoch terminates with scenario (I) at a given step is $\frac{1}{g}$ and with scenario (II) it is $1 - (1 - \frac{p}{g})^k \approx \frac{pk}{g}$. Hence, if $\Pr[\text{scenario I} \mid \text{end of epoch}] = \mathcal{P}$, then $\Pr[\text{scenario II} \mid \text{end of epoch}] = pk\cdot \mathcal{P}$. But these probabilities sum to $1$, so $\mathcal{P} = \frac{1}{1 + pk}$. 

    Now assume that the system has reached a state where the first group has a $p' = p + \frac{1}{k} + 0.01$ fraction of malicious participants. This is a necessary state on the way to obtaining a malicious majority. We want to analyze the expected net gain of the malicious participants in this first group in this regime: 
    \begin{align*}
        \E[\text{net gain} \mid \text{end of epoch}] &= \E[\text{net gain} \mid \text{scenario I}] \cdot \Pr[\text{scenario I} \mid \text{end of epoch}]\\ &+ \E[\text{net gain} \mid \text{scenario II}] \cdot \Pr[\text{scenario II} \mid \text{end of epoch}]\\
        &\lesssim (1 - p'k)\cdot\frac{1}{1 + pk} + 1\cdot\frac{pk}{1 + pk}\\
        &= 1 - \frac{p'k}{1 + pk},
    \end{align*}
    which is a negative quantity by the choice of $p'$. Define $\mu = -(1 - \frac{p'k}{1 + pk}) \approx 0.02$ - this is an expected loss in the considered regime of $p' = p + \frac{1}{k} + 0.01$ - the regime of consideration till the end of the proof. 

    Define $X_t$ to be the number of malicious participants in the first group after the $t$-th epoch. By what we have shown above, $\E[X_t \mid X_{t-1}, \ldots, X_0] \leq X_{t-1} - \mu$. Also, observe that for every $t$, $X_{t + 1} - X_t \leq k$ and $X_t - X_{t + 1} \leq k$, implying $|X_{t+1} - X_t|\leq k$.
    
    Now define $M_t = X_t + \mu\cdot t$. Observe that $M_t$ is a super-martingale:
    \begin{align*}
        \E[M_t \mid M_{t-1}, \ldots, M_0] &= \E[X_t\mid X_{t -1}, \ldots, X_0] + \mu\cdot t\\
        &\leq X_{t-1} - \mu +  \mu\cdot t\\
        &= M_{t-1}.
    \end{align*}

    Define $c$ to be $k + \mu \leq 2k$, then $|M_{t+1} - M_t|\leq c$ for all $t$.

    We assume that the size of the first group is always between $\frac{n}{1.5g}$ and $\frac{1.5n}{g}$. Hence, the adversary needs to gain at least $\frac{n}{1.5g}(0.5 - p') \geq 0.01\frac{n}{g}$ malicious participants in the negative gain regime.  We conclude by applying Azuma–Hoeffding's inequality to $M_t$:
    \begin{align*}
        \Pr[X_T> 0.5s]&=\Pr[X_T - X_0\geq 0.01\frac{n}{g}]\\
        &= \Pr[M_T - M_0\geq 0.01\frac{n}{g} + \mu \cdot T]\\
        &\leq \exp\left(\dfrac{(0.01\frac{n}{g} + \mu \cdot T)^2}{8Tk^2}\right)
    \end{align*}
    which is exponentially small for any choice of $T$.
\end{proof}
The following lemma shows that the cuckoo rule fails to maintain some pseudo-random properties, such as a uniform partition of a fixed group of participants. 
\begin{lemma}
    For $n$ participants, $g \leq n^{0.99}$ groups and cuckoo rule with $k \geq 3$, an adversary can gather a fixed set of $\frac{n}{2kg}$ participants in a fixed group at a fixed (linear) time.
\end{lemma}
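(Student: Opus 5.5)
The plan is a resample-until-successful attack, adapted to the cuckoo rule. Fix the target group $G^\ast$ and a set $P$ of $m=\frac{n}{2kg}$ participants that the adversary controls; these may be malicious, since $m \le pn$ when $g$ is large enough. The adversary runs one participant at a time. A participant $u\in P$ that is not in $G^\ast$ leaves and rejoins until the cuckoo rule places it in $G^\ast$. Each rejoin lands in $G^\ast$ with probability $1/g$, so this costs $g$ join-leave steps in expectation. It is also $O(g\log n)$ with high probability, which is sublinear because $g\le n^{0.99}$.

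The difficulty is that later joins trigger evictions, which can kick earlier-gathered members of $P$ out of $G^\ast$. The key observation is that only joins landing in $G^\ast$ evict from $G^\ast$. Each such join evicts $k$ uniformly random participants out of a group of size at least $\frac{n}{1.5g}$, using the group-size assumption stated above. So a fixed gathered participant survives a single $G^\ast$-landing join with probability at least $1-\frac{1.5kg}{n}$.

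Gathering each remaining participant requires exactly one successful landing in $G^\ast$. When $u$ itself lands in $G^\ast$, it cannot evict the other members of $P$ already there except through the $k$ random evictions, and the adversary's own failed rejoins land elsewhere. Honest participants also join and leave, but in an adaptive worst case we may assume that only the adversary's joins happen during the attack window. Under that assumption, the expected number of evictions of $P$-members over the whole process is at most
\[
m\cdot k\cdot \frac{m}{n/(1.5g)} = \frac{1.5\,g\,k\,m^2}{n} = \frac{1.5\,n}{4kg}\cdot\frac{1}{1}\cdot\frac{1}{1}\cdot\frac{k}{k} \approx \frac{3m}{4}.
\]

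Hence a constant fraction of $P$ is evicted in expectation, and the adversary simply re-gathers any member that gets evicted. This is a drift argument. Each success adds one member of $P$ to $G^\ast$ and removes in expectation at most $k\cdot\frac{|P\cap G^\ast|}{n/(1.5g)}\le \frac{3}{4}$ members. The net drift is therefore at least $1/4>0$ per landing, and the count reaches $m$ after $O(m)$ landings, i.e.\ $O(mg)=O(n/k)$ time, which is linear. To fit the statement's ``fixed time'', the adversary then idles: it stops all joins, so no evictions occur and the configuration persists until the prescribed time.

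The main obstacle is controlling the drift near the end, when $|P\cap G^\ast|\approx m$ and the evictions are largest. I would make this rigorous with the same supermartingale and Azuma--Hoeffding argument used in the previous lemma, applied to $m-|P\cap G^\ast|$. Its increments are bounded by $k$, and its drift is at most $-1/4$ as long as $|P\cap G^\ast| \le m$. This fixes the constant $\frac{1}{2k}$ in $m$, which is exactly what makes $km\cdot\frac{1.5g}{n}<1$. Finally, a union bound over the group-size concentration events over polynomially many steps completes the proof.
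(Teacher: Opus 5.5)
Your proposal is correct and is essentially the paper's argument. It uses the same resample-until-successful strategy, the same per-landing drift bound $1-k\cdot\frac{|Q|}{s}\ge 1-\frac{3}{4}=\frac14$ obtained from $|Q|=\frac{n}{2kg}$ and $s\ge\frac{n}{1.5g}$, and the same (sub/super)martingale plus Azuma--Hoeffding step. The differences are only bookkeeping:
\begin{itemize}
\item The paper indexes by epochs that also end when an evicted member of $Q$ is reassigned into the target group, which keeps every increment bounded by $k$. You instead index by your own landings and discard those beneficial events, so the increment between landings need not be at most $k$. Either lower-bound each step by the landing effect alone before applying Azuma--Hoeffding, or use the paper's epochs.
\item You make explicit the conversion of $O(m)$ landings into $O(mg)$ steps and the idling until the fixed time, which the paper leaves implicit.
\end{itemize}
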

\begin{proof}
    Denote the set of participants to gather $Q$ and call the group the adversary wants to aggregate them in the first group. We reference participants in $Q$ and only them as malicious within the scope of this lemma. The adversarial strategy is straightforward: while there exists a participant $q \in Q$ that is not assigned to the first group, leave and join $q$. 

    We prove under the assumption that the size of the first group, which we denote $s$, always satisfies $\frac{n}{1.5g}\leq s\leq \frac{1.5n}{g}$.

    We claim that within $T = \frac{12n}{g}$ steps, the adversary succeeds. Define an \emph{epoch} to be the time period between two events of one of the following types: (I) the joining malicious participant joins the first group, (II) the malicious participant gets kicked out by the cuckoo and gets reassigned to the first group. 
    
    We do an analysis of the net difference of the number of malicious participants in the first group after each epoch. If the epoch ends by the event (II), the net difference is then at least $+1$ and at most $+k$. If the event (I) occurs instead, the first group gets $1$ malicious participant by definition, but $k$ random ones are being kicked out. The probability for a single kicked out participant to be malicious is no more than $\frac{|Q|}{s}$, so the expected number of kicked malicious participants is no more than $\mu' = k\cdot \frac{|Q|}{s} \leq k\cdot\frac{\frac{n}{g}\cdot\frac{1}{2k}}{\frac{n}{1.5g}} = \frac{3}{4}$. Define $\mu = -\mu' + 1 \geq \frac{1}{4}$ -- this is the lower bound on the expected net gain of malicious participants in the first group. 

    Define $X_t$ to be the number of malicious participants in the first group after the $t$-th epoch. By what we have shown above, $\E[X_t \mid X_{t-1}, \ldots, X_0] \geq X_{t-1} + \mu$. Also, observe that $X_{t + 1} - X_t \leq k$ and $X_t - X_{t + 1} \leq k$, implying $|X_{t+1} - X_t|\leq k$.
    
    Now define $M_t = X_t - \mu\cdot t$. Observe that $M_t$ is a sub-martingale:
    \begin{align*}
        \E[M_t \mid M_{t-1}, \ldots, M_0] &= \E[X_t\mid X_{t -1}, \ldots, X_0] - \mu\cdot t\\
        &\geq X_{t-1} + \mu -  \mu\cdot t\\
        &= M_{t-1}.
    \end{align*}

    We rewrite the desired probability in terms of $M$:
    \begin{align*}
        \Pr[X_T \leq \frac{n}{g}\cdot\frac{1}{2k} ] &\leq \Pr[X_T \leq 2\frac{n}{g}]\\ 
        &\leq \Pr[X_T  \leq \mu\cdot T - \frac{n}{g}]\\ 
        &= \Pr[M_T \leq -\frac{n}{g}]\\
        &\leq \Pr[M_T - M_0 \leq -\frac{n}{g}].\\
    \end{align*}

    Define $c$ to be $k + \mu \leq 2k$, then $|M_{t+1} - M_t|\leq c$ for all $t$. We obtain the statement of the lemma by applying Azuma–Hoeffding's inequality to $M_T$:
    \begin{align*}
        \Pr[X_T \leq \frac{n}{g}\cdot\frac{1}{2k} ]&\leq \exp\left(\dfrac{-(\frac{n}{g})^2}{2Tc^2}\right)\\
        &= \exp\left(\Theta\left(\frac{n}{g}\right)\right).
    \end{align*}
\end{proof}

\subsection{Parameterized Greedy Temporal Aggregation}
\label{sec:appendix:proofs:parameterized ta}
\begin{lemma}
    \label{lem:parameterized ta few time steps}
    When using parameterized greedy temporal aggregation with parameter $\alpha$, at time $T$ all realizations come from at most $O(\log_\alpha n)$ time steps.
\end{lemma}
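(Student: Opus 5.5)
The plan mirrors the proof of Lemma~\ref{lem:greedy temporal aggregation few time steps}, with the ratio $2$ replaced by $\alpha$. I will look at the \emph{termination condition} of the while loop at time $T$ and turn it into a counting argument on the sizes of the non-empty groups $objs^T(t')$.

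First, when the loop of Algorithm~\ref{alg:gpta} exits at time $T$, no pair of distinct time steps $t_1 \neq t_2$ has both $objs^T(t_1)$ and $objs^T(t_2)$ non-empty with $|objs^T(t_1)|/|objs^T(t_2)| \in [1/\alpha, \alpha]$. Here I restrict to non-empty groups, since empty groups carry no realizations. I should also check that the loop terminates at all. Each iteration merges two non-empty groups into one, so the number of non-empty groups strictly decreases, and termination follows.

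Second, list the non-empty groups by size, $s_1 \le s_2 \le \dots \le s_m$. Each size is an integer in $[1,n]$. By the termination condition, any two consecutive sizes satisfy $s_{i+1}/s_i > \alpha$. Iterating gives $n \ge s_m > \alpha^{m-1} s_1 \ge \alpha^{m-1}$, so $m < 1 + \log_\alpha n = O(\log_\alpha n)$. Since every object's realization lies in exactly one group, all realizations at time $T$ come from at most $m$ time steps.

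The only delicate point is the edge case $\alpha = 1$. There $\log_\alpha n$ is undefined, and the loop merges only equal-size groups, so the bound must be read for $\alpha > 1$. For $\alpha$ close to $1$ the count is really $O(\log n / \log \alpha)$, which matches the stated form. In the applications $\alpha = n^{\varepsilon}$, so this yields $O(1/\varepsilon)$ time steps, as used in Theorem~\ref{thm:dynamic palettes}. I expect no real obstacle here; the argument is a direct adaptation of the unparameterized one.
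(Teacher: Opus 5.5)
Your proof is correct and is the same argument as the paper's: once the loop exits at time $T$, any two non-empty groups $objs^T(t_1), objs^T(t_2)$ have size ratio outside $[1/\alpha,\alpha]$. Hence the sorted sizes grow by a factor exceeding $\alpha$ while staying at most $n$, giving fewer than $1+\log_\alpha n$ groups. Your termination check and explicit $1+\log_\alpha n$ count (which is $O(\log_\alpha n)$ whenever $\alpha \le n$, e.g.\ $\alpha = n^{\varepsilon}$ with $\varepsilon \le 1$) simply spell out details the paper leaves implicit.
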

\begin{proof}
    Indeed, by the algorithm, if individual realizations come from different time steps $t_1$ and $t_2$, $|objs^T(t_1)|$ and $|objs^T(t_2)|$ differ by at least a factor of $\alpha$ (otherwise those objects would have been resampled by the algorithm). Hence, at time $T$, there can be no more than $O(\log_\alpha n)$ non-empty time steps, that is, $t'$ such that $objs^T(t') \neq \emptyset$. 
\end{proof}
\ParameterizedGreedyTemporalAggregationThm*
\begin{proof}
    \noindent\textbf{Low load.} Since each static realization has a load of no more than $L$ at each time step $t \in [T]$ with probability at least $1 - \varepsilon$, we conclude by the union bound that all static realizations have a load of at most $L$ with probability at least $1 - \varepsilon\cdot T$.

    By Lemma \ref{lem:parameterized ta few time steps}, the joint realization at time $T$ is composed of $O(\log_\alpha n)$ static realizations, and each of those has a load of at most $L$ with probability at least $1 - \varepsilon\cdot T$. Applying Lemma \ref{lem:compose loads} yields that the joint realization then has a load of at most $L\cdot O(\log_\alpha n)$.
    
    \noindent\textbf{Number of samples.} We convey an amortized analysis; we say that each object has coins. When the adversary samples an object, we assume this object receives $\log_{1 + 1/\alpha}n$ coins from the adversary. Moreover, if the object was previously sampled at time $t$ and there were $k$ objects sampled at time $t$, they all receive $\frac{2}{\ln(1 + 1/\alpha)\cdot k}$ coins.

    When an algorithm samples an object, it takes one coin from it.

    Denote with $group(u)$ a set of objects sampled at the same time step as $u$ and with $coins(u)$ the number of coins $u$ has. Consider the potential function of $u$: $\Phi(u) = \log_{1 + 1/\alpha}(|group(u)|) + coins(u)$. We claim that this potential is monotonically increasing. Indeed, consider the cases when this potential is changed. 
    \begin{itemize}
        \item The object $u$ is sampled by the adversary. Then, the $|group(u)|$ may decrease (at most) from $n$ to $1$, hence the $\log_{1 + 1/\alpha}(|group(u)|)$ from $\log_{1 + 1/\alpha}(n)$ to $0$, but adversary gives the object $\log_{1 + 1/\alpha}(n)$ coins, so $\Phi(u)$ can not decrease.
        \item Some other object from $group(u)$ is sampled by the adversary. Assume that before the sample $|group(u)| = k \geq 2$. Then $\Delta\Phi(u) = \log_{1 + 1/\alpha}(k - 1) - \log_{1 + 1/\alpha}(k) + \frac{2}{\ln(1 + 1/\alpha)\cdot k}$. Let us show that this quantity is positive. Indeed, 
        \begin{align*}
            &\log_{1 + 1/\alpha}(k - 1) - \log_{1 + 1/\alpha}(k) + \frac{2}{\ln(1 + 1/\alpha)\cdot k}\\
            &=\frac{\ln(k - 1)}{\ln(1 + 1/\alpha)} - \frac{\ln(k)}{\ln(1 + 1/\alpha)} + \frac{2}{\ln(1 + 1/\alpha)\cdot k}\\
            &= \frac{1}{\ln(1 + 1/\alpha)}\cdot(\ln(k - 1) - \ln(k) + \frac{2}{k})\\
            &\geq \frac{1}{\ln(1 + 1/\alpha)}\cdot (-\frac{1}{k - 1} + 2k).
        \end{align*}
        The first term is always positive, the second term is positive for $k \geq 2$.
        \item All the objects in $group(u)$ are sampled by the algorithm. This means that they are being sampled together with some other group, whose size is at least $\frac{1}{\alpha}|group(u)|$. Therefore, the new size $|group'(u)| \geq |group(u)| + \frac{1}{\alpha}|group(u)| = (1 + 1/\alpha)|group(u)|$ and hence $\Delta\Phi(u) = \log_{1 + 1/\alpha}(|group'(u)|) - \log_{1 + 1/\alpha}(|group(u)|) - 1 \geq 0$.
    \end{itemize}
    
    This allows us to prove that the algorithm will always be able to take a coin. Indeed, initially, $\Phi(u) = \log_{1 + 1/\alpha}n$. If the object is sampled, it means its group's size is increased by at least $1 + 1/\alpha$ times, hence it was less than $\frac{n}{1 + 1/\alpha}$, hence the number of coins was at least $1$, since $\log_{1 + 1/\alpha}(|group(u)|) + coins(u) \geq \log_{1 + 1/\alpha}n$ and $\log_{1 + 1/\alpha}(|group(u)|) \leq \log_{1 + 1/\alpha}n - 1$.

    Finally, note that 
    \begin{align*}
        \text{Total \# of samples by the alg} &\leq \text{\# of coins introduced to the system}\\
        &\leq \text{\# of adversarial samples}\cdot(\log_{1 + 1/\alpha}n + \frac{2}{\ln(1 + 1/\alpha)})\\
        &= O(\text{\# of adversarial samples}\cdot\frac{\ln(n) + 2}{\ln(1+1/\alpha)})\\
        &= O(\text{\# of adversarial samples}\cdot\alpha \cdot \log n)
    \end{align*}
\end{proof}

\subsection{Landmark Resampling}
\label{sec:appendix:proofs:landmark resampling}
\landmarkTableGame*
\begin{proof}
The key idea is that $\mathcal{A}^{tbl}$ can simulate the actions of $\mathcal{A}^o$ step by step.

At each time step $t$, the original adversary $\mathcal{A}^o$ chooses a static distribution $\mathcal{D}^t = (\mathcal{D}_1^t, \ldots, \mathcal{D}_n^t)$ and observes realizations for some objects $ \mathcal{I}^t \subset [n]$ from it. Its decisions depend only on the history of its past observations, denoted $\mathcal{H}^{t-1}$. Therefore, formally, we can view $\mathcal{A}^o$ as a pair of functions $(f, g)$, where
\begin{gather*}
    \mathcal{D}^t = f(\mathcal{H}^{t-1})\ \text{ and }\ \mathcal{I}^t = g(\mathcal{H^{t-1}}).
\end{gather*}
We prove by induction on $t$ that $\mathcal{A}^{tbl}$ can produce the same distribution over histories $\mathcal{H}^t$ as $\mathcal{A}^o$.

For $t = 1$, both adversaries have empty history, so the claim trivially holds.
Assume now that at time $t-1$, $\mathcal{A}^{tbl}$ has already matched the distribution of $\mathcal{H}^{t-1}$ generated by $\mathcal{A}^o$. Since $\mathcal{A}^o$ selects $\mathcal{D}^t = f(\mathcal{H}_{t-1})$, and $\mathcal{A}^{tbl}$ can simulate $f$, both adversaries draw $\mathcal{D}^t$ from the same distribution $\sim f(\mathcal{H}^{t-1})$. Then, $\mathcal{A}^{tbl}$ obtains $\mathcal{H}^{t}$ by augmenting $\mathcal{H}^{t-1}$ with realizations of objects $g(\mathcal{H}^{t-1})$, exactly as $\mathcal{A}^o$ would. Thus, their resulting histories $\mathcal{H}^t$ have identical distributions.

By induction, $\mathcal{A}^{tbl}$ and $\mathcal{A}^o$ generate the same distribution for $\mathcal{H}^{T+1}$. By Lemma \ref{lem:few landmarks}, the final joint realization obtained by $\mathcal{A}^o$ is composed of individual realizations from $\mathcal{H}^{T+1}$ that come from at most $O(\log T)$ different time steps and the set of this time steps only depends on $T$. Therefore, $\mathcal{A}^{tbl}$—which observes the entire table and is allowed to select realizations from exactly these $O(\log T)$ time steps—can pick the same joint realization as $\mathcal{A}^o$. Hence, $\mathcal{A}^{tbl}$ is at least as powerful as $\mathcal{A}^o$, completing the proof.
\end{proof}

\subsection{Temporal Selection in Dynamic Graphs}
\label{sec:appendix:proofs:temporal selection in dynamic graphs}
\badRWsDynGraphs*
\begin{proof}

    Our $G_1$ will be a complete ternary tree with root having an additional fourth edge $e$, which the adversary will attempt to congest. The ternary tree has the depth of $h = h_1 + h_2$, we choose $h_1$ and $h_2$ later in the analysis, but those will both be $\Theta(h) = \Theta(\log n)$. We call a random walk \emph{monotone} if it only goes up in the tree. The adversarial strategy will be to congest $e$ with monotone random walks of length $l = h + 1 = O(\log n)$ - one from each tree rooted at depth $h_1$. If it succeeds in doing so, the congestion on $e$ will be $3^{h_1} = n^\varepsilon$ walks in total.

    \begin{figure}[h]
        \centering
        \includegraphics[width=0.5\linewidth]{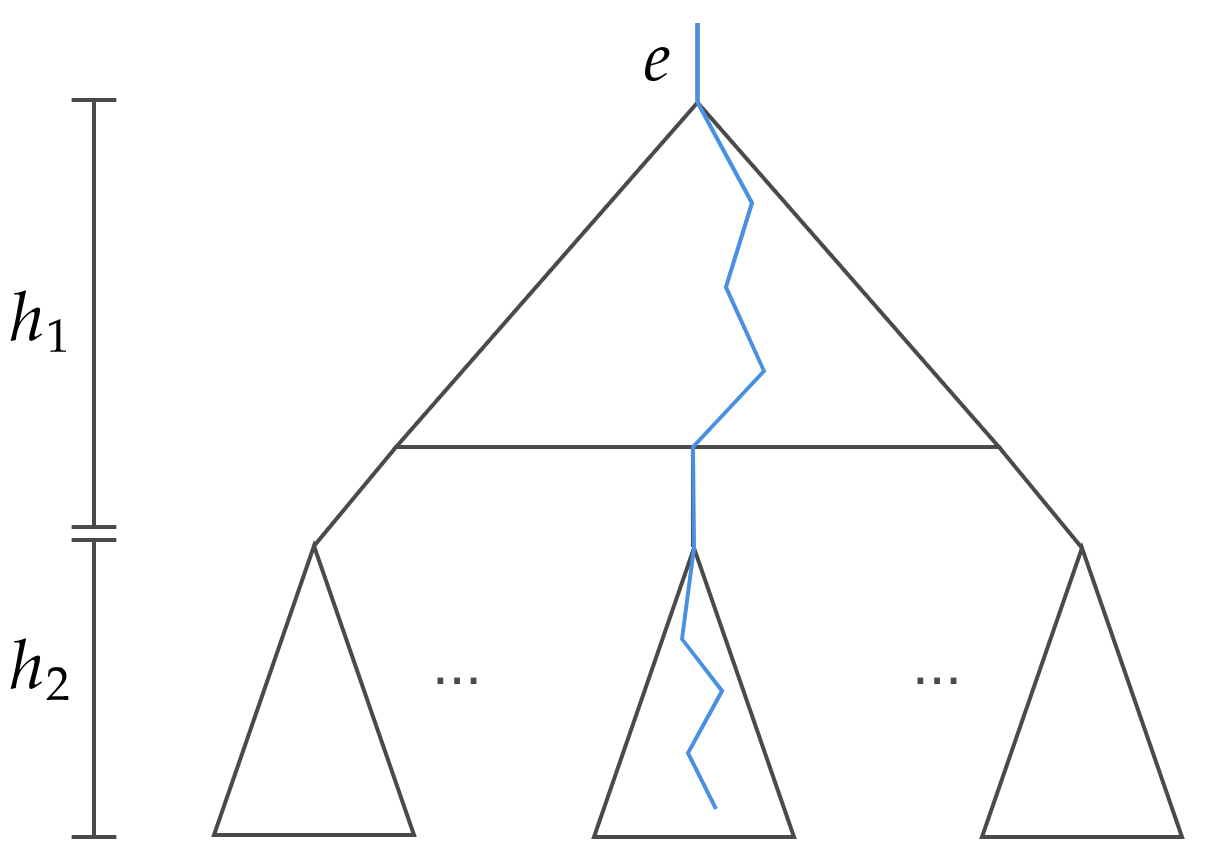}
        \caption{The upper part of the ternary tree and its subtrees at depth $h_1$. The blue monotone random walk starts at one of the subtrees and hits $e$.}
        \label{fig:placeholder}
    \end{figure}

    For each subtree at depth $h_1$, with root $r$ we apply the \texttt{RecourseDelete}$(r)$ procedure (Algorithm \ref{alg:recourse delete}). 
    
    \begin{algorithm}[ht]
        \caption{\texttt{RecourseDelete}$(v)$}
        \label{alg:recourse delete}
        \begin{algorithmic}
            \For{$child \in children(v)$}
                \State $success \gets RecourseDelete(child)$
                \If{$success$}
                    \State \textbf{return} $success$
                \EndIf
            \EndFor
            \State Launch a random walk from $v$
            \If{random walk is monotone and hits $e$}
                \State \textbf{return} \textbf{$success$}
            \Else 
                \State \textbf{delete} $(v, parent(v))$
                \State \textbf{return} $fail$
            \EndIf
        \end{algorithmic}
    \end{algorithm}

    Let us prove that such a strategy suffices. Let $t(v)$ be the time the random walk from $v$ is sampled. Let $p_v$ be the probability of a random walk from $v$ to be monotone and to hit $e$ if sampled at time $t(v)$. Now, fix some subtree $T$ at depth $h_1$ and for $v \in V(T)$ define $X_v$ to be an event of a random walk from $v$ being sampled, monotone, and hitting $e$. We are interested in analyzing $\Pr[\bigvee\limits_{v\in V(T)}X_v]$. First, denoting vertices of $T$ with $v_1, \ldots, v_k$ in the reverse order they are cut off by the RecourseDelete procedure (so $v_1$ is the root, $v_k$ is the left-most leaf), we rewrite
    \begin{align*}
        \Pr[\bigvee\limits_{v\in V(T)}X_v] &= \Pr\left[\bigvee\limits_{i \in [k]}\left(X_{v_i}\wedge\bigwedge\limits_{j \in [i+1, k]}\bar{X}_{v_j}\right)\right]\\
        &= \sum\limits_{i\in [k]}\Pr[X_{v_i}\wedge\bigwedge\limits_{j \in [i+1, k]}\bar{X}_{v_j}]\\
        &= \sum\limits_{i\in [k]}\Pr[X_{v_i} \mid \bigwedge\limits_{j \in [i+1, k]}\bar{X}_{v_j}] \cdot \Pr[\bigwedge\limits_{j \in [i+1, k]}\bar{X}_{v_j}]\\
    \end{align*}

    Now note that $\Pr[X_{v_i} \mid \bigwedge\limits_{j \in [i+1, k]}\bar{X}_{v_j}]$ is exactly $p_{v_i}$. Further, if we denote $Y = \bigvee\limits_{v\in V(T)}X_v$, then $\bigwedge\limits_{j \in [i+1, k]}\bar{X}_{v_j} \supset \bar{Y}$ for every $i \in [k]$, hence, $\Pr[\bigwedge\limits_{j \in [i+1, k]}\bar{X}_{v_j}] \geq \Pr[\bar{Y}] = 1 - \Pr[Y]$. 
    We can now rewrite
    \begin{align*}
        \Pr[Y] \geq \sum\limits_{i \in [k]}p_{v_i}\cdot (1 - \Pr[Y]).
    \end{align*}

    We want to show that $\Pr[Y]$ is at least $\frac{1}{2}$. Assume it is not. Then, 
    \begin{align*}
         \Pr[Y] \geq \frac{1}{2}\sum\limits_{i \in [k]}p_{v_i}.
    \end{align*}

    Consider some non-leaf vertex $v$ and its children $a$, $b$, $c$ (in the order they are cut off by the RecourseDelete). Note that $p_{a} = \frac{1}{4}p_v$, $p_{b} = \frac{1}{3}p_{v}$, and $p_{c} = \frac{1}{2}p_v$. Hence, sum of $p$-s of children is $\frac{13}{12}$ times the $p$ of the parent. Therefore, sum of the $p$-s of all leaves is $(\frac{13}{12})^{h_2}\cdot p_{v_1}$, and $p_{v_1} = (\frac{1}{4})^{h_1+1}$. Thus, setting $h_2 = 100h_1$, we get that 
    \begin{align*}
        \Pr[Y] \geq \frac{1}{2}\cdot (\frac{13}{12})^{100h_1}\cdot(\frac{1}{4})^{h_1 + 1} = \frac{1}{8}\cdot100^{h_1} > 1,
    \end{align*}

    Which is a contradiction. 

    What is left to observe is that there are $3^{h_1} = 3^{\frac{1}{101}\log_3n} = n^{\frac{1}{101}}$ subtrees at depth $h_1$. And with each one independently having more than $1/2$ chance to successfully execute RecourseDelete$(r)$ procedure, we conclude by Chernoff bounds that there will be $\widetilde{\Theta}(n^{\varepsilon})$ random walks that hit $e$ with $\varepsilon = \frac{1}{101}$. 
\end{proof}

\attackProactiveResamplingDynGraphs*
\begin{proof}
We reuse the construction and notation from the proof of Theorem~\ref{thm:high congest dyn rws}. In particular, let $A$ be the initial graph from that theorem (the ternary-tree gadget with the distinguished edge $e$), and fix the depth $h_1$. For every rooted subtree $T$ of $A$ whose root is at depth $h_1$, enumerate its leaves from right to left as
\[
v_1(T),v_2(T),\ldots,v_k(T).
\]
Let $T_i$ denote the trimmed state of $T$ at the moment $t(v_i)$ in the adversarial process of Theorem~\ref{thm:high congest dyn rws}, i.e., the remaining subtree of $T$ right when the walk from $v_i(T)$ is sampled there.

\medskip
\noindent\textbf{Graph $G$.}
Start from $A$ and add two gadgets.
(i) For every leaf $v_i(T)$ attach a fresh clique $C_{v_i(T)}$ of size $n^3$, connecting every clique vertex to $v_i(T)$.
(ii) Add a disjoint path component $B$ of length $W:=1 + 2 + 4 + \ldots + 2^r$ where $2^r\ge n$ and $W=\Theta(n)$.

\medskip
\noindent\textbf{Adversarial strategy.}
For $i=1,2,\ldots,k$, at time step $t=i$ the adversary samples (in parallel, over all such $T$) one length-$l=O(\log n)$ random walk from every $v_i(T)$. Then it deletes one edge of $B$ per time step for the next $W$ steps (“waits $W$ steps”).

By proactive resampling, each walk sampled at time $i$ is resampled at times $i+2^0,i+2^1,\ldots$. By definition of $W$, the first proactive resampling time after $W$ is $W+i$, and the \emph{next} one after that is at least
\[
(W+i) + 2^{r+1} \;\ge\; (W+i)+n,
\]
so there is a window of length $\Omega(n)$ after time $W+i$ during which this walk is not proactively resampled again.

Now we “replay” the trimming process from Theorem~\ref{thm:high congest dyn rws}, but aligned to these proactive-resampling times: for each $i=1,\ldots,k$ and each subtree $T$, if walks from $v_1, \ldots, v_{i-1}$ were not successfull so far, by time $W+i$ the adversary deletes edges in $T$ so that $T$ is in state exactly $T_i$ (the state in which Theorem~\ref{thm:high congest dyn rws} samples the walk from $v_i(T)$). An adversary also removes $C_{v_i(T)}$. Then at time $W+i$ the proactive resampling samples a fresh walk from $v_i(T)$ \emph{in the graph where $T$ is trimmed to $T_i$}.

\medskip
\noindent\textbf{Why early samples do not interfere.}
Let $X$ be the event that every walk sampled at times $1,\ldots,k$ moves into its attached clique on the first step and never returns to the leaf within its $l=O(\log n)$ steps at each out of $O(\log n)$ proactive resamples. Observer that $\Pr[\neg X]=O(n\log^2 n/n^3)=O(1/n)$ by a union bound. Conditioned on $X$, none of these early walks traverses any edge of $A$, hence none is invalidated by later deletions within $A$; in particular, the proactive resampling at time $W+i$ indeed samples from $v_i(T)$ in the intended trimmed tree $T_i$.

\medskip
\noindent\textbf{Congestion.}
Conditioned on $X$, within each subtree $T$ the process at times $W+1,\ldots,W+k$ is distributed exactly as in Theorem~\ref{thm:high congest dyn rws}: the walk from $v_i(T)$ is sampled precisely when the subtree is in state $T_i$, and the same subsequent trimming rule is applied. Therefore the same analysis implies that, in expectation, $\widetilde{\Omega}(n^{\varepsilon_0})$ subtrees contribute a successful walk that hits the edge $e$, yielding
\[
\E[\mathrm{cong}(e)\mid X]\ge \widetilde{\Omega}(n^{\varepsilon_0}).
\]
Unconditioning and using $\Pr[X]\ge 1-O(1/n)$ gives $\E[\mathrm{cong}(e)]\ge \widetilde{\Omega}(n^{\varepsilon_0})$.

Finally, $|V(G)|=N=O(n^4)$ due to the $n^3$-cliques, so $\widetilde{\Omega}(n^{\varepsilon_0})=\Omega(N^{\varepsilon})$ for $\varepsilon=\varepsilon_0/4>0$, completing the proof.

\paragraph{Remark on a modeling detail.}
We implicitly used the fact that if a walk is resampled due to proactive scheduling at time $t$, and simultaneously becomes invalid due to an edge deletion at time $t$, the proactive resampling period does not get reset. This is needed when the walk from $v_i$ is resampled at time $W + i$, and the adversary removes $C_{v_i}$, in which, conditioned on $X$, the walk was. This assumption can be lifted at a cost of a more technical analysis. 
\end{proof}

\subsection{Dynamic Palette Sparsification}
\label{sec:appendix:proofs:dynamic palettes}

\triangleFreeDynamicPaletteThm*
\begin{proof}
    We begin by mapping a problem to our framework. For each vertex, its palette is treated as an object, and we say that the adversary resamples a palette of vertex $v$ iff it inserts an edge incident to $v$.

    When a subset of palettes is resampled, either by the adversary or the algorithm, at time $t$ we sample them from a fresh set of $O(\frac{\Delta}{\gamma\cdot \ln \Delta})$ colors (see \cite{beyondPalette} for precise constants), which we call a \emph{range}. Sampling a palette from a range is picking $O(\Delta^\gamma + \sqrt{\log n})$ uniformly random colors from this range.
    
    This completes the description of the procedure. Let us now prove the stated properties.
    
    The fact that the graph is colorable at any time with high probability follows from two observations. First, vertices sampled at different time steps have distinct palettes since they are sampled from disjoint ranges, so no conflict is possible between two vertices sampled at different times no matter which colors they pick from their palettes. Second, vertices that are sampled at the same time induce a subgraph, which clearly has a max degree no more than $\Delta$ and is triangle-free, and hence is colorable with high probability with colors from palettes due to Theorem \ref{thm:palette-sparsification triangle free}.  
    
    Palettes have size $O(\Delta^\gamma + \sqrt{\log n})$ at any time by construction. 
    
    Lemma \ref{lem:parameterized ta few time steps} ensures that under parameterized temporal aggregation with parameter $n^\varepsilon$ at any point in time, realization of objects comes from at most $O(\frac{1}{\varepsilon})$ time steps, meaning at any point in time at most $O(\frac{1}{\varepsilon})$ ranges are used. The total number of colors potentially used at time $t$ is the number of colors in each range times the number of ranges, that is $O(\frac{\Delta}{\gamma\cdot \ln \Delta}) \cdot O(\frac{1}{\varepsilon})= O(\frac{1}{\varepsilon}\cdot \frac{\Delta}{\gamma\cdot \ln \Delta})$. 
    
    Finally, Theorem \ref{thm:parameterized temporal aggregation} ensures that if adversary causes $q'$ object resamples, then with parameter $n^\varepsilon$, the parameterized temporal aggregation does at most $O(q' \cdot n^\varepsilon \cdot \log n)$ object resamples, and if in our context an adversary inserts no more than $q$ edges, it causes no more than $q' \leq 2q$ object resamples so the theorem claim follows.  
\end{proof}

\dynamicSublinearTriangleFreeColoringThm*
\begin{proof}
    We frame the problem in our framework. In particular, a vertex is considered an object, and if an adversary inserts an edge $(u, v)$, objects $u$ and $v$ are considered adversarially resampled. We then use parameterized greedy temporal aggregation with $\alpha= n^\varepsilon$ on top of this. 

    When a subset of nodes is being resampled at time $t$, either adversarially or by the algorithm, these nodes induce a subgraph $S_t$. We launch an algorithm from Theorem \ref{thm:triangle-free coloring sublinear alg} on $S_t$, giving it fresh $O(\frac{\Delta}{\gamma \cdot \ln \Delta})$ colors to use (see \cite{beyondPalette} for precise constants).  

    It is straightforward to see that this algorithm produces a proper coloring of $G$ at any time with high probability. By induction on $t$, we assume that $G_t \setminus S_t$ is properly colored, $S_t$ is properly colored whp by the algorithm of Theorem \ref{thm:triangle-free coloring sublinear alg}, and there can not be conflict between two nodes in $G_t \setminus S_t$ and $S_t$ since $S_t$ uses a fresh color range. 

    The number of colors used at any time is bounded by $O(\frac{1}{\varepsilon}\cdot \frac{\Delta}{\gamma \cdot \ln \Delta})$. That is because by Lemma \ref{lem:parameterized ta few time steps}, objects' realizations come from at most $O(\log_{n^\varepsilon}n) = O(\frac{1}{\varepsilon})$ different time steps, and for each time step we are using $O(\frac{\Delta}{\gamma \cdot \ln \Delta})$ colors. 

    As for the total computation, we first observe that the number of adversarially sampled objects is no more than $2q$ since each edge insertion only resamples two vertices. Now define $n_t = |V(S_t)|$ - the number of nodes being sampled at time $t$. By Theorem \ref{thm:triangle-free coloring sublinear alg}, at time $t$, with high probability, the algorithm uses $\widetilde{O}(n_t^{3/2 + \gamma})$ computation, hence

    \begin{align*}
        \mathrm{Total\ Computation} &= \sum\limits_{t \in [T]}\widetilde{O}(n_t^{3/2 + \gamma})\\
        &\leq n^{1/2 + \gamma}\cdot \sum\limits_{t \in [T]}\widetilde{O}(n_t)
    \end{align*}
    where the last inequality is because $n_t^{1/2 + \gamma}\leq n^{1/2 + \gamma}$.
    
    Furthermore, Theorem \ref{thm:parameterized temporal aggregation} gives us that $\sum\limits_{t \in [T]}n_t = O(q\cdot n^{\varepsilon} \cdot \log n)$. Thus we obtain
    \begin{align*}
        \mathrm{Total\ Computation} &= n^{1/2 + \gamma}\cdot \widetilde{O}(q\cdot n^{\varepsilon}),
    \end{align*}
    which completes the proof.
\end{proof}

\subsection{Rectifying Simple Dynamic Spanners}
\label{sec:appendix:proofs:rectifying}
\chargingJobs*
\begin{proof}
For $u\in U$, define \begin{align*}r_1(u) &= \begin{cases}
    |U_i|, \text{ s.t. $u$ appears in $S_i$ for the first time}\\
    +\infty, \text{ if $u$ does not appear in any $S_i$}
\end{cases}\\ \text{and similarly,}&\\
r_2(u) &= \begin{cases}
    |U_i|, \text{ s.t. $u$ appears in $S_i$ for the second time}\\
    +\infty, \text{ if $u$ does not appear two times}
\end{cases}\\
\text{and also}&\\
r(u) &= \min(r_1(u), r_2(u))
\end{align*}

Then 
\begin{align*}
    \sum\limits_{i=1}^n\frac{|S_i|}{|U_i|} = \sum\limits_{i = 1}^m\sum\limits_{u \in S_i}\frac{1}{|U_i|} = \sum\limits_{u \in U}\frac{1}{r_1(u)} + \frac{1}{r_2(u)}\leq 2\sum\limits_{u \in U}\frac{1}{r(u)}
\end{align*}

We claim that for a given $k$ there are at most $k$ different $u$-s such that $r(u)\leq k$. To observe that, assume the contrary, namely, that there are $k + 1$ $u$-s, denote those $u_1, \ldots, u_{k+1}$, such that for all $i \in [k+1]$, $r(u_i) \leq k$. Define $f(u_i)$ to output the set in $\mathcal{U}$ of size $\leq k$ to which $u_i$ belongs and consider $u^\ast \in \{u_1, \ldots, u_{k+1}\}$ such that $U^\ast = f(u^\ast)$ has the maximal cardinality. Since all $U$-s in $\mathcal{U}$ can be ordered by inclusion, $U^\ast$ must be a superset for every $f(u_i)$. But $f(u_i)$ contains $u_i$, so $U^\ast$ must contain all $u_1, \ldots, u_{k+1}$ and hence be of cardinality at least $k + 1$, which is a contradiction. 

Therefore, 
\begin{align*}
    \sum\limits_{i=1}^n\frac{|S_i|}{|U_i|} \leq 2\sum\limits_{u \in U}\frac{1}{r(u)} \leq2\sum\limits_{k \in [n]}\frac{1}{k} = O(\log n).
\end{align*}

\end{proof}

\end{document}